\newcommand{\cfold}{\text{cfold}}
\newcommand{\pfold}{\text{pfold}}
\newtheorem{conjecture}{Conjecture}
\newtheorem{fact}{Fact}
\crefname{hypothesis}{Hypothesis}{Hypotheses}
\begin{document}

\title{\Large An Optimal Density Bound for Discretized Point Patrolling}

\author{Ahan Mishra\thanks{Department of Computer Science, Cornell University, Ithaca, NY 
(\email{abm247@cornell.edu}).}}
  
\date{}

\maketitle

\begin{abstract} 

The pinwheel problem is a real-time scheduling problem that asks, given $n$ tasks with periods $a_i \in \mathbb{N}$, whether it is possible to infinitely schedule the tasks, one per time unit, such that every task $i$ is scheduled in every interval of $a_i$ units. We study a corresponding version of this packing problem in the covering setting, stylized as the discretized point patrolling problem in the literature. Specifically, given $n$ tasks with periods $a_i$, the problem asks whether it is possible to assign each day to a task such that every task $i$ is scheduled at \textit{most} once every $a_i$ days. The density of an instance in either case is defined as the sum of the inverses of task periods. Recently, the long-standing $5/6$ density bound conjecture in the packing setting was resolved affirmatively. The resolution means any instance with density at least $5/6$ is schedulable. A corresponding conjecture was made in the covering setting and renewed multiple times in more recent work. We resolve this conjecture affirmatively by proving that every discretized point patrolling instance with density at least $\sum_{i = 0}^{\infty} 1/(2^i + 1) \approx 1.264$ is schedulable. This significantly improves upon the current best-known density bound of 1.546 and is, in fact, optimal. We also study the bamboo garden trimming problem, an optimization variant of the pinwheel problem. Specifically, given $n$ growth rates with values $h_i \in \mathbb{N}$, the objective is to minimize the maximum height of a bamboo garden with the corresponding growth rates, where we are allowed to trim one bamboo tree to height zero per time step. We achieve an efficient $9/7$-approximation algorithm for this problem, improving on the current best known approximation factor of $4/3$. 

\end{abstract}
\section{Introduction.}

Scheduling problems are a broad class of problems in computer science that involve assigning jobs to machines with specific time constraints. Applications range from industrial optimization to the entertainment industry to the functioning of operating systems \cite{applications}. 

The pinwheel problem is a well-studied periodic scheduling problem in which there is a single machine and a finite number of jobs that reappear immediately after being processed. For example, a vending machine may need to be restocked at least once every week, with the maintenance deadline depending on the last time it was restocked. Formalizing this, a pinwheel scheduling instance is a list of integers $(a_1, a_2, a_3, \cdots, a_n)$. The pinwheel decision problem asks whether a pinwheel scheduling instance has a valid schedule, in particular, a map $f: \mathbb{N} \rightarrow [n]$ corresponding to an assignment of days to jobs 1 through $n$ such that for all $i \in [n]$, every sequence of $a_i$ consecutive days has job $i$ assigned at least once.

The discretized point patrolling problem asks, for an instance $(a_1, a_2, a_3, \dots, a_n)$, whether a valid schedule exists, in particular, a map $f: \mathbb{N} \rightarrow [n]$ corresponding to an assignment of days to jobs such that for all $i \in [n]$, every sequence of $a_i$ consecutive days has job $i$ assigned at \textit{most} once. We also refer to this as the pinwheel covering problem, distinguishing it from the ``traditional'' pinwheel problem by qualifying the latter as the pinwheel packing problem for the remainder of this paper. 

Consider the fraction of all the days each job takes up for a pinwheel packing instance. It is at least $\frac{1}{a_i}$ for job $i$. We can formulate a notion of density of an instance as the sum of these fractions (i.e., $\sum_{i = 1}^n \frac{1}{a_i}$), and we can clearly see that if the density is greater than 1, no valid pinwheel schedule exists. However, density being less than or equal to 1 is not sufficient for schedulability. For example, the instance $(2, 3, 6)$ has density $\frac{1}{2} + \frac{1}{3} + \frac{1}{6} = 1$ but there is no valid schedule. Intuitively, the tasks with periods two and the three take up all the spaces, leaving no room for the task with period six. This example can be generalized.

\begin{fact}\label{fact:23theorem}

$(2, 3, a_3)$ is not schedulable for any $a_3 \in \mathbb{N}$ \cite{kawamura}.

\end{fact}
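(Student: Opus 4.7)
The plan is to derive a direct contradiction from the period-2 and period-3 constraints alone, without any real dependence on $a_3$ beyond the fact that task 3 must be scheduled somewhere.

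First, I would pick any day $d$ at which task 3 is scheduled, chosen large enough that both $d-1$ and $d+1$ lie inside the domain of $f$; such a $d$ exists because task 3 must appear in every window of $a_3$ consecutive days and hence recurs infinitely often along the schedule. Next, I would apply the period-2 constraint to the two length-2 windows straddling $d$, namely $\{d-1, d\}$ and $\{d, d+1\}$. Since $f(d) = 3$, the mandatory task-1 slot in each of these windows must fall on the other day, forcing $f(d-1) = 1$ and $f(d+1) = 1$. Finally, I would apply the period-3 constraint to the length-3 window $\{d-1, d, d+1\}$: by the previous step, the three days in this window are assigned task 1, task 3, and task 1 respectively, so task 2 does not appear anywhere inside it, contradicting the requirement that task 2 appear in every three consecutive days.

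The only obstacle worth mentioning is the bookkeeping of boundary days, which is handled immediately by the observation that task 3 recurs infinitely often along an infinite schedule, so a day $d$ with both neighbors inside the domain of $f$ always exists. No case analysis on $a_3$ is needed, and the same argument works for every value of $a_3$ uniformly; in fact, it shows the stronger statement that task 3 cannot be scheduled on any non-boundary day whatsoever, which is more than enough to rule out any schedule in which task 3 occurs with bounded gap.
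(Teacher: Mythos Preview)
Your proposal is correct and follows essentially the same argument as the paper: pick a day where task~3 is scheduled, use the period-$2$ constraint on the two adjacent length-$2$ windows to force task~1 on both neighbors, and then observe that the resulting length-$3$ window violates the period-$3$ constraint. Your handling of the boundary issue (picking $d$ large enough via the infinite recurrence of task~3) is slightly more explicit than the paper's, which simply takes $i>1$, but the substance is identical.
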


\cref{fact:23theorem} is proven in Appendix \ref{fact:23proof}, and applying the fact, we know that for all $k \in \mathbb{N}$, there exists a pinwheel scheduling instance that is not schedulable with density $\frac{1}{2} + \frac{1}{3} + \frac{1}{i} = \frac{5}{6} + \frac{1}{i}$. 

For pinwheel covering, we again define $D(A) = \sum_{a_i \in A} \frac{1}{a_i}$, where now if the density is less than 1, an instance cannot be schedulable. There is a corresponding version of \cref{fact:23theorem} in the covering setting.

\begin{fact}\label{fact:2itheorem}

$(2, 3, 5, 9, \cdots, 2^k + 1)$ is not schedulable for any $k \in \mathbb{N}$ (Theorem 17 of \cite{kawamura2020}).

\end{fact}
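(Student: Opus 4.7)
The plan is to induct on $k$, exploiting the recurrence $a_{i+1} = 2a_i - 1$ satisfied by $a_i = 2^i + 1$. The base case $k=1$ is immediate: $(2,3)$ has density $5/6 < 1$ and thus cannot cover $\mathbb{N}$. For the inductive step I would show that any valid covering schedule $f:\mathbb{N}\to\{0,1,\ldots,k\}$ for $I_k=(a_0,\ldots,a_k)$ compresses to a valid covering schedule for $I_{k-1}=(a_0,\ldots,a_{k-1})$, which contradicts the inductive hypothesis.

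The compression simply deletes the task-$0$ days: let $p_1<p_2<\cdots$ enumerate $\{n\in\mathbb{N}:f(n)\neq 0\}$ (infinite since task $0$ has density at most $1/2$), and define $g(j)=f(p_j)-1\in\{0,1,\ldots,k-1\}$. To verify that task $i$ of $g$ respects period $a_i$, consider consecutive task-$i$ indices $j_1<j_2$ in $g$; the corresponding $p_{j_1},p_{j_2}$ are consecutive task-$(i+1)$ days in $f$, so $L := p_{j_2}-p_{j_1}\geq a_{i+1}$. The number of task-$0$ days strictly between them is at most $\lceil(L-1)/2\rceil$ by task $0$'s period-$2$ constraint, hence
\[
j_2-j_1 \;\geq\; L - \lceil (L-1)/2 \rceil \;=\; \lfloor (L+1)/2 \rfloor.
\]

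The key step — and what I expect to be the main obstacle — is the arithmetic $\lfloor(L+1)/2\rfloor\geq a_i$ for $L\geq a_{i+1}$. It is tight at $L=a_{i+1}$: we get $\lfloor(2^{i+1}+2)/2\rfloor=2^i+1=a_i$, which is precisely the recurrence $a_{i+1}=2a_i-1$. So there is no slack anywhere in the reduction, and this tightness is exactly what makes the family $(2^i+1)_i$ extremal. Once this bound is verified, $g$ is a valid schedule for $I_{k-1}$, contradicting the inductive hypothesis and completing the induction.
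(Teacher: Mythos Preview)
Your argument is correct: deleting the period-$2$ days and re-indexing does yield a valid covering schedule for $I_{k-1}$, and the floor computation $\lfloor(L+1)/2\rfloor \ge a_i$ at $L = a_{i+1} = 2a_i - 1$ is exactly what makes the reduction go through. The base case and the infinitude of $\{p_j\}$ are handled adequately.

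However, your route differs from the paper's. The paper strengthens the induction hypothesis to a \emph{finite} statement --- the $k$ jobs cannot cover even $2^k$ consecutive days --- and then peels off the \emph{largest}-period job rather than the smallest: in a window of $2^{k+1}$ days the job of period $2^{k+1}+1$ appears at most once, so one of the two halves of length $2^k$ is covered entirely by the remaining jobs, contradicting the hypothesis. This top-down argument is shorter, avoids the floor arithmetic, and yields the stronger finite-window conclusion for free. Your bottom-up compression, on the other hand, makes the role of the recurrence $a_{i+1} = 2a_i - 1$ completely explicit and shows directly why the family is extremal with no slack; it is also closer in spirit to the fold/partition operations used elsewhere in the paper. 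Both approaches are clean; the paper's is more economical, yours is more structural.
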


\cref{fact:2itheorem} is proven in \cref{fact:2iproof}, and applying the fact, we know that for all $i \in \mathbb{N}$, there exists a pinwheel covering instance that is not schedulable with density 
$$\sum_{n = 0}^{k} \frac{1}{2^n + 1} = \sum_{n = 0}^{\infty} \frac{1}{2^n + 1} - \sum_{n = k + 1}^{\infty} \frac{1}{2^n + 1} \geq -\sum_{n = k + 1}^{\infty} \frac{1}{2^n} + \sum_{n = 0}^{\infty} \frac{1}{2^n + 1} = -\frac{1}{2^k} + \sum_{n = 0}^{\infty} \frac{1}{2^n + 1}$$

Bamboo Garden Trimming (BGT) is a closely related optimization version of the pinwheel problem introduced by Gąsieniec \cite{bgtintro}. Specifically, we have a grove of $n$ bamboo plants, with growth rates $a_i \in \mathbb{N}$ (per day). We are allowed to trim one plant per day to height zero, and the objective value is the tallest plant in the grove across all time. A schedule achieves an objective of $H$ if and only if it is a valid schedule for the pinwheel packing instance $(\lfloor H/a_i \rfloor)_{i \in [n]}$ \cite{kawamura}.

In this paper, the phrase ``density bound'' refers to a density threshold that provides a schedulability guarantee. For the packing setting, a density bound of $d_p$ implies that every instance with density at most $d_p$ is schedulable. In the covering setting, a density bound of $d_c$ implies that every instance with density at least $d_c$ is schedulable. 

\subsection{Related Work.}

\cref{fact:23theorem} shows that a density bound of $\frac{5}{6}$ for the pinwheel packing problem would be optimal, and this was conjectured to be possible. There was significant work towards this goal, with bounds of $\frac{1}{2}$ \cite{05bound}, $\frac{2}{3}$ \cite{23bound}, $\frac{7}{10}$ \cite{710bound}, and $\frac{3}{4}$ \cite{34bound}. Finally, Kawamura $\cite{kawamura}$ proved the conjectured $\frac{5}{6}$ bound. One particularly relevant technique used in the $\frac{5}{6}$ proof is the concept of a fold operation, which enables us to reduce the problem to a finite number of instances and delegate the analysis to a computer \cite{towards}. We repurpose the folding idea (\cref{algo:fold}) to the covering setting and use the same folding operation of Kawamura \cite{kawamura} for bamboo garden trimming. 

\begin{algorithm}[ht]
  \DontPrintSemicolon
  \caption{Fold Operation $\pfold_{\theta}(A)$.}\label{algo:fold}
  Input: Pinwheel instance $A = (a_1, a_2, \dots, a_n)$ \;
  
  \While{$\max(A) > \theta$}{
    Let $a$ and $b$ be the largest and second largest values in $A$, allowing for repetition.
    
    \If{$b > \theta$}{    
    $A \gets A \ominus (a, b)$

    $A \gets A \sqcup (b/2)$
    }
    \Else{
    $A \gets A \ominus (a)$

    $A \gets A \sqcup (\theta)$
    }
    \Return $A$
}
\end{algorithm}

Note that we have used $A \sqcup (a)$ as the notation for adding a job of period $a$ to the instance $A$, and $A \ominus (a)$ for removing a job of period $a$ (with multiple periods in parenthesis indicating multiple additions or removals). The usefulness of \cref{algo:fold} relates to two properties of pinwheel packing instances, which can be used to show that $\pfold_{\theta}$ preserves instance unschedulability.

\begin{lemma}\label{lemma:packingmono}
If $A \sqcup (a)$ is schedulable, then $A \sqcup (b)$ is schedulable for any $b \geq a$ (Lemma 3 of \cite{kawamura}).
\end{lemma}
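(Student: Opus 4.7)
My plan is to show that the same schedule $f: \mathbb{N} \to [|A|+1]$ which witnesses schedulability of $A \sqcup (a)$ also witnesses schedulability of $A \sqcup (b)$ for every $b \geq a$. I identify the appended job with the label $|A|+1$ and keep the function $f$ literally unchanged when relabelling that job's period from $a$ to $b$.

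The verification splits into two parts. First, every job originally in $A$ retains its period, so its pinwheel-packing constraint is identical in the two instances, and $f$ already satisfies it. Second, the appended job needs to appear at least once in every window of $b$ consecutive days under the new instance; here I would use that $b \geq a$ implies every length-$b$ window $[t, t+b) \cap \mathbb{N}$ contains the sub-window $[t, t+a) \cap \mathbb{N}$, inside which $f$ is already guaranteed to schedule the appended job at least once. Hence the weaker constraint is trivially satisfied as well.

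There is no real obstacle: the lemma amounts to the monotonicity of the ``at least once per $a$ days'' constraint in $a$, so any schedule meeting the tighter deadline $a$ automatically meets the looser deadline $b$. The only thing worth being careful about is notational, namely distinguishing the label of the appended job (which is carried unchanged across both instances so that $f$ still makes sense) from its period (which is what actually changes). The lemma is worth isolating because it is precisely what licenses the fold operation of \cref{algo:fold} to replace jobs by smaller-period jobs while preserving unschedulability in the packing setting.
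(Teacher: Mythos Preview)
Your proposal is correct. The paper does not actually supply a proof of this packing lemma---it simply cites it as Lemma~3 of \cite{kawamura}---but your argument is the standard one and matches in spirit the one-line proof the paper gives for the covering analogue (\cref{lemma:mono}): keep the same schedule and observe that the appended job's constraint only weakens when its period increases.
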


\begin{lemma}\label{lemma:packingpart}
If $A \sqcup (a)$ is schedulable, then $A \sqcup (\underbrace{a \cdot q, a \cdot q, \dots, a \cdot q}_{q})$ is schedulable (Lemma 3 of \cite{kawamura}).
\end{lemma}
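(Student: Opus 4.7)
The plan is to give a direct, constructive proof: starting from a valid schedule for $A \sqcup (a)$, I would produce a valid schedule for $A \sqcup (\underbrace{aq, \dots, aq}_{q})$ by splitting the single task of period $a$ into $q$ round-robin copies of tasks of period $aq$.

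First, I would fix a valid schedule $f : \mathbb{N} \to [n+1]$ for $A \sqcup (a)$, indexing the task of period $a$ as task $n+1$, and let $t_1 < t_2 < t_3 < \cdots$ be the days on which $f$ schedules this task. Packing validity forces $t_1 \leq a$ and $t_{k+1} - t_k \leq a$ for every $k \geq 1$, since otherwise some window of $a$ consecutive days would fail to contain task $n+1$.

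Next, I would define a new schedule $g$ that agrees with $f$ on every day where $f$ assigns a job in $A$, and on day $t_k$ assigns the $\bigl((k - 1) \bmod q + 1\bigr)$-th of the $q$ new tasks. Jobs in $A$ are scheduled identically under $f$ and $g$, so they automatically satisfy their period constraints. For each $i \in \{1, \dots, q\}$, the $i$-th new task appears in $g$ exactly on the days $t_i, t_{i+q}, t_{i+2q}, \dots$. Its first appearance satisfies $t_i \leq i\cdot a \leq q a$, and the gap between any two consecutive appearances telescopes as
\[
t_{i + q(k+1)} - t_{i + qk} \;=\; \sum_{j=0}^{q-1}\bigl(t_{i + qk + j + 1} - t_{i + qk + j}\bigr) \;\leq\; q \cdot a,
\]
so every window of $qa$ consecutive days contains an occurrence of the $i$-th new task, as required for its period of $qa$.

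I do not anticipate a significant obstacle; the argument is essentially a telescoping inequality plus careful indexing. The only subtlety worth double-checking is the first window for each new copy, but the bound $t_i \leq i a \leq q a$ handles this cleanly, and the periodic nature of $\mathbb{N}$ ensures all subsequent windows are covered by the telescoping bound.
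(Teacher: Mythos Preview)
Your proposal is correct and follows the standard round-robin construction: the paper does not prove this lemma itself (it simply cites Kawamura), but your argument is exactly the natural one, and matches in spirit the paper's own proof of the analogous covering partitioning lemma (\cref{lemma:part}), which likewise cycles the $q$ new jobs through the slots of the original period-$a$ job.
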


\begin{lemma}

For all pinwheel packing instance $A$, if $A$ is unschedulable, then for all $\theta$, $\pfold_{\theta}(A)$ is unschedulable (Lemma 4 of \cite{kawamura}).

\end{lemma}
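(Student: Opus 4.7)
The plan is to prove the contrapositive by induction on the number of iterations of the while loop in \cref{algo:fold}. That is, I will show that if $\pfold_\theta(A)$ is schedulable, then $A$ itself is schedulable, by arguing that each individual loop iteration preserves schedulability backwards (equivalently, unschedulability forwards). The base case of zero iterations is immediate since then $\pfold_\theta(A) = A$.

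For the inductive step, consider a single iteration acting on an intermediate instance $A$ with $\max(A) > \theta$, yielding an updated instance $A'$. There are two cases corresponding to the two branches of the if-statement. In the first case, $b > \theta$, so $A' = A \ominus (a,b) \sqcup (b/2)$. Assuming $A'$ is schedulable, I would apply \cref{lemma:packingpart} with $q = 2$ to the job of period $b/2$: this replaces that job with two jobs of period $2 \cdot (b/2) = b$, yielding that $A \ominus (a,b) \sqcup (b,b)$ is schedulable. Since $a \geq b$, \cref{lemma:packingmono} then lets me raise one of the two $b$'s to $a$, giving that $A \ominus (a,b) \sqcup (a, b) = A$ is schedulable, as desired. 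In the second case, $b \leq \theta < a$, so $A' = A \ominus (a) \sqcup (\theta)$. Assuming $A'$ is schedulable, \cref{lemma:packingmono} immediately allows raising $\theta$ to $a$ (since $a > \theta$), showing that $A$ is schedulable.

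To wrap up, I would verify termination of the while loop so that the induction is well-founded: in the first case, two values strictly greater than $\theta$ are removed and at most one is added back, while in the second case, one value greater than $\theta$ is removed and replaced by $\theta$; either way, the number of entries exceeding $\theta$ strictly decreases, so the loop halts. Chaining the per-iteration implications gives that if $\pfold_\theta(A)$ is schedulable, so is $A$, which is precisely the contrapositive of the claim.

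The only mildly delicate step is the first case, where one has to notice that the fold replaces a pair $(a,b)$ by a \emph{smaller} period $b/2$, yet the reverse implication still works because \cref{lemma:packingpart} splits a short period into multiple longer copies, which \cref{lemma:packingmono} can then inflate to recover the original pair. Apart from this bookkeeping, the argument is a routine induction relying entirely on the two preservation lemmas already provided.
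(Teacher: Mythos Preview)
Your proposal is correct and follows essentially the same approach the paper uses. Note that the paper does not itself prove this packing lemma (it is cited from \cite{kawamura}), but your argument mirrors exactly the paper's proof of the covering analogue (\cref{claim3}): contrapositive, induction on loop iterations, and the two cases handled via \cref{lemma:packingpart} followed by \cref{lemma:packingmono} in the first branch and \cref{lemma:packingmono} alone in the second.
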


\cref{lemma:packingmono} is called the monotonicity property, and \cref{lemma:packingpart} is called the partitioning property. 

Just as $\frac{5}{6}$ is an optimal density bound for the packing setting, in the covering setting, due to \cref{fact:2itheorem}, if a density bound of $\sum_{i = 0}^{\infty} \frac{1}{2^i + 1}$ were true, it would be optimal. We refer to the conjecture that this density bound is indeed possible (i.e., every instance with density at least $\sum_{i = 0}^{\infty} \frac{1}{2^i + 1}$ is schedulable) as the $2^i + 1$ conjecture, and it was originally posed by Kawamura and Soejima (Conjecture 18 of \cite{kawamura2020}). This conjecture is particularly relevant due to the relatively recent resolution of the $\frac{5}{6}$ conjecture \cite{kawamura} after over 30 years. Currently, the best-known density bound is 1.546 \cite{kawamura2020}. 

The bamboo garden trimming problem \cite{bgtintro} can be considered a special type of cup game, which ties back to work done in the 1970s \cite{layland}. In particular, in the vanilla cup game, a filler repeatedly places $p$ units of water in some way across $n$ cups, and the emptier gets to choose $p$ cups to remove 1 unit of water from. Algorithms to achieve optimal backlog (water height) are known in the vanilla setting \cite{vanilla} as well as more complex settings \cite{variable, multip}. Bamboo Garden Trimming can be considered a form of cup game in the case of a constrained filler. There has been significant work on understanding guarantees for simple BGT strategies such as Reduce-Max (which trims the tree of the tallest height) and Reduce-Fastest (which trims the tree with the fastest growth rate among trees that have reached a certain height) \cite{priority, revisited}. Most work on the best approximation algorithms for bamboo garden trimming has been based on a reduction to pinwheel packing. There has been a series of improving approximation factors, from $2$ \cite{bgtintro}, to $\frac{32000}{16947}$ \cite{bgt188}, to $\frac{12}{7}$ \cite{bgt127}, to $1.6 + \epsilon$ \cite{gasieniec}, to $\frac{10}{7}$ \cite{bgt107}, and finally $\frac{4}{3}$ \cite{kawamura}. 

\vspace{5pt}

\subsection{Contributions.}

This paper presents two independent but related contributions. 

\begin{enumerate}
    \item This paper improves the state-of-the-art density bound for discretized point patrolling from 1.546 \cite{kawamura2020} to the optimal value of $\sum_{i = 0}^{\infty} \frac{1}{2^i + 1} \approx 1.264$. To do so, this paper extends the notions of monotonicity and partitioning to pinwheel covering and introduces two fold operations for pinwheel covering instances, which may be of independent interest. 
    \item This paper improves the state-of-the-art approximation factor among efficient algorithms for bamboo garden trimming from $\frac{4}{3} \approx 1.333$ \cite{kawamura} to $\frac{9}{7} \approx 1.286$. To do so, this paper leverages the existing framework of \cite{kawamura}, and we also utilize custom packing obstructions that arise due to both number-theoretic reasons and exhaustive search. For example, we prove that no instance which contains periods of 3 and 4 and density greater than $\frac{47}{48}$ can be scheduled. The parameter $\frac{47}{48}$ is usually replaced by the naive bound of 1 for general instances. To the best of our knowledge, our paper is the first to improve over the naive bound of 1 in an instance-dependent fashion. We believe that this technique can be combined with existing methods to produce significantly better bounds than $\frac{9}{7}$, although this may require a large amount of pre-computation time. 
\end{enumerate}

Several of our claims involve search over a finite number of cases, similar to that found in \cite{kawamura}. Supporting computational evidence for these claims is publicly available at: \url{https://github.com/ahanbm/optimal-dpp}

An independent and concurrent work by Kawamura and Kobayashi \cite{kawamura2025computerassistedproofoptimaldensity} matches our $\sum_{i = 0}^{\infty} \frac{1}{2^i + 1}$ density bound for pinwheel covering. 

\section{Technical Overview.}

In this section, we provide the high-level ideas and algorithms for our results in pinwheel covering and bamboo garden trimming.

\subsection{Pinwheel Covering.}\label{sec:pinover}

We start by setting up the basics for pinwheel covering. 

We define a fractional pinwheel covering problem, analogous to the fractional packing problem:

Consider a fractional instance $(a_1, a_2, a_3, \dots, a_n)$ with $a_i \in \mathbb{R^+}$ for all $i \in [n]$. We define a valid schedule as one in which days are assigned to jobs in a way so that for all $i \in [n]$ and for all $d \in \mathbb{N}$, every sequence of $d$ consecutive days has job $i$ scheduled at most $\lceil \frac{r}{a_i} \rceil$ times. We now formulate the analogue of the monotonicity and partitioning properties \cite{kawamura} in the covering setting.

\begin{lemma}\label{lemma:mono}
Monotonicity (Covering): If $A \sqcup (a)$ is schedulable, then $A \sqcup (b)$ is schedulable for any $b \leq a$. 
\end{lemma}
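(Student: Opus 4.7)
The plan is to show that the very same schedule which witnesses schedulability of $A \sqcup (a)$ already witnesses schedulability of $A \sqcup (b)$ when $b \le a$. In the covering setting the constraint for period $a_i$ is an upper bound on how often the task may appear in any window, and shrinking the period only loosens this upper bound, so no rearrangement should be needed.

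Concretely, I would let $f$ be a valid schedule for $A \sqcup (a)$ and reinterpret $f$ as a candidate schedule for $A \sqcup (b)$ by simply relabeling the task previously assigned period $a$ to have period $b$; the remaining tasks, with periods in $A$, keep the same assignment and hence automatically satisfy their (unchanged) covering constraints. It therefore suffices to verify the constraint for the single relabeled task.

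The key computation is a monotonicity of ceilings: since $b \le a$ and both are positive, we have $r/b \ge r/a$ for every $r \in \mathbb{N}$, hence $\lceil r/b \rceil \ge \lceil r/a \rceil$. Because $f$ is valid for $A \sqcup (a)$, the relabeled task appears at most $\lceil r/a \rceil$ times in every window of $r$ consecutive days, and this in turn is at most $\lceil r/b \rceil$, which is exactly the fractional covering constraint for period $b$. Thus $f$ is a valid schedule for $A \sqcup (b)$, proving the lemma.

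There is essentially no obstacle here: the lemma is a direct consequence of the fact that the covering constraint on a task is non-increasing in its period. The only thing one has to be careful about is using the correct fractional formulation (\emph{at most} $\lceil r/a_i \rceil$ occurrences in every $r$-window, for all $r$), so that the argument applies even when $a,b$ are not integers, which matches the fractional instances introduced just above the lemma statement.
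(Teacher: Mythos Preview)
Your proposal is correct and follows the same approach as the paper: reuse the valid schedule for $A \sqcup (a)$, relabeling the period-$a$ task to the period-$b$ task. The paper's proof is a one-line version of this; you add the explicit ceiling monotonicity check $\lceil r/b \rceil \ge \lceil r/a \rceil$ needed for the fractional formulation, which the paper leaves implicit.
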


\begin{proof}
Given a valid schedule for $A \sqcup (a)$, we can reassign the days where the job with period $a$ is scheduled to the job with period $b$.
\end{proof}

\begin{lemma}\label{lemma:part}
Partitioning (Covering): If $A \sqcup (a)$ is schedulable, then $A \sqcup (\underbrace{a \cdot q, a \cdot q, \dots, a \cdot q}_{q \text{ times}})$ is schedulable.
\end{lemma}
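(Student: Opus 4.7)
The plan is to take a valid schedule for $A \sqcup (a)$ and produce one for $A \sqcup (a\cdot q, \dots, a\cdot q)$ by a round-robin reassignment of the days originally served by the single period-$a$ job to the $q$ new period-$aq$ jobs. Concretely, let $d_1 < d_2 < d_3 < \cdots$ enumerate the days on which the period-$a$ job appears in the assumed schedule. I would keep every other day's assignment untouched and reassign $d_k$ to new job $((k-1)\bmod q) + 1$. Because the days used by any job originally in $A$ are not disturbed, those jobs automatically retain their covering guarantees, so only the new period-$aq$ jobs require checking.

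The key step is verifying each new job meets the period-$aq$ covering bound. Fix any new job $j \in \{1,\dots,q\}$; its appearances form the subsequence $d_j, d_{j+q}, d_{j+2q}, \dots$. Suppose a window of $d$ consecutive days contains $m$ of its appearances, at indices $j+kq, j+(k+1)q, \dots, j+(k+m-1)q$ of the $d_\bullet$ sequence. Since the $d_\ell$ are strictly increasing and bracketed by the first and last, every $d_\ell$ with $j+kq \leq \ell \leq j+(k+m-1)q$ also lies in the window, so the original period-$a$ job occupied at least $(m-1)q + 1$ days of that window. The fractional covering hypothesis on the period-$a$ job then gives $(m-1)q + 1 \leq \lceil d/a \rceil$, and combining with the elementary inequality $\lceil d/a \rceil \leq q\lceil d/(aq) \rceil$ (valid because $q\lceil d/(aq)\rceil$ is an integer at least $d/a$) yields $m \leq \lceil d/(aq) \rceil$, precisely the fractional covering constraint for period $aq$.

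The main obstacle is not the round-robin construction itself but the bookkeeping of the ceiling inequality that transfers the single period-$a$ window bound into one for period $aq$; this is where an off-by-one error could creep in if one is not careful about the $(m-1)q+1$ count (rather than the tempting but wrong $mq$). Note that in the integer-$a$ case the argument collapses to a one-liner: consecutive appearances of new job $j$ are $d_{k+q} - d_k = \sum_{i=k}^{k+q-1}(d_{i+1} - d_i) \geq q \cdot a = aq$ days apart, which is equivalent to at most one appearance per window of length $aq$. Thus the fractional ceiling manipulation is the only genuinely new ingredient beyond the intuitive splitting idea.
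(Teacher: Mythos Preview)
Your proposal is correct and uses exactly the same round-robin reassignment as the paper; your ``integer-$a$ one-liner'' is essentially the paper's entire proof. Your additional ceiling bookkeeping for the fractional setting is a welcome refinement, since the paper's phrase ``gap of at least $a$'' is only literally justified when $a$ is an integer.
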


\begin{proof}
Given a valid schedule for $A \sqcup (a)$, we can sequentially schedule the $q$ jobs of period $a \cdot q$ within the slots assigned to the job of period $a$, returning to the first job of period $a \cdot q$ after all are scheduled and repeating. Since there is at least a gap of $a$ between each scheduling of the original job of period $a$, there is a gap of at least $a \cdot q$ between repetitions of the same job of period $a \cdot q$ in the new instance.
\end{proof}

We now define a fold operation $\cfold_{\theta}(A)$, which is intended to take a pinwheel covering instance $A$ and transform it into an instance $A'$ such that no element of $A'$ is greater than $\theta \in \mathbb{N}$, and such that unschedulability is preserved. \cref{algo:naive} shows the major steps.

\vspace{5pt}

\begin{algorithm}[H]
\caption{Fold Operation $\cfold_{\theta}(A)$}\label{algo:naive}

Input: Pinwheel covering instance $A = (a_1, a_2, a_3, \dots, a_n)$

\While{\textnormal{the maximum value} $a \in A$ \textnormal{satisfies} $a > \theta$}{
    $A \gets A \ominus (a)$
    
    $b \gets \max(A)$

    $A \gets A \ominus (b)$

    \If{$a < 2b$}{
        $A \gets A \sqcup \left( \frac{a}{2} \right)$
    }

    \Else{
        $A \gets A \sqcup (b)$
    }
}
\Return{A}
\end{algorithm}

\vspace{5pt}

As in \cref{algo:fold}, we use $\ominus$ to represent removing a job and $\sqcup$ to represent adding one. \cref{algo:naive} can be regarded as the natural analogue of the fold operation for the packing setting introduced by Kawamura \cite{kawamura}. It can be shown that \cref{algo:naive} preserves unschedulability and decreases the density of $A$ by at most $\frac{2}{\theta}$. Similar to the fold operation in \cite{kawamura}, \cref{algo:naive} can be leveraged to prove a density bound of 1.3 for discretized point patrolling. Since \cref{algo:naive} preserves schedulability, we only need to consider instances with elements at most $\theta$ whose density is at least $1.3 - \frac{2}{\theta}$. Due to the fractional setting, the set of all possible instances with elements at most $\theta$ is infinite. We distill a finite set of instances by taking the ceiling of each element, and then leverage use computer analysis. A density bound of 1.3 is already a substantial improvement over the current state-of-the-art of 1.546, but proving the $2^i + 1$ conjecture requires more tools. 

We briefly stop to justify why the current fold operation is insufficient. The density loss for our fold operation is $\frac{2}{\theta}$ while the density loss for Kawamura's fold operation is $\frac{1}{\theta}$, so we might hope that the parameter 2 can be improved. However, it is tight, e.g., consider the instance $[\theta + 1, 2 \theta + 1, 4 \theta + 1, \dots, 2^n \cdot \theta + 1]$ for where $n$ is large and $\theta$ is a large power of 2. In order for the method described to work, we need to be able to schedule all instances of elements at most $\theta$ with density at least $-\frac{2}{\theta} + \sum_{i = 0}^{\infty} \frac{1}{2^i + 1}$. However, the instance $[2, 3, 5, \dots, 2^{\lceil \log_2{\theta} \rceil - 1} + 1]$ is within the desired density bound, and we have already shown it is not schedulable in \cref{fact:2itheorem}. 

\begin{fact}\label{fact:simtheorem}
    Let $A(\theta) = [2, 3, 5, \dots, 2^{\lceil \log_2{\theta} \rceil - 1} + 1]$. For any $\theta$, $D(A(\theta)) \geq -\frac{2}{\theta} + \sum_{i = 0}^{\infty} \frac{1}{2^i + 1}$.
\end{fact}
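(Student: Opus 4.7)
The plan is to reduce directly to the tail-of-geometric-series estimate already performed in the discussion of \cref{fact:2itheorem}. First, I would set $k = \lceil \log_2 \theta \rceil - 1$ so that $A(\theta) = (2^0+1, 2^1+1, \dots, 2^k+1)$ and hence $D(A(\theta)) = \sum_{i=0}^{k} \frac{1}{2^i+1}$. Thus the claim reduces to the tail bound
\[
\sum_{i=k+1}^{\infty} \frac{1}{2^i + 1} \;\leq\; \frac{2}{\theta}.
\]

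Next, I would control the tail by the corresponding geometric series, exactly as in the derivation just below \cref{fact:2itheorem}: since $2^i + 1 > 2^i$, we have $\sum_{i=k+1}^{\infty} \frac{1}{2^i+1} \leq \sum_{i=k+1}^{\infty} \frac{1}{2^i} = \frac{1}{2^k}$. Equivalently, this yields $D(A(\theta)) \geq \sum_{i=0}^{\infty} \frac{1}{2^i+1} - \frac{1}{2^k}$, reusing the inequality already proved in the paper.

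Finally, I would convert the factor $\frac{1}{2^k}$ into $\frac{2}{\theta}$ using the definition of the ceiling. Because $k + 1 = \lceil \log_2 \theta \rceil \geq \log_2 \theta$, we get $2^{k+1} \geq \theta$, i.e. $\frac{1}{2^k} \leq \frac{2}{\theta}$. Chaining this with the tail bound finishes the proof.

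There is essentially no obstacle here: the statement is a two-line calculation once one recognizes that $D(A(\theta))$ is a prefix of the convergent series $\sum_{i \geq 0} \frac{1}{2^i+1}$ and that the truncation index $k$ is at least $\log_2 \theta - 1$. The only thing to be careful about is the off-by-one in the ceiling — making sure that $\lceil \log_2 \theta \rceil - 1$ really does satisfy $2^{k+1} \geq \theta$ — but this is immediate from $\lceil x \rceil \geq x$.
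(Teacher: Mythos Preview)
Your proposal is correct and follows essentially the same approach as the paper's own proof: write $D(A(\theta))$ as a prefix of $\sum_{i\ge 0} 1/(2^i+1)$, bound the tail by the geometric series $\sum_{i\ge k+1} 1/2^i = 1/2^k$, and then use $\lceil \log_2 \theta\rceil \ge \log_2\theta$ to convert $1/2^k$ into $2/\theta$. The only cosmetic difference is the choice of index variable ($k$ versus the paper's $n = k+1$).
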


\cref{fact:simtheorem} is proved in \cref{fact:simproof}. This reasoning demonstrates that \cref{algo:fold} is insufficient to prove an optimal bound directly, so we develop an improved fold algorithm. 

\vspace{5pt}

\begin{algorithm}[H]
\caption{Improved Fold Operation $\cfold^{\text{imp}}_{\theta}(A)$}\label{algo:imp}

Input: Pinwheel covering instance $A = (a_1, a_2, a_3, \dots, a_n)$ and $\theta = 2^i$ for $i \in \mathbb{Z}_{>1}$

$A_{\text{max}} \gets \max(A)$

$\theta_{\text{current}} \gets 2^{\lceil \log_2(A_{\text{max}}) \rceil - 1}$

\While{$\theta_{\text{current}} \geq \theta$}{
    Let $n$ be the number of elements of $A$ in the range $(\theta_{\text{current}}, 2\theta_{\text{current}}]$

    \If{\textnormal{$n$ is odd and $n \ge 3$}}{
        Let $m_1 \leq m_2 \leq m_3$ be the three smallest elements of $A$ in the range $(\theta_{\text{current}}, 2\theta_{\text{current}}]$

        \If{$m_3 \le \frac{4}{3}\theta_{\text{current}}$}{
            $A \gets A \ominus (m_1, m_2, m_3)$

            $A \gets A \sqcup \left( \frac{m_3}{3} \right)$
        }
    }

    $A \gets \cfold_{\theta_{\text{current}}}(A)$

    $\theta_{\text{current}} \gets \theta_{\text{current}} / 2$
}
\Return{A}
\end{algorithm}

\cref{algo:imp} achieves a better guarantee than a density loss of $\frac{2}{\theta}$ in specific circumstances. In particular, it can be shown that for a single iteration, if there are at least two elements between $2^{i - 1}$ and $2^i$, the algorithm achieves a density loss bound of $\frac{3}{4 \theta}$. This bound is in contrast to the one iteration density loss bound for $\cfold_{\theta}$ of $\frac{1}{\theta}$. In addition, the improvement in density loss can be iterated under circumstances that we establish. 

This improvement enables us, with computer search, to determine that all pinwheel covering instances with density at least $\sum_{i = 0}^{\infty}\frac{1}{2^i + 1}$ are schedulable. The computer analysis is similar to that used for the 1.3 density bound but with more detailed casework. 

\subsection{Bamboo Garden Trimming.}\label{sec:bgtover}

We adopt the same reduction to pinwheel as \cite{kawamura2020}. Specifically, suppose we have an algorithm (efficient, in the sense of being a fast online scheduler \cite{kawamura}) $M$ that, for any pinwheel packing instance $(a_i)_{i \in [k]}$, either declares correctly that it is non-schedulable, or outputs a schedule for the ``relaxed” instance $(\lfloor \frac{9}{7} \cdot a_i \rfloor)_{i \in [k]}$. Then such an algorithm implies a $\frac{9}{7}$-approximation for BGT: 

Given a BGT instance $A = (a_i)_{i \in [k]}$, binary search for a height $H$ such that $M$ applied to $(\lfloor H/h_i \rfloor)_{i \in [k]}$ outputs a schedule but returns unschedulable if replaced by $H - 1$ (so the objective is no less than $H$). Since this schedule satisfies the pinwheel packing instance $(\lfloor \frac{9}{7} \cdot \lfloor H/h_i \rfloor \rfloor)_{i \in [k]}$, it satisfies $(\lfloor \frac{9}{7} \cdot H/h_i \rfloor)_{i \in [k]}$), and so it is achieves an objective of $\frac{9}{7} \cdot H$. We begin by providing an algorithm (\cref{algo:helper}) that will work in all but a few cases which can be handled recursively. 

\vspace{5pt}

\begin{algorithm}[H]
\DontPrintSemicolon
\caption{$M_{helper}$}\label{algo:helper}
Input: Pinwheel packing instance $A = (a_1, a_2, \dots, a_k)$ with $a_1 \leq a_2 \leq \dots \leq a_k$, and satisfying $(a_1) \neq (2)$, $(a_1, a_2) \neq (3, 3)$, and $(a_1, a_2, a_3, a_4) \neq (3, 6, 6, 6)$. $T_1$, $T_2$, and $T_3$ are pre-computed constant-sized lookup tables containing particular instances and their corresponding schedules.
  
\SetKwProg{Fn}{Function}{}{end}

\Fn{DB($A$)}{
    \If{$a_1 = 3$ \textnormal{and} $a_2 \in \{ 4, 5, 7 \}$}{
        \Return{$1 - \frac{1}{3a_2^2}$}\;
    }
    \ElseIf{$(a_1, a_2, a_3, a_4) = (3, 6, 6, 8)$}{
        \Return{$1 - \frac{1}{96}$}\;
    }
    \Else{
        \Return{$1$}\;
    }
}

\If{$D(A) > DB(A)$}{
\Return{\textnormal{``unschedulable''}}
}
  
\ElseIf{$a_1 = 3$}{
$A \gets \pfold_{18}(A)$

$A \gets (\lfloor \frac{9}{7} \cdot A_1 \rfloor, \lfloor \frac{9}{7} \cdot A_2 \rfloor, \dots, \lfloor \frac{9}{7} \cdot A_{k'} \rfloor)$

\vspace{1pt}

\If{$A \in T_1$}{
    \Return{$T_1[A]$}
}

\Else{
    $A \gets \pfold_{28}(A)$
    
    $A \gets (\lfloor \frac{9}{7} \cdot A_1 \rfloor, \lfloor \frac{9}{7} \cdot A_2 \rfloor, \dots, \lfloor \frac{9}{7} \cdot A_{k'} \rfloor)$
    
    \Return{$T_2[A]$} 
    }
}
\Else{
    $A \gets \pfold_{14}(A)$
    
    $A \gets (\lfloor \frac{9}{7} \cdot A_1 \rfloor, \lfloor \frac{9}{7} \cdot A_2 \rfloor, \dots, \lfloor \frac{9}{7} \cdot A_{k'} \rfloor)$
    
    \Return{$T_3[A]$}
}
\end{algorithm}

\vspace{5pt}

Note that $M_{helper}$ (\cref{algo:helper}) does not explicitly output the bamboo tree to trim on each day due to the intermediate use of the fold operation. However, the output can be regarded as an efficient representation of the schedule, as needed for a fast online scheduler \cite{kawamura}. We now have the tools to set $M$ (\cref{algo:bgt}). The recursive handling of $M$ can be considered a generalization of a technique used by \cite{kawamura} where it was only applied to job periods of 2. 

\begin{algorithm}[H]
\DontPrintSemicolon
\caption{$M$, a $\frac{9}{7}$-approximation for BGT.}\label{algo:bgt}
Input: Pinwheel instance $A = (a_1, a_2, \dots, a_k)$ with $a_1 \leq a_2 \leq \dots \leq a_k$.

  \If{$k = 1$}{
  \Return{\textnormal{the schedule that repeats task 1 every day}}.
  }

  \If{$a_1 = 2$}{    
  $A' \gets (\lfloor \frac{1}{2} \cdot a_i \rfloor)_{i \in [k] \setminus \{ 1 \}}$
  
  $S' \gets M \left( A' \right)$

  \If{$S' = \textnormal{unschedulable}$}{
  \Return{\textnormal{``unschedulable''}}
  }
  
  \Return{\textnormal{a schedule $S: \mathbb{Z} \rightarrow [k]$ defined by}
  $$S(t) = 
  \begin{cases}
      S'(t/2) + 1 & \text{for } t \equiv 0 \pmod{2} \\
      1 & \text{for } t \equiv 1 \pmod{2}      
  \end{cases}
  $$}
  }
  \ElseIf{$a_1 = 3$ \textnormal{and} $a_2 = 3$}{
  $A' \gets (\lfloor \frac{1}{3} \cdot a_i \rfloor)_{i \in [k] \setminus \{ 1, 2 \}}$
  
  $S' \gets M \left( A' \right)$

  \If{$S' = \textnormal{unschedulable}$}{
  \Return{\textnormal{``unschedulable''}}
  }
  
  \Return{\textnormal{a schedule $S: \mathbb{Z} \rightarrow [k]$ defined by}
  $$S(t) = 
  \begin{cases}
      S'(t/3) + 2 & \text{for } t \equiv 0 \pmod{3} \\
      1 & \text{for } t \equiv 1 \pmod{3} \\
      2 & \text{for } t \equiv 2 \pmod{3}
  \end{cases}
  $$}
  }
  \ElseIf{$a_1 = 3$, $a_2 = 6$, $a_3 = 6$ \textnormal{and} $a_4 = 6$}{
  $A' \gets (\lfloor \frac{1}{6} \cdot a_i \rfloor)_{i \in [k] \setminus \{ 1, 2, 3, 4 \}}$
  
  $S' \gets M \left( A' \right)$

  \If{$S' = \textnormal{unschedulable}$}{
  \Return{\textnormal{``unschedulable''}}
  }
  
  \Return{\textnormal{a schedule $S: \mathbb{Z} \rightarrow [k]$ defined by}
  $$S(t) = 
  \begin{cases}
      S'(t/6) + 4 & \text{for } t \equiv 0 \pmod{6} \\
      1 & \text{for } t \equiv 1, 4 \pmod{6} \\
      2 & \text{for } t \equiv 2 \pmod{6} \\
      3 & \text{for } t \equiv 3 \pmod{6} \\
      4 & \text{for } t \equiv 5 \pmod{6}
  \end{cases}
  $$}
  }
  \Else{
  \Return{$M_{helper}(A)$}
  }
\end{algorithm}

\section{An Improved Density Bound for Pinwheel Covering.}\label{sec:13}

In this section, we prove a 1.3 density bound for pinwheel covering. We rely on \cref{algo:naive} along with a new program for pinwheel covering adapting some computer search ideas from the packing setting \cite{towards}. We start by proving properties of \cref{algo:naive}.

\begin{lemma}\label{claim2}

Every element of $A' = \cfold_{\theta}(A)$ is at most $\theta$.

\end{lemma}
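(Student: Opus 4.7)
The plan is to reduce the claim to a termination argument for the while loop in \cref{algo:naive}. Once the loop exits, the guard $\max(A) > \theta$ must fail, so every element of the returned $A'$ is at most $\theta$. Thus the only nontrivial task is showing that \cref{algo:naive} actually terminates on every input.

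To prove termination, I would introduce the potential function $\phi(A) := |\{x \in A : x > \theta\}|$, a nonnegative integer, and show that each pass through the while loop strictly decreases $\phi$ by at least one. The case split follows the branches of the algorithm. Let $a = \max(A) > \theta$ and let $b$ be the next-largest element, so $b \le a$.

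In the branch $a < 2b$, the algorithm removes $a$ and $b$ and inserts $a/2$. If $b \le \theta$, then $a < 2b \le 2\theta$, so the inserted element satisfies $a/2 < \theta$, and $\phi$ loses the contribution of $a$ and gains nothing. If instead $b > \theta$, then we remove two elements counted by $\phi$ and insert at most one, so $\phi$ drops by at least one. In the other branch, $a \ge 2b$, the algorithm removes $a$ and $b$ and inserts $b$ back, so the net effect is simply to delete $a > \theta$, decreasing $\phi$ by exactly one.

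The main thing to be careful about is the corner case $b \le \theta$ inside the first branch: it is essential to use the inequality $a < 2b$ to conclude $a/2 < \theta$, since without that bound the inserted element might itself exceed $\theta$ and $\phi$ would not drop. Once that case is handled, termination is immediate because $\phi$ is a nonnegative integer that strictly decreases, so the loop exits after at most $\phi(A)$ iterations, at which point every element of the returned instance is at most $\theta$ as claimed.
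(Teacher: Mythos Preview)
Your argument is correct: the loop guard gives the post-condition immediately, and your potential $\phi(A)=|\{x\in A:x>\theta\}|$ does strictly decrease in every branch, with the case $b\le\theta$ handled correctly via $a<2b\le 2\theta$.

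The paper follows the same outline but chooses a simpler potential: the total length $|A|$. Each pass of the loop removes two elements ($a$ and $b$) and inserts exactly one (either $a/2$ or $b$), so $|A|$ drops by exactly $1$ regardless of which branch is taken, and the loop terminates after at most $n$ iterations. This avoids the case split entirely. Your potential $\phi$ is more informative (it even gives the sharper iteration bound $\phi(A_0)$ rather than $n$), but at the cost of the extra case analysis; for the bare statement of \cref{claim2}, the paper's cruder potential is all that is needed.
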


\begin{proof}
We can clearly see from the while loop condition in \cref{algo:naive} that upon termination, the maximum value of the returned $A$ must be no greater than $\theta$. Furthermore, the algorithm terminates because, at every iteration of the while loop, the length of $A$ decreases by 1; thus, there are at most $n$ iterations.
\end{proof}

\begin{lemma}\label{claim3}
If $A$ is unschedulable, then $\cfold_\theta(A)$ is unschedulable.
\end{lemma}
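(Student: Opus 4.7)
My plan is to prove the contrapositive: if $\cfold_\theta(A)$ is schedulable, then $A$ is schedulable. Because \cref{algo:naive} consists of a finite number of iterations of its while loop, I would reduce to the one-step statement — if $A^{(1)}$ denotes the multiset obtained from $A$ after a single iteration, then $A^{(1)}$ schedulable implies $A$ schedulable — and then induct on the number of iterations (with the zero-iteration case immediate because $A^{(1)} = A$).

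For the one-step statement, let $a$ and $b$ be, respectively, the largest and second-largest elements of $A$ at the start of the iteration, so that $a > \theta$ and $b \le a$. The algorithm branches on whether $a < 2b$ or $a \ge 2b$. In the first branch, $A^{(1)} = (A \ominus (a, b)) \sqcup (a/2)$. I would apply the partitioning lemma (\cref{lemma:part}) with $q = 2$ to the newly inserted element $a/2$, which converts it into two copies of $a$, showing that $(A \ominus (a, b)) \sqcup (a, a)$ is schedulable. Then I would apply monotonicity (\cref{lemma:mono}) to one of the two copies of $a$ to replace it with $b$ (legal because $b \le a$), yielding a valid schedule for $(A \ominus (a, b)) \sqcup (a, b) = A$.

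In the second branch, the algorithm removes $b$ and immediately re-adds it, so the net effect is $A^{(1)} = A \ominus (a)$. Here I would simply take any schedule for $A^{(1)}$ and reinterpret it as a schedule for $A$ in which the task of period $a$ is never scheduled. This is valid in the covering setting because the ``at most once every $a$ days'' constraint is vacuously satisfied when the task never appears, while every other task's constraints are unchanged.

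The main obstacle I anticipate is not conceptual but multiset bookkeeping: I need to verify carefully that the composition (partition $a/2$ into two $a$'s, then monotonically replace one $a$ by $b$) reconstructs $A$ exactly, as opposed to a nearby multiset differing by a stray copy. A secondary issue is the degenerate case where $A$ has only a single element above $\theta$, so that $b$ is undefined after removing $a$; this should be resolved by a convention that the algorithm simply drops such an element, matching the $a \ge 2b$ branch's reasoning.
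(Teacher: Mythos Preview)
Your proposal is correct and is essentially the paper's proof: both argue the contrapositive by induction on iterations, handling the $a<2b$ branch via partitioning ($a/2 \to (a,a)$) followed by monotonicity ($a \to b$), and the $a\ge 2b$ branch by observing the removed task can simply go unscheduled (the paper phrases this as monotonicity with an $\infty$-period job). Your flagged edge case of a singleton $A$ is a genuine specification gap in \cref{algo:naive} that the paper also does not address.
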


\begin{proof}
We prove the contrapositive: if $\cfold_\theta(A)$ is schedulable, then $A$ is schedulable. 

We prove the claim by induction on the number of remaining while loop iterations. For the base case, when there are zero iterations left, the returned value is assumed to be schedulable. If the instance $A_r$, the value of $A$ with $r$ remaining while loop iterations, is schedulable, then we need to prove $A_{r + 1}$ is also schedulable. There are two cases: either the $a$ and $b$ (highest and second-highest values) in $A_{r + 1}$ were replaced by $\left( \frac{a}{2} \right)$ (in the case $a < 2b$) or by $(b)$. In the case of $(b)$, $A_{r + 1}$ is the same as $A_r$ but with an extra job, so $A_{r + 1}$ is schedulable according to the monotonicity property (we can treat $A_r$ as having an $\infty$ period job). 

In the case where $a < 2b$ and the $\left( \frac{a}{2} \right)$ job in $A_r$ arose from the $a$ and $b$ jobs in $A_{r + 1}$, we know that schedulability of $(A_r \ominus \left( \frac{a}{2} \right)) \sqcup \left( \frac{a}{2} \right)$ implies schedulability of $(A_r \ominus \left( \frac{a}{2} \right)) \sqcup \left( a, a \right)$ (by the partitioning property) and this in turn implies schedulability of $(A_r \ominus \left( \frac{a}{2} \right)) \sqcup \left( a, b \right) = A_{r + 1}$ (by the montonicity property). Since we covered both possible cases, this completes the inductive step. Inducting till $A_{n'}$ where $n'$ is the total number of while loop iterations to evaluate $\cfold_\theta(A)$, we have that $A$ is schedulable, as desired.
\end{proof}

\begin{lemma}\label{claim4}
$D(\textnormal{cfold}_\theta(A)) \geq D(A) - \frac{2}{\theta}$
\end{lemma}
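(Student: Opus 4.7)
The plan is to combine a tight per-iteration bound with a telescoping sum over the running maxima. Denote the sequence of instances by $A = A_0, A_1, \ldots, A_n = \cfold_\theta(A)$ and let $M_i := \max(A_i)$; by the while-loop condition we have $M_{i-1} > \theta$ for every executed iteration and $M_n \leq \theta$ at termination.

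The first step is the per-iteration bound $D(A_{i-1}) - D(A_i) \leq 1/M_i - 1/M_{i-1}$, proved by casework on the algorithm's branch. In Case 1 (where $a < 2b$), the loss is $1/b - 1/a$ and $M_i = \max(a/2, c)$ where $c$ is the third-largest element of $A_{i-1}$; the bound follows from $b \geq c$ when $M_i = c$ and from $b > a/2$ when $M_i = a/2$ (since then $1/b < 2/a = 1/M_i$). In Case 2 (where $a \geq 2b$), the loss is $1/a$ and $M_i = b \leq a/2$, so $1/M_i - 1/M_{i-1} = 1/b - 1/a \geq 1/a$.

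Next I would telescope this bound over iterations $1, \ldots, n-1$ to obtain partial loss at most $1/M_{n-1} - 1/M_0$, and then handle the last iteration separately. If iteration $n$ is Case 2, I would use the exact loss $1/M_{n-1}$ rather than the loose per-iteration bound, giving total loss at most $2/M_{n-1} - 1/M_0 < 2/\theta$ because $M_{n-1} > \theta$. If iteration $n$ is Case 1, the per-iteration bound already suffices: the telescoped sum becomes $1/M_n - 1/M_0$, and because Case 1 forces $M_n \geq M_{n-1}/2 > \theta/2$, this is strictly less than $2/\theta$.

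The main obstacle is the final iteration. Uniformly applying the per-iteration bound telescopes to $1/M_n - 1/M_0$, but this can easily exceed $2/\theta$: in a Case 2 last step, $M_n = b$ can be arbitrarily small (any original small element of $A$ could play the role of $b$), making $1/M_n$ unboundedly large. Replacing the loose telescoped contribution with the sharper exact bound $1/M_{n-1} < 1/\theta$ for the final Case 2 loss precisely compensates for this slack and is what lets the overall bound $2/\theta$ go through.
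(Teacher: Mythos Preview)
Your proof is correct and follows essentially the same approach as the paper: a per-iteration density-loss bound that telescopes over the running maxima, with special handling of the final iteration. The only cosmetic difference is bookkeeping---you track $M_i = \max(A_i)$ and bound each step by $1/M_i - 1/M_{i-1}$ (so the telescoping is automatic), whereas the paper tracks the pair $(a_i, b_i)$ of top two elements and bounds each step by $1/b_i - 1/a_i$, then invokes $a_{i+1} \le b_i$ to telescope; likewise, the paper handles the final iteration uniformly via the observation that the loss is at most $1/a_x \le 1/\theta$ in either branch, while you split the last step into the two branches and bound each separately.
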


\begin{proof}
We split the density loss analysis into two parts: the last while loop iteration and the rest. For any iteration, we claim that the density loss is at most $\frac{1}{b} - \frac{1}{a}$, where $a$ and $b$ are the maximum and second-maximum values before that iteration. In particular, if $a$ is deleted, then $a \geq 2b$ so $b \leq \frac{a}{2}$ and the density loss is $\frac{1}{a} = \frac{2}{a} - \frac{1}{a} \leq \frac{1}{b} - \frac{1}{a}$. If $(a, b)$ is replaced with $\left( \frac{a}{2} \right)$ then the density loss is $\frac{1}{a} + \frac{1}{b} - \frac{2}{a} = \frac{1}{b} - \frac{1}{a}$, with direct equality. The total density loss for all iterations but the last is

$$\frac{1}{b_1} - \frac{1}{a_1} + \frac{1}{b_2} - \frac{1}{a_2} + \frac{1}{b_3} - \frac{1}{a_3} + \dots + \frac{1}{b_{x - 1}} - \frac{1}{a_{x - 1}}$$

where we have said that $a_i$ and $b_i$ are the maximum and second-maximum values before the $i$th iteration, and there are $x$ total iterations. Furthermore, $b_x \geq a_{x + 1}$ for all $x$, since at each iteration $a$ either gets deleted or replaced with $\frac{a}{2}$ in the case that $\frac{a}{2} < b$. Therefore, the sum above can be upper-bounded by a telescoping sum:

$$\frac{1}{b_{x - 1}} - \frac{1}{a_{x - 1}} + \frac{1}{a_{x - 1}} - \frac{1}{a_{x - 2}} + \dots + \frac{1}{a_2} - \frac{1}{a_1} = \frac{1}{b_{x - 1}} - \frac{1}{a_1} \leq \frac{1}{b_{x - 1}}$$

Since there is another iteration and $a_x \leq b_{x - 1}$, we must have that $b_{x - 1} \geq \theta$ so the total density loss in all iterations but the last is at most $1/\theta$. For the last iteration, we can make a similar claim as before and say that the density loss is at most $1/a$ at any iteration, by similar logic as before. Either $a$ is deleted for exact loss of $\frac{1}{a}$ or the loss is $\frac{1}{a} + \frac{1}{b} - \frac{2}{a} = \frac{1}{b} - \frac{1}{a}$ and this is in the case $a < 2b$ so $\frac{1}{b} - \frac{1}{a} < \frac{2}{a} - \frac{1}{a} = \frac{1}{a}$. Therefore, the loss in the last iteration is at most $1/\theta$, for a total density loss bound of 
$$\frac{1}{\theta} + \frac{1}{\theta} = \frac{2}{\theta}$$

as desired.
\end{proof}

\begin{lemma}\label{claim5}
Any instance $A = (a_1, a_2, \dots, a_n)$ whose periods are integers at most 16 and satisfies $D'(A) \geq 1.3 - \frac{2}{16}$ is schedulable, where

\[
D'(A) = \sum_{i = 1}^n
\begin{cases} 
1/a_i & \text{if } a_i \leq 8, \\
1/(a_i - 1) & \text{if } a_1 > 8.
\end{cases}
\]
\end{lemma}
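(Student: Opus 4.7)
My plan is to reduce the claim to a finite computer verification, mirroring the computer-search strategy used for the $5/6$ packing density bound by Kawamura~\cite{kawamura} and the earlier work~\cite{towards}. The overarching idea is: either the instance has a job that can be removed while preserving $D'(A) \geq 1.175$ (in which case an inductive hypothesis handles it), or the instance is ``critical'' and lies in a finite, enumerable family.

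I would proceed by strong induction on $n = |A|$. For the inductive step, suppose $A$ satisfies $D'(A) \geq 1.175$ and there exists a job $j$ with $D'(A \ominus (j)) \geq 1.175$. Then by the inductive hypothesis $A \ominus (j)$ is schedulable, and we can extend this to a schedule for $A$ by reassigning every $a_j$-th day of the existing schedule to job $j$; this only reduces the frequency of the remaining jobs, so their covering constraints (at most once every $a_i$ days) are preserved. Hence $A$ is schedulable. So it suffices to handle the complementary ``critical'' case.

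In the critical case, removing any job drops $D'$ below $1.175$, so every job contributes strictly more than $D'(A) - 1.175$ to $D'$. Since the largest single-job contribution to $D'$ is $1$ (attained at $a_i = 1$), this forces $D'(A) < 2.175$. Since the smallest positive contribution over $a_i \in \{1, \ldots, 16\}$ is $1/15$ (attained at $a_i = 16$ via the $1/(a_i - 1)$ rule), the number of jobs satisfies $n \leq 15 \cdot D'(A) < 33$. Thus critical instances form a finite family of multisets over $\{1, \ldots, 16\}$ of bounded size, and tighter bookkeeping on the per-period multiplicities $n_k$ (using the criticality inequality together with the value $1/k$ or $1/(k-1)$) rapidly shrinks this search space further.

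For each critical instance, I would verify schedulability by computer search, for example by backtracking over a periodic schedule of length dividing $\mathrm{lcm}(1, \ldots, 16) = 720720$ with constraint propagation on the most-recently-scheduled gap of each job. I expect the main obstacle to be the computational scale: while finite, the combined enumeration of critical multisets and per-instance schedule search could be expensive, and one must be careful that the ``periodic schedule of bounded length'' target is truly sufficient for schedulability in the covering setting. Aggressive pruning using the criticality structure (strict per-period multiplicity bounds, early infeasibility detection from density overflow on short windows, and symmetry reduction among identical periods) should keep the search tractable, with the final verification delegated to the supplementary computational artifact cited in the paper.
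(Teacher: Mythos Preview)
Your proposal is correct and takes essentially the same approach as the paper: both reduce to a finite set of ``critical'' instances (those where removing an element would drop $D'$ below the threshold) and delegate to a computer search, with schedulability of the non-critical instances following because adding a job to a schedulable covering instance trivially preserves schedulability. Your inductive-step argument (reassigning every $a_j$-th day to job $j$) works but is more than needed---you could simply never schedule the extra job, since the covering constraint is ``at most once''; also note that your critical set coincides exactly with the paper's set $R$, since the maximum period carries the smallest $D'$-contribution, so ``removing the max drops $D'$ below'' is equivalent to ``removing any job drops $D'$ below.''
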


\begin{proof}
The set of instances of such $A$ is infinite, but we can consider the restricted subset $R$ that contains only those $A$ that would be underneath the density bound if another element were to be removed. Specifically, to convert a general $A$ into an element of $R$, we remove the maximum element of $A$ until another removal would cause $A_{removed}$ to go under the density bound of $1.3 - \frac{2}{16}$. By monotonicity, if every element of $R$ is schedulable, then \cref{claim5} is true. Since $R$ contains a finite number of elements, it suffices to compute a schedule for each possibility, which we have done.
\end{proof}

\begin{theorem}\label{claim6}

Every pinwheel covering instance with density at least 1.3 is schedulable. 

\end{theorem}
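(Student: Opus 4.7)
The plan is a short proof by contradiction that chains together the three preceding lemmas about $\cfold_{16}$ with \cref{claim5}. Suppose, toward contradiction, that $A$ is an unschedulable instance with $D(A) \geq 1.3$. Setting $\theta = 16$ and letting $A' = \cfold_{16}(A)$, I obtain from \cref{claim2}, \cref{claim3}, and \cref{claim4} an unschedulable instance $A'$ with every element at most $16$ and density $D(A') \geq D(A) - 2/16 \geq 1.3 - 1/8$. The goal is to exhibit a schedule for $A'$, contradicting its unschedulability.

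Directly handing $A'$ to \cref{claim5} is not quite legal, because the step $a \mapsto a/2$ in \cref{algo:naive} can introduce non-integer periods. The key structural observation, which I would prove by a straightforward induction on the while-loop iterations of \cref{algo:naive}, is that every non-integer element of the current instance is strictly greater than $8$. The only way a non-integer can be introduced is through the branch $A \gets A \sqcup (a/2)$, which executes only when $a = \max(A) > \theta = 16$, giving $a/2 > 8$; the other branch simply adds back a copy of an existing element $b$ and preserves the invariant trivially.

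Given this observation, I pass to the integer instance $A'' = (\lceil a_1' \rceil, \ldots, \lceil a_n' \rceil)$. Since $a_i'' \geq a_i'$ componentwise, \cref{lemma:mono} reduces schedulability of $A'$ to that of $A''$. It remains to verify $D'(A'') \geq 1.3 - 1/8$. A three-way case analysis on each $a_i'$ does this: if $a_i'$ is an integer at most $8$, its $D'$-contribution equals $1/a_i'$ unchanged; if $a_i'$ is an integer exceeding $8$, its contribution $1/(a_i' - 1)$ strictly exceeds $1/a_i'$; and if $a_i'$ is non-integer, then by the invariant $a_i' > 8$, so $\lceil a_i' \rceil - 1 < a_i'$ gives contribution $1/(\lceil a_i' \rceil - 1) > 1/a_i'$. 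Summing yields $D'(A'') \geq D(A') \geq 1.3 - 1/8$, so \cref{claim5} produces a schedule for $A''$, hence for $A'$, contradicting unschedulability.

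I expect the main obstacle to be the structural invariant in the second paragraph. If some non-integer could land at or below $8$, the ceiling step would lose density in a range where $D'$ offers no compensation, and a different fold parameter or a revised definition of $D'$ would be required. The saving grace is that $\cfold_{16}$ only halves values strictly greater than $16$, so the threshold $8$ in the definition of $D'$ is calibrated precisely to absorb the loss from rounding up.
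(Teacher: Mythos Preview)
Your proposal is correct and follows essentially the same approach as the paper: fold with $\theta=16$, round up to integers, compare $D'$ of the rounded instance to $D$ of the folded one, and invoke \cref{claim5} for the contradiction. The paper is slightly terser about the invariant that all non-integer periods produced by $\cfold_{16}$ lie strictly above $8$ (it simply asserts that elements at most $8$ are ``unchanged''), whereas you spell out the one-line induction; otherwise the arguments coincide.
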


\begin{proof}
Suppose, for the sake of contradiction, that there exists an unschedulable instance $A$ with $D(A) \geq 1.3$. Then, $\cfold_{16}(A)$ is also unschedulable, according to \cref{claim3}. In addition, $D(\cfold_{16}(A)) \geq D(A) - \frac{2}{16}$. Consider $A'$ which contains $a$ for all $a \in \cfold_{16}(A)$ such that $a \leq 8$, and additionally contains $\lceil a \rceil$ for all elements $a \in \cfold_{16}(A)$ such that $a > 8$, with appropriate multiplicity. Since $\cfold_{16}(A)$ is unschedulable, $A'$ is also unschedulable according to the monotonicity property. In addition, since the elements of $\cfold_{16}(A)$ that are at most 8 are unchanged, and the rest increase by at most 1, $D'(A') \geq D(\cfold_{16}(A))$, implying that $D'(A') \geq D(A) - \frac{2}{16}$ and since $D(A) \geq 1.3$, $D'(A') \geq 1.3 - \frac{2}{16}$. Furthermore, because $\cfold_{16}(A)$ produces elements that are at most 16 according to \cref{claim2}, $A'$ does not increase elements above 16, and every element of $A'$ is an integer, \cref{claim5} applies. In particular, $A'$ is schedulable, a contradiction. This proves a bound of 1.3 for pinwheel covering.
\end{proof}

\section{An Optimal Density Bound for Discretized Point Patrolling.}\label{sec:opt}

In this section, we establish an optimal density bound for discretized point patrolling. Specifically, we prove that any instance $L$ with density at least
$\sum_{i = 0}^{\infty} \frac{1}{2^i + 1}$ is schedulable.

\begin{lemma}\label{claim8}

Consider any instance $L$ such that $D(L) \geq \sum_{i = 0}^{\infty} \frac{1}{2^i + 1}$ and let $L \leq r$ be the set of jobs in $L$ restricted to those with period less than $r$. Let $W = [2, 3, 5, \dots, 2^i + 1, \dots]$. Note that $D(W) = \sum_{i = 0}^{\infty} \frac{1}{2^i + 1}$, so $D(L) \geq D(W)$. Let $S$ be the set of powers of 2, $p$, such that the following is true:
$$D(L \leq p) > D(W \leq p)$$

Then, $S$ is not empty.

\end{lemma}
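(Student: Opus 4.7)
The plan is to prove \cref{claim8} directly by exhibiting an explicit power of 2 that lies in $S$, rather than arguing by contradiction. The key observation is that $L$ is a \emph{finite} pinwheel covering instance, while $W$ is an infinite sequence with $D(W) = \sum_{i=0}^{\infty} 1/(2^i+1) < \infty$; any truncation of $W$ to finitely many terms therefore has strictly smaller density than $D(W)$.

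Concretely, let $M = \max_{a_i \in L} a_i$, which is finite, and choose $p$ to be any power of 2 with $p > M$. Then every job of $L$ has period strictly less than $p$, so $L \leq p = L$ and hence $D(L \leq p) = D(L)$. By hypothesis, $D(L) \geq D(W)$. On the other hand, $W \leq p$ contains only finitely many of the terms $1/(2^i+1)$, so
\[
D(W \leq p) = \sum_{\substack{i \geq 0 \\ 2^i+1 < p}} \frac{1}{2^i+1} \; < \; \sum_{i=0}^{\infty} \frac{1}{2^i+1} = D(W).
\]
Chaining these together gives $D(L \leq p) \geq D(W) > D(W \leq p)$, so this $p$ witnesses membership in $S$, and $S$ is non-empty.

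There is essentially no obstacle here: the argument is a one-line application of the fact that a finite sum equals the whole and a proper truncation of a convergent series of positive terms is strictly smaller. The only point that warrants explicit mention is to confirm the choice of $p$ as a power of 2 (which is always possible since powers of 2 are unbounded) and to verify the convention for the restriction notation $L \leq r$ so that picking $p$ strictly larger than $M$ indeed captures all of $L$. No use of the fold operations, monotonicity, or partitioning is required; the statement is purely about densities and serves as a setup for the subsequent casework in the optimal-bound proof.
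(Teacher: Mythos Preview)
Your proof is correct and follows essentially the same approach as the paper: both take $p$ to be a power of $2$ at least as large as the maximum period in $L$, so that $D(L\le p)=D(L)\ge D(W)>D(W\le p)$ since $W$ has unboundedly large elements. Your choice of $p$ strictly greater than $M$ is in fact slightly cleaner than the paper's $p=2^{\lceil \log_2 L_{\max}\rceil}$, as it sidesteps any ambiguity about whether the restriction $L\le r$ is strict or non-strict.
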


\begin{proof}
Let $L_{\text{max}}$ be the maximum element of $L$. Then, we claim that setting $p = 2^{\lceil \log_2(L_{\text{max}}) \rceil}$ satisfies the inequality. Note that $2^{\lceil \log_2(L_{\text{max}}) \rceil} \geq L_{\text{max}}$, so 
$$D(L \leq p) = D(L)$$

In addition, $D(W) > D(W \leq p)$ because $W$ contains elements above $p$ due to having unboundedly large elements. We have that $D(L) \geq D(W)$ from before, so putting everything together, we have 
$$D(L \leq p) = D(L) \geq D(W) > D(W \leq p)$$

so $D(L \leq p) > D(W \leq p)$. This is the exact criterion for being included in $S$, proving the claim.
\end{proof}

\begin{lemma}\label{claim9}

Consider any instance $L$ such that $D(L) \geq \sum_{i = 0}^{\infty} \frac{1}{2^i + 1}$. Since $S$, as defined in \cref{claim8}, is a non-empty set of integers, by the well-ordering principle, it has a minimum element, call it $E$. Let $N_i(A)$ be the number of elements of a pinwheel covering instance $A$ in the range $(i, 2i]$.

Then, $N_{E/2}(L) \geq 2$.

\end{lemma}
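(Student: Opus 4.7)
The plan is to exploit the minimality of $E$ in $S$ to turn the defining strict density inequality into a one-scale increment, and then use the integrality of the periods of $L$ to force at least two contributing elements.

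First, since $E$ is the smallest element of $S$ and $E/2$ is itself a power of $2$ (we may assume $E \ge 2$, as instances with period-$1$ jobs are trivially schedulable and can be handled separately), $E/2 \notin S$. By the definition of $S$ this means $D(L \le E/2) \le D(W \le E/2)$, while by the hypothesis $E \in S$ we have $D(L \le E) > D(W \le E)$. Subtracting the first from the second yields
\[
D(L \le E) - D(L \le E/2) \; > \; D(W \le E) - D(W \le E/2).
\]

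Next, I would evaluate the right-hand side explicitly using the structure of $W$. Writing $E = 2^j$, the elements of $W = [2,3,5,\ldots,2^i+1,\ldots]$ that lie in $(E/2, E] = (2^{j-1}, 2^j]$ are precisely the singleton $\{2^{j-1}+1\} = \{E/2 + 1\}$, so the right-hand side equals $\tfrac{1}{E/2 + 1}$. The left-hand side, by definition, equals $\sum_{a \in L,\, E/2 < a \le E} \tfrac{1}{a}$, a sum over exactly the $N_{E/2}(L)$ elements of $L$ counted by the lemma.

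Finally, since the periods of $L$ are integers and $E/2$ is an integer, every contributing $a$ satisfies $a \ge E/2 + 1$, so each term of the sum is at most $\tfrac{1}{E/2 + 1}$. Therefore
\[
\frac{N_{E/2}(L)}{E/2 + 1} \; \ge \; \sum_{a \in L,\, E/2 < a \le E} \frac{1}{a} \; > \; \frac{1}{E/2 + 1},
\]
which forces $N_{E/2}(L) > 1$, and hence $N_{E/2}(L) \ge 2$ since $N_{E/2}(L)$ is an integer.

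The main (mild) obstacle is bookkeeping: verifying that $E/2$ is itself a power of $2$ so that minimality of $E$ gives the required inequality at the previous dyadic scale, correctly identifying that $W$ contributes exactly one new element in every dyadic window, and using integrality of the periods to convert the density gap into an integer count. No case analysis or clever combinatorics is needed once these pieces are lined up.
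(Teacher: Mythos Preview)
Your proof is correct and follows essentially the same approach as the paper: use minimality of $E$ to obtain $D(L\le E)-D(L\le E/2)>D(W\le E)-D(W\le E/2)=\tfrac{1}{E/2+1}$, then use integrality of periods to bound each term of the left-hand sum by $\tfrac{1}{E/2+1}$ and conclude $N_{E/2}(L)\ge 2$. Your explicit handling of the $E=1$ edge case is a small clarification beyond the paper's presentation, which tacitly assumes $E\ge 2$ and treats $E=1$ separately only later in the main theorem.
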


\begin{proof}
Since $E$ is contained in $S$, it must be that 
$$D(L \leq E) > D(W \leq E)$$

In addition, since $E$ is the minimum element of $S$, it must be that $E/2$ is not an element of $S$, meaning
$$D(L \leq E/2) \leq D(W \leq E)$$

Since $D(L \leq E) - D(W \leq E) > 0$ and $D(L \leq E/2) - D(W \leq E/2) \leq 0$, we have
$$\left[ D(L \leq E) - D(W \leq E) \right] - \left[ D(L \leq E/2) - D(W \leq E/2) \right] > 0$$

Simplifying,
$$D(L \leq E) - D(L \leq E/2) > D(W \leq E) - D(W \leq E/2)$$

Note that since $L$ is composed of integers, the left-hand side is the density of elements in the range $(E/2, E]$. In addition, since $W$ is composed of $2^i + 1$ for every natural number $i$ and $E/2$ is a power of 2, $E/2 + 1 \in W$, and $D(W \leq E) - D(W \leq E/2) = \frac{1}{E/2 + 1}$. Therefore, 
$$D(L \leq E) - D(L \leq E/2) > \frac{1}{E/2 + 1}$$

Since the maximum possible density of a single integer in the range $(E/2, E]$ is $\frac{1}{E/2 + 1}$, having density strictly greater than $\frac{1}{E/2 + 1}$ in that range means that $L$ contains at least two elements in the range $(E/2, E]$. 
\end{proof}

We can now analyze the improved fold algorithm outlined in \cref{algo:imp}.

\begin{lemma}\label{claim10}

For any instance $A$, every element of $A' = \textnormal{cfold}_{\theta}^{\text{imp}}(A)$ is at most $\theta$.

\end{lemma}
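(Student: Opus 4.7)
The plan is to trace the algorithm through its iterations and argue that, when the loop terminates, every remaining element is already bounded by $\theta$. The key observations are that $\theta_{\text{current}}$ is always a power of 2, it is halved at the end of each iteration, and $\theta = 2^i$ is itself a power of 2; so if the loop is entered at all, then $\theta_{\text{current}}$ takes the value $\theta$ in exactly one iteration, and the loop exits immediately afterward.

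First I would dispose of the degenerate case in which the loop body never executes. This happens precisely when $\theta_{\text{current}} = 2^{\lceil \log_2(A_{\text{max}}) \rceil - 1} < \theta = 2^i$, which forces $\lceil \log_2(A_{\text{max}}) \rceil \le i$, i.e., $A_{\text{max}} \le 2^i = \theta$, so the returned $A$ trivially satisfies the claim. Next I would examine the final iteration, in which $\theta_{\text{current}} = \theta$. The conditional 3-element merge, if triggered, removes three values $m_1 \le m_2 \le m_3$ from $(\theta, 2\theta]$ and inserts $m_3/3$; its precondition $m_3 \le \tfrac{4}{3}\theta$ yields $m_3/3 \le \tfrac{4}{9}\theta < \theta$, so this step cannot introduce an element exceeding $\theta$. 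The subsequent call to $\cfold_{\theta}(A)$ then guarantees, by \cref{claim2}, that every element of $A$ is at most $\theta$ upon return. Since $\theta_{\text{current}}$ is then halved to $\theta/2 < \theta$ and the loop exits without further modifications to $A$, the returned instance has every element at most $\theta$.

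The main obstacle is nothing more than verifying that the loop bookkeeping lines up: that $\theta_{\text{current}}$ hits $\theta$ exactly, that the final $\cfold$ call is made with threshold exactly $\theta$, and that the merge-step insertion is safely below $\theta_{\text{current}}$. Once these are checked, the result follows immediately from \cref{claim2} applied in the final iteration; no substantive analysis of intermediate iterations is required, since any element introduced there is either clipped by the subsequent $\cfold$ call or (by the computation $m_3/3 \le \tfrac{4}{9}\theta_{\text{current}}$) already strictly below the current threshold.
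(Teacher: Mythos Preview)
Your proposal is correct and follows essentially the same route as the paper: identify the (last) iteration with $\theta_{\text{current}}=\theta$, invoke \cref{claim2} for the inner $\cfold_{\theta}$ call there, and observe that the loop then exits without further modification. Your explicit treatment of the degenerate case where the loop body never executes is a small addition the paper glosses over, and your bound $m_3/3\le \tfrac{4}{9}\theta$ on the merge-step insertion is harmless but redundant, since the subsequent $\cfold_{\theta}$ call would clip any oversized element anyway.
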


\begin{proof}
We can see that to get out of the while loop of line 4, $\theta_{\text{current}}$ must be less than $\theta$. In addition, since both are powers of two at all times, $\theta_{\text{current}}$ must be $\theta/2$ after the while loop exit, so there is a while loop iteration with $\theta_{\text{current}} = \theta$. Now, noting line 11 and \cref{claim2}, before going to line 12, every element of $A$ must be no greater than $\theta$. Line 11 always terminates because there are a finite number of elements greater than $\theta_{\text{current}}$ at each iteration, and at least one such element is removed per iteration of \cref{algo:naive}. After the final execution of line 11, the elements are not increased, so the returned instance has elements at most $\theta$.
\end{proof}

\begin{lemma}\label{claim11}

For any instance $A$, if $\textnormal{cfold}_{\theta}^{\text{imp}}(A)$ is schedulable, then $A$ is schedulable. 

\end{lemma}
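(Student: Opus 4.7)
The plan is to mimic the inductive argument used in \cref{claim3}, now applied to the richer sequence of transformations produced by $\cfold_{\theta}^{\text{imp}}$. Let $A_0, A_1, \ldots, A_T$ denote the successive values of the working instance during execution, with $A_0 = A$ and $A_T = \cfold_{\theta}^{\text{imp}}(A)$. I will show that if $A_{j+1}$ is schedulable then $A_j$ is schedulable, so that the hypothesis on $A_T$ propagates backward all the way to $A_0 = A$.

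Each transition $A_j \to A_{j+1}$ falls into one of two categories. The first is a single inner step of a $\cfold_{\theta_{\text{current}}}$ call (line 11 of \cref{algo:imp}); for such transitions, backward preservation of schedulability is exactly the content of \cref{claim3}, which relied on the covering-setting monotonicity and partitioning properties (\cref{lemma:mono} and \cref{lemma:part}). The second category is the new ``triple fold'' of lines 5--10, in which three elements $m_1 \leq m_2 \leq m_3$ from the window $(\theta_{\text{current}}, 2\theta_{\text{current}}]$ with $m_3 \leq \frac{4}{3}\theta_{\text{current}}$ are replaced by the single element $m_3/3$. Writing $B = A_j \ominus (m_1, m_2, m_3)$, my plan is to assume $B \sqcup (m_3/3)$ is schedulable and then apply \cref{lemma:part} with $q = 3$ to conclude that $B \sqcup (m_3, m_3, m_3)$ is schedulable, followed by two applications of \cref{lemma:mono} (using $m_1, m_2 \leq m_3$) to obtain schedulability of $B \sqcup (m_1, m_2, m_3) = A_j$.

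The main technical point to get right is the direction of the two invariants: unlike the packing case, covering monotonicity (\cref{lemma:mono}) weakens constraints by \emph{decreasing} periods, so $m_3$ can be lowered to $m_1$ or $m_2$ but not the reverse; partitioning (\cref{lemma:part}) must therefore be applied first to inflate $m_3/3$ into three copies of $m_3$ before the downward monotonicity replacements occur. Once this order is fixed, no other obstacle arises: the triple fold preserves backward schedulability regardless of the precise value of $m_3$ within its window (the condition $m_3 \leq \frac{4}{3}\theta_{\text{current}}$ will only be needed for the density-loss analysis in subsequent arguments, not for this claim). Concatenating the two cases via induction on $j$ from $T$ down to $0$ then yields the desired implication.
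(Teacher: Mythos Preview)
Your proposal is correct and follows essentially the same approach as the paper: both arguments handle the inner $\cfold_{\theta_{\text{current}}}$ calls via \cref{claim3} and handle the new thirding step by first applying partitioning (\cref{lemma:part}) with $q=3$ to turn $m_3/3$ into three copies of $m_3$, then monotonicity (\cref{lemma:mono}) to lower two of those copies to $m_1$ and $m_2$. The only cosmetic difference is that the paper inducts on outer while-loop iterations (treating each $\cfold$ call as a single step via \cref{claim3}) and phrases the thirding argument contrapositively as ``preserving unschedulability,'' whereas you induct on individual atomic steps and phrase everything in the schedulability-preserving direction; the logical content is identical.
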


\begin{proof}
We prove the claim by induction on the remaining iterations of the while loop in line 5 of \cref{algo:imp}. Let $A_i$ be the value of $A$ after the $i$th iteration of the while loop. If $n$ is the total number of iterations, we can see that $A_n = \text{cfold}_{\theta}^{\text{imp}}(A)$ so the claim is true in the base case. We want to prove that if $A_{t + 1}$ is schedulable, then $A_{t}$ is schedulable. Let $A_{(t + 1)b}$ be the value of $A$ after line 6 of the $(t + 1)$-th iteration of line 4 and let $A_{(t + 1)c}$ be the value of $A$ after line 11 of the $(t + 1)$-th iteration of line 4 (this is equivalent to $A_{t + 1}$).

According to \cref{claim3}, if $A_{(t + 1)c}$ is schedulable then $A_{(t + 1)b}$ is schedulable. To prove that $A_{t}$ is schedulable, we do casework based on whether the if condition of line 6 is true or false. If false, then $A_t = A_{(t + 1)b}$ so $A_t$ is schedulable. If true, then we can see schedulability by imagining the process in two steps, one in which $(m_1, m_2, m_3)$ is replaced by $(m_3, m_3, m_3)$ and another in which $(m_3, m_3, m_3)$ is replaced by $\left( \frac{m_3}{3} \right)$. The first step preserves unschedulability due to the monotonicity property (\cref{lemma:mono}), and the second step preserves unschedulability due to the partitioning property (\cref{lemma:part}). By the principle of mathematical induction, $A$ is schedulable. 
\end{proof}

\begin{lemma}\label{claim12}
Let us define $A_i$ as the value of $A$ after the $i$th iteration of line 4 of \cref{algo:imp} and $\theta_i$ as the value of $\theta_{\text{current}}$ during the $i$th iteration of the while loop on line 4 (so $\theta_1$ is the initial value). 

For any instance $A$ and any $i \geq 1$, if $N_{\theta_{i}}(A_{i - 1}) \geq 2$, then
$$D(A_{i}) \geq D(A_{i - 1}) - \frac{3}{4 \theta_{i}}$$

In addition, if $N_{\theta_i}(A_{i - 1})$ is even, then
$$D(A_{i}) \geq D(A_{i - 1}) - \frac{1}{2 \theta_{i}}$$
\end{lemma}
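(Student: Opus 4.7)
The plan is to analyze the net density loss of a single outer iteration of \cref{algo:imp} by case analysis on $n = N_{\theta_i}(A_{i-1})$. Let $c_1 \le c_2 \le \cdots \le c_n$ denote the elements of $A_{i-1}$ in the range $(\theta_i, 2\theta_i]$. An essential preliminary observation, proved by a short induction on $i$ using the definition of $\theta_1 = 2^{\lceil \log_2 A_{\max} \rceil - 1}$ and \cref{claim10} applied to the previous iteration (together with $\theta_{i-1} = 2\theta_i$), is that every element of $A_{i-1}$ is at most $2\theta_i$. Consequently, whenever the inner $\cfold_{\theta_i}$ (line 11) processes a pair $(a,b)$ with both in the range, we have $a \le 2\theta_i < 2b$, so the ``fold'' branch is always taken; thus $\cfold_{\theta_i}$ simply pairs up in-range elements from the top, each pair contributing loss $\tfrac{1}{b} - \tfrac{1}{a}$.

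I would then split into three cases and bound each total loss using telescoping in the spirit of \cref{claim4}. When $n$ is even, no preprocessing applies and $\cfold_{\theta_i}$ pairs $(c_{2k-1}, c_{2k})$; telescoping via $c_{2k} \le c_{2k+1}$ bounds the total by $\tfrac{1}{c_1} - \tfrac{1}{c_n} < \tfrac{1}{\theta_i} - \tfrac{1}{2\theta_i} = \tfrac{1}{2\theta_i}$, which establishes the second inequality of the lemma. When $n$ is odd with $c_3 \le \tfrac{4}{3}\theta_i$, the preprocessing cost equals $\tfrac{1}{c_1} + \tfrac{1}{c_2} - \tfrac{2}{c_3}$, and the subsequent $\cfold_{\theta_i}$ on the remaining even-sized collection $c_4, \ldots, c_n$ contributes at most $\tfrac{1}{c_4} - \tfrac{1}{c_n}$ by the even-case telescoping. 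Using $c_4 \ge c_3$ to collapse $-\tfrac{2}{c_3} + \tfrac{1}{c_4} \le -\tfrac{1}{c_3}$, the total reduces to $\tfrac{1}{c_1} + (\tfrac{1}{c_2} - \tfrac{1}{c_3}) - \tfrac{1}{c_n}$; since $c_2, c_3 \in (\theta_i, \tfrac{4}{3}\theta_i]$ gives $\tfrac{1}{c_2} - \tfrac{1}{c_3} < \tfrac{1}{\theta_i} - \tfrac{3}{4\theta_i} = \tfrac{1}{4\theta_i}$, the total stays strictly below $\tfrac{1}{\theta_i} + \tfrac{1}{4\theta_i} - \tfrac{1}{2\theta_i} = \tfrac{3}{4\theta_i}$.

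The trickiest case is $n$ odd with $c_3 > \tfrac{4}{3}\theta_i$, where no preprocessing triggers, $\cfold_{\theta_i}$ pairs $(c_2, c_3), (c_4, c_5), \ldots, (c_{n-1}, c_n)$ from the top, and $c_1$ is processed alone at the end. The key observation for the lone step is that the second-maximum $b$ at that point satisfies $b \ge c_n/2$, because $c_n/2$ is produced by the very first fold and remains in $A$. This lets me show that both the fold branch (with cost $\tfrac{1}{b} - \tfrac{1}{c_1} \le \tfrac{2}{c_n} - \tfrac{1}{c_1}$) and the delete branch (only possible when $c_1 = c_n$, giving cost $\tfrac{1}{c_1} = \tfrac{2}{c_n} - \tfrac{1}{c_1}$) are bounded uniformly by $\tfrac{2}{c_n} - \tfrac{1}{c_1}$. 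Adding this bound to the pair costs, canceling $\tfrac{1}{c_2} - \tfrac{1}{c_1} \le 0$ via $c_1 \le c_2$, and telescoping the remaining middle terms via $c_{2k+1} \le c_{2k+2}$ compresses the total to $-\tfrac{1}{c_3} + \tfrac{1}{c_4} + \tfrac{1}{c_n}$; since $c_4, c_n \ge c_3 > \tfrac{4}{3}\theta_i$, this is at most $\tfrac{1}{c_3} < \tfrac{3}{4\theta_i}$.

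The main obstacle I expect is this last case: one must uniformly bound both sub-branches of the final inner fold step, and then keep track of three telescoping pieces (the pair costs, the $c_1$-step bound, and the cancellation from $c_1 \le c_2$) simultaneously. The even case and the odd-with-preprocessing case are comparatively direct refinements of the telescoping argument from \cref{claim4} once the uniform bound $A_{i-1} \subseteq (0, 2\theta_i]$ is in hand.
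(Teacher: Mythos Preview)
Your proposal is correct and follows essentially the same approach as the paper: the identical three-way case split (even $n$; odd $n$ with thirding; odd $n$ without thirding), the same telescoping of pair losses, and the same key observation in the last case that the second-maximum at the lone step is at least $c_n/2$. The differences are cosmetic (you index the sorted in-range elements $c_1\le\cdots\le c_n$ whereas the paper indexes iterations via $(a_j,b_j)$), and you should briefly dispatch the trivial edge cases $n=0$ in the even claim and $n=3$ in the thirding subcase, where $c_4$ does not exist.
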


\cref{claim12} is proved in Appendix \ref{proof:claim12}. Note that the naive density loss bound is $\frac{1}{\theta_i}$ even without any requirement on $N_{\theta_i}(A_{i - 1})$. The improved $\frac{1}{2 \theta_i}$ bound holds even for the initial fold operation (\cref{algo:naive}) for cases in which $N_{\theta_i}(A_{i - 1})$ is even. The improved $\frac{3}{4 \theta_i}$ bound in the odd case arises from our thirding operation and optimizing the parameter $\frac{4}{3}$ in line 8 of \cref{algo:imp}. 

\begin{lemma}\label{claim13}

For any instance $A$ with density at least $\sum_{i = 0}^{\infty} \frac{1}{2^i + 1}$, if $\theta \geq 16$, $N_{\theta_1}(A) \geq 2$, and $2 \theta_1$ is the minimum value of $p$ that is a power of 2 satisfying $D(A \leq p) > D(W \leq p)$, then
$$D(\textnormal{cfold}_{\theta}^{\text{imp}}(A)) \geq D(A) - \sum_{m = 1}^{1 + \log_2\left( \theta_1 / \theta \right)} \frac{3}{4 \theta_m}$$

\end{lemma}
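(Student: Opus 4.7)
The strategy is to induct on the iteration index $m \in \{1, \dots, M\}$, where $M = 1 + \log_2(\theta_1/\theta)$ is the number of passes through the outer while loop of $\cfold_\theta^{\text{imp}}$, and show that at each iteration the density loss $D(A_{m-1}) - D(A_m)$ is at most $\frac{3}{4\theta_m}$; telescoping these per-iteration bounds recovers the claim. By \cref{claim12}, this bound follows whenever $N_{\theta_m}(A_{m-1}) \geq 2$, while the case $N_{\theta_m}(A_{m-1}) = 0$ is absorbed by the even-case bound $\frac{1}{2\theta_m} \leq \frac{3}{4\theta_m}$ in the same lemma. So the real work reduces to ruling out $N_{\theta_m}(A_{m-1}) = 1$ at every iteration. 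The case $m = 1$ is immediate from the hypothesis $N_{\theta_1}(A) \geq 2$.

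For the inductive step, I would carry forward a stronger invariant: after iteration $m-1$ completes, the minimum power of 2 $p$ satisfying $D(A_{m-1} \leq p) > D(W \leq p)$ is exactly $2\theta_m$---a one-scale downshift of the original minimality at $2\theta_1$. Given that invariant, the argument in the proof of \cref{claim9} applied to $A_{m-1}$ with $E = 2\theta_m$ directly produces $N_{\theta_m}(A_{m-1}) \geq 2$. To establish the invariant across the transition $A_{m-2} \mapsto A_{m-1}$, I would audit $D(A \leq p)$ for each power of 2 $p$ under the thirding step (lines 6--9) and the fold operation $\cfold_{\theta_{m-1}}$ (line 11). The key observation is that the fold never inserts a genuinely new element below $\theta_{m-1}/2 = \theta_m$ (the halving branch produces $a/2 > \theta_m$, while the $a \geq 2b$ branch merely re-adds the existing $b$), so it does not raise $D(A \leq p)$ for $p \leq \theta_m$, and its total loss is at most $\tfrac{3}{4\theta_{m-1}}$ by the inductive hypothesis; together these facts push the minimum of $S$ by exactly one scale.

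The main obstacle is controlling the thirding step. When triggered at level $\theta_{m-1}$, thirding removes three elements from $(\theta_{m-1}, 2\theta_{m-1}]$ and deposits a new element $m_3/3 \in (\tfrac{2}{3}\theta_m, \tfrac{8}{9}\theta_m) \subset (\theta_{m+1}, \theta_m)$, strictly below $\theta_m$. This deposit contributes $\tfrac{3}{m_3}$ to $D(A \leq \theta_m)$ and so threatens the non-strict inequality $D(A_{m-1} \leq \theta_m) \leq D(W \leq \theta_m)$ the invariant requires. Balancing this injection against the slack $D(W \leq \theta_m) - D(A \leq \theta_m)$ inherited from the original minimality at $2\theta_1$, together with the density simultaneously removed from $(\theta_{m-1}, 2\theta_{m-1}]$ by the same thirding step, is precisely what the thirding parameter $\tfrac{4}{3}$ in line 8 and the assumption $\theta \geq 16$ are calibrated for. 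Once the invariant is verified at each step, summing $\tfrac{3}{4\theta_m}$ over $m = 1, \dots, M$ produces the claimed bound.
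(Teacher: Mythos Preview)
Your approach diverges from the paper's in a crucial respect: the paper does \emph{not} attempt to rule out $N_{\theta_m}(A_{m-1}) = 1$. Instead it treats $N = 1$ as a live case, shows that it forces a thirding at iteration $m-1$ (deriving a contradiction with $\theta_{m} \geq 16$ under the assumption of no prior thirding), and then bounds the loss at iteration $m$ directly: the thirded element $m_3/3 > 2\theta_m/3$ survives into $A_{m-1}$ and serves as a lower bound for $b_1$ in the single $\cfold_{\theta_m}$ step, so the loss is at most $\max\{1/a_1,\, 1/(2b_1)\} \leq 3/(4\theta_m)$ regardless of whether the lone element $d$ lies above or below $\tfrac{4}{3}\theta_m$.

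Your alternative route---maintain the invariant that the minimal power of two $p$ with $D(A_{m-1} \leq p) > D(W \leq p)$ equals $2\theta_m$, then invoke \cref{claim9}---has two concrete gaps. First, the lower half of the invariant, $D(A_{m-1} \leq \theta_m) \leq D(W \leq \theta_m)$, need not survive a thirding at step $m-1$: the deposited element contributes density at least $9/(8\theta_m)$ to the left side, while the original slack $D(W \leq \theta_m) - D(A \leq \theta_m)$ is only known to be nonnegative and can be arbitrarily small. The density you cite as ``simultaneously removed'' lives in $(\theta_{m-1}, 2\theta_{m-1}]$, above $\theta_m$, so it does nothing for this inequality; and neither the $\tfrac{4}{3}$ threshold nor $\theta \geq 16$ is tuned to this balance---they are calibrated for the per-step loss bound in \cref{claim12}, not for your invariant. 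Second, even granting the invariant, the \cref{claim9} deduction fails once periods become fractional: a single element of period $\theta_m + \tfrac{1}{2}$ (produced, for instance, by halving the integer $2\theta_m + 1$ at step $m-1$) already has density $2/(2\theta_m + 1) > 1/(\theta_m + 1)$, so the range $(\theta_m, 2\theta_m]$ can carry density exceeding $1/(\theta_m + 1)$ with $N_{\theta_m}(A_{m-1}) = 1$.
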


\cref{claim13} is proved in Appendix \ref{proof:claim13}. The high-level idea is to iterate the density loss bound from \cref{claim12} by arguing that if $N_{\theta_i}(A_{i - 1}) \geq 2$ holds initially, then it continues to hold.

\begin{theorem}\label{claim14}

Any instance $L$ with density at least $\sum_{i = 0}^{\infty} \frac{1}{2^i + 1}$ is schedulable. 

\end{theorem}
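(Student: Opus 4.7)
The plan is to mirror the contradiction argument used for the 1.3 density bound in \cref{claim6}, but powered by the sharper density-loss guarantee of \cref{claim13} rather than by the crude $2/\theta$ loss of plain \cfold. Suppose, for contradiction, that $L$ is unschedulable with $D(L) \geq \sum_{i=0}^{\infty} \frac{1}{2^i+1}$. Fix a concrete folding threshold $\theta$ that is a power of $2$ and large enough for the finite search at the end to be conclusive (the natural candidates are $\theta=16$ or $\theta=32$). Apply \cref{claim8} and \cref{claim9} to produce the minimal power of $2$, namely $E$, with $D(L\le E)>D(W\le E)$, and to deduce that $N_{E/2}(L)\geq 2$.

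Next I would split into two regimes according to how $E$ compares with $2\theta$. If $E< 2\theta$, then already $L_{\max}<2\theta$ (since by the proof of \cref{claim8} we have $E\le 2L_{\max}$ in the sense of powers of two) and $L$ itself is an instance with bounded elements, so no folding is required. Otherwise $E\geq 2\theta$, and the starting value $\theta_1=E/2$ of $\theta_{\text{current}}$ in \cref{algo:imp} satisfies the hypotheses of \cref{claim13}: $\theta_1\geq\theta\geq16$, $2\theta_1=E$ is the minimum element of $S$, and $N_{\theta_1}(L)\geq 2$. Apply \cref{claim11} to preserve unschedulability under $\cfold^{\text{imp}}_{\theta}$, and \cref{claim10} to guarantee that every element of $L' := \cfold^{\text{imp}}_{\theta}(L)$ is at most $\theta$. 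The density loss bound of \cref{claim13} telescopes to
\begin{equation*}
D(L') \;\geq\; D(L) - \sum_{m=1}^{1+\log_2(\theta_1/\theta)} \frac{3}{4\theta_m} \;=\; D(L) - \frac{3}{2\theta} + \frac{3}{4\theta_1},
\end{equation*}
so $D(L')\geq \sum_{i=0}^{\infty}\frac{1}{2^i+1} - \frac{3}{2\theta}$, which beats the naive loss $\frac{2}{\theta}$ by a factor that turns out to be decisive.

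Now convert $L'$ to an integer instance $L''$ by the rounding trick of \cref{claim5}: keep small periods intact and round large periods up in a way that costs at most a small extra density deficit under a suitably modified $D'(\cdot)$. Monotonicity (\cref{lemma:mono}) ensures $L''$ is also unschedulable, while its $D'$-value remains at least $\sum_{i=0}^{\infty}\frac{1}{2^i+1} - \frac{3}{2\theta} - (\text{rounding correction})$. The final step is a finite, exhaustive enumeration: restrict to instances of integers in $[2,\theta]$ whose $D'$-value lies above the threshold and which are ``minimal'' in the sense that removing any element would drop them below it; explicitly produce valid schedules for all such instances via computer search (the repository referenced in the introduction). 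This contradicts the unschedulability of $L''$ and therefore of $L$.

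The main obstacle is the interplay between the choice of $\theta$ and the size of the finite search. Too small a $\theta$ makes the loss $\frac{3}{2\theta}$ too large and allows unschedulable integer instances (like those from \cref{fact:2itheorem}) to slip into the finite family; too large a $\theta$ explodes the enumeration. The improvement from $\frac{2}{\theta}$ to $\frac{3}{2\theta}$ provided by \cref{claim13} is exactly what makes the inequality tight enough to close: \cref{fact:simtheorem} shows a loss of $\frac{2}{\theta}$ is just barely insufficient, while the thirding step in \cref{algo:imp} recovers the necessary slack. Verifying that the hypothesis $N_{\theta_i}(A_{i-1})\geq 2$ persists across iterations (already handled inside \cref{claim13}) and that the rounding correction does not eat up the gained slack are the remaining technical checks.
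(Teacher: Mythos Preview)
Your overall plan---locate the minimal $E$ via \cref{claim8}--\cref{claim9}, fold using \cref{algo:imp}, and finish with a finite search---matches the paper's approach, but there is a genuine gap in how you handle the relation between $E$ and $L_{\max}$.

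The claim ``if $E<2\theta$ then $L_{\max}<2\theta$'' is false. The quantity $E$ is the least power of $2$ for which the \emph{truncated} density $D(L\le E)$ already beats $D(W\le E)$; it can be tiny even when $L$ has enormous elements (e.g.\ $L=(2,2,10^6)$ has $E=2$). What \cref{claim8} gives is only the upper bound $E\le 2^{\lceil\log_2 L_{\max}\rceil}$, not the reverse inequality you use. Consequently, for small $E$ you cannot skip folding; the paper instead truncates to $L\le E$ (or folds down from a fixed level) and uses the \emph{case-specific} lower bound $D(L\le E)>D(W\le E)$ rather than the global density hypothesis. This is why the paper splits into seven cases $E\in\{1,2,4,8,16,32\}$ and $E\ge 64$, each with its own density threshold and its own finite search.

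The same issue bites in the large-$E$ regime. To get $\theta_1=E/2$ in \cref{algo:imp} you must first pass to $L'=L\le E$, since the algorithm sets $\theta_1$ from $A_{\max}$, not from $E$. After that truncation the input density is only $D(L\le E)>\sum_{i=0}^{\log_2 E-1}\frac{1}{2^i+1}$, not the full $\sum_{i=0}^{\infty}\frac{1}{2^i+1}$, so the post-fold bound is \emph{not} simply $\sum_{i\ge 0}\frac{1}{2^i+1}-\frac{3}{2\theta}$. The paper exploits the positivity of $\frac{1}{2^i+1}-\frac{3}{4\cdot 2^i}$ for $i\ge 2$ to make the truncated bound work, and then further splits according to the parity of $n$ on the last iteration and whether a thirding occurred, with different modified density functions ($D'$, $D_{\text{mod}}$, $D_c$) for each subcase. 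Your single ``rounding correction'' does not capture this; without the subcase-specific corrections the finite family admits unschedulable members and the search cannot close.
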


\cref{claim14} is proved in Appendix \ref{proof:claim14}. Given \cref{algo:imp}, the idea is similar to the proof of \cref{claim5}, but involving more cases and subcases. In particular, unlike \cref{claim5}, there is a parameter $E$ (from \cref{claim9}) which is relevant to our analysis, so we must do casework based on $E$. Furthermore, \cref{claim14} is exactly the $2^i + 1$ conjecture, proving the desired optimal density bound.
 
\section{An Improved Approximation for BGT.}\label{sec:bgt}

We begin by proving that $M_{helper}$ (\cref{algo:helper}) provides correct output assuming that the input satisfies the preconditions. That is, we prove that if the algorithm outputs ``unschedulable'' then $A$ is really unschedulable and if it outputs a schedule, then it is a valid schedule for $(\lfloor \frac{9}{7} \cdot A_1 \rfloor, \lfloor \frac{9}{7} \cdot A_2 \rfloor, \dots, \lfloor \frac{9}{7} \cdot A_k \rfloor)$. We begin with the unschedulable case. For the $DB(A) = 1$ case, it can clearly be seen that the output is correct because it only declares schedulability if $D(A) > 1$. We prove the case of $DB(A) = 1 - \frac{1}{3a_2^2}$ in a more general way.

\begin{lemma}\label{lemma:bgt1}

For any $a, b \in \mathbb{N}_{>1}$ with $a < b$ such that $\gcd(a, b) = 1$, a pinwheel instance $A$ satisfying $(a_1, a_2) = (a, b)$ and $D(A) > 1 - \frac{1}{ab^2}$ is unschedulable. 

\end{lemma}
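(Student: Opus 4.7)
The plan is to argue by contradiction: assume $A$ is schedulable with $D(A) > 1 - 1/(ab^2)$ and derive the reverse inequality.  Fix a valid schedule; over $N$ consecutive days the standard pinwheel inequality gives a per-task lower bound $n_i \geq N/a_i - 1$, which combined with $\sum_i n_i = N$ yields only the trivial density bound $D(A) \leq 1 + O(1/N)$.  To gain the extra $1/(ab^2)$, I would tighten the lower bound on tasks $a$ and $b$ jointly by establishing the following density lemma: in any valid schedule, every window of $ab^2$ consecutive days contains at least $ab + b^2 + 1$ occurrences of task $a$ or task $b$.  Granted the lemma, partitioning $[1, N]$ (for $N$ a large multiple of $ab^2$) into disjoint windows gives $n_a + n_b \geq N/a + N/b + N/(ab^2)$; substituting into $\sum_i n_i = N$ along with $n_i \geq N/a_i - 1$ for $i \geq 3$ yields $D(A) \leq 1 - 1/(ab^2) + O(1/N)$, and sending $N \to \infty$ produces the contradiction.

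The proof of the density lemma reduces to a combinatorial impossibility.  Fix a window $W = [d, d + ab^2 - 1]$; partitioning $W$ into $b^2$ aligned blocks of length $a$ forces $n_a \geq b^2$, and analogously $n_b \geq ab$.  If $n_a + n_b = ab + b^2$ then both bounds are tight, so task $a$ occupies exactly one slot per aligned $a$-block and task $b$ exactly one per aligned $b$-block.  Writing $t_j = d + (j-1)a + \alpha_j$ with $\alpha_j \in \{0, \dots, a-1\}$ and $s_k = d + (k-1)b + \beta_k$ with $\beta_k \in \{0, \dots, b-1\}$, the pinwheel constraint $t_{j+1} - t_j \leq a$ rearranges to $\alpha_{j+1} \leq \alpha_j$, so $(\alpha_j)_{j=1}^{b^2}$ is non-increasing; likewise $(\beta_k)_{k=1}^{ab}$ is non-increasing.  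The disjointness condition $t_j \neq s_k$ becomes $(j-1)a - (k-1)b \neq \beta_k - \alpha_j$ for every admissible $(j, k)$.

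The desired contradiction comes from coupling this Diophantine condition with Bezout.  Because $\gcd(a,b) = 1$, for each integer $c$ the equation $(j-1)a - (k-1)b = c$ has integer solutions forming an arithmetic progression with steps $(b, a)$.  Decompose $[1, b^2]$ into level sets $I_v = \{j : \alpha_j = v\}$ (intervals, ordered by decreasing $v$ along $[1, b^2]$ since $\alpha$ is non-increasing) and analogously $[1, ab]$ into $J_w = \{k : \beta_k = w\}$.  Pigeonhole yields some $v^*$ with $|I_{v^*}| \geq \lceil b^2/a \rceil \geq b$ and some $w^*$ with $|J_{w^*}| \geq a$; on any such plateau rectangle both tasks are strictly periodic, so any Bezout solution for $c = w^* - v^*$ inside $I_{v^*} \times J_{w^*}$ produces a forbidden collision $t_j = s_k$.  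The main obstacle is to verify that, across all admissible step-down patterns of $(\alpha_j)$ and $(\beta_k)$, some plateau pair $(I_v, J_w)$ always contains a Bezout solution for the matching $c = w - v$---the pigeonhole plateaus have predictable locations but the Bezout progression can in principle slip past them.  I expect the cleanest route is a case analysis anchored at the tail plateaus $I_0$ and $J_0$ (where both tasks are truly periodic with offset $0$) and iterating to earlier plateaus when the tails are too short to directly produce an intersection, leveraging the strict ordering of plateaus to guarantee that some Bezout progression always strikes the required rectangle.
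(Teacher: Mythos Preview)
Your global counting framework is sound and matches the paper's, but the argument is incomplete at exactly the point you flag: the ``density lemma'' that every $ab^2$-window contains at least $ab+b^2+1$ occurrences of tasks $a$ and $b$. You reduce it to showing that some plateau rectangle $I_v\times J_w$ must contain a Bezout solution for $c=w-v$, then say ``the main obstacle is to verify\ldots I expect the cleanest route is a case analysis.'' That is not a proof. The pigeonhole plateaus you identify ($|I_{v^*}|\ge b$, $|J_{w^*}|\ge a$) need not overlap in a way that traps a Bezout solution, and there is no evident monotonicity or exhaustion argument that forces one of the $a\cdot b$ plateau pairs to work. So as written, the key lemma is a conjecture, not a theorem.

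The paper avoids this combinatorial difficulty entirely by working at the coarser $ab$ scale and counting globally rather than locally. The observation is a one-liner: if both tasks were perfectly periodic on a window of length $ab$, the Chinese Remainder Theorem would force a collision; hence every $ab$-window contains at least one \emph{left-push} (a gap strictly shorter than the period, for one of the two tasks). Over $P$ days this gives at least $P/(ab)$ left-pushes. If $x$ of them belong to task~$a$ and $y$ to task~$b$, the excess schedulings are at least $x/a + y/b - O(1) \ge (x+y)/b - O(1) \ge P/(ab^2) - O(1)$, which is exactly the gain you need. No plateau bookkeeping, no Bezout hitting argument, no case analysis---the CRT step and the single inequality $x/a+y/b\ge (x+y)/b$ do all the work. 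If you want to rescue your local $ab^2$-window lemma, note that it does \emph{not} follow cheaply from the $ab$-window left-push count (the total drop in each non-increasing offset sequence is only $a-1$ and $b-1$, which is consistent with $b$ sub-windows each having one left-push), so you would genuinely need the plateau argument you left open.
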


\cref{lemma:bgt1} is proved in Appendix \ref{proof:bgt1}. We can prove the $\frac{95}{96}$ bound for the $(3, 6, 6, 8)$ case similarly.

\begin{lemma}\label{lemma:bgt2}

Any pinwheel instance $A$ satisfying $(a_1, a_2, a_3, a_4) = (3, 6, 6, 8)$ and $D(A) > 1 - \frac{1}{96}$ is unschedulable. 

\end{lemma}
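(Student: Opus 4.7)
The plan is to show that in any valid schedule for $A$, the four tasks of periods $3, 6, 6, 8$ collectively use asymptotic density at least $5/6$. Let $d_i$ denote the asymptotic density of task $i$ in the schedule: the pinwheel constraint forces $d_i \geq 1/a_i$, and every day is assigned to some task so $\sum_i d_i = 1$. Given this density lower bound, $\sum_{i \geq 5} d_i \leq 1 - 5/6 = 16/96$, but the pinwheel constraint also yields $\sum_{i \geq 5} d_i \geq \sum_{i \geq 5} 1/a_i = D(A) - 19/24 > 95/96 - 76/96 = 19/96$, producing the contradiction $19/96 \leq 16/96$.

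The key claim is established by a residue analysis modulo $6$. If each of the four tasks ran at its individual minimum density $1/a_i$ (summing to $19/24$), each would be forced into a single residue class modulo its period, since asymptotic density matching $1/a_i$ combined with the maximum gap constraint $a_i$ forces every gap to equal $a_i$. Reducing mod $6$: the period-$3$ task occupies $2$ classes (one residue mod $3$ splits into two mod $6$), and each period-$6$ task occupies $1$ class, totaling $4$ distinct occupied classes mod $6$. The period-$8$ task, at a single residue mod $8$, visits exactly $3$ classes mod $6$ (since $\gcd(6, 8) = 2$, its orbit cycles through three residues of the same parity), at least one of which must collide with the $4$ already-occupied classes. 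Hence some task must exceed its minimum density.

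To pin the forced excess at $1/24$, consider the cheapest resolution: the period-$8$ task switches its gap pattern from all $8$s to all $6$s, giving $4$ positions per $24$ days at one of the $2$ free residues mod $6$ (valid since $6 \leq 8$), contributing $d_4 = 1/6 = 1/8 + 1/24$. A short case analysis over alternative absorbing options — the period-$3$ task shifting residue (costing one extra slot per $24$-day block by the same counting as in the proof of Lemma \ref{lemma:bgt1}), or a period-$6$ task doing likewise — confirms that no configuration achieves strictly smaller combined density, so $\sum_{i=1}^{4} d_i \geq 19/24 + 1/24 = 5/6$. The main obstacle is precisely this case analysis: the residue-mod-$6$ obstruction is clean, but verifying that every resolution of the collision costs at least $1/24$ extra density requires a careful per-case check of gap constraints across the small set of remaining configurations.
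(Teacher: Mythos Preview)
The proposal takes a genuinely different route from the paper. The paper verifies programmatically that every interval of $24$ consecutive days contains at least two left-pushes (in the sense of the proof of Lemma~\ref{lemma:bgt1}), and then converts this via the loosest task (period $8$) into an excess density of $1/96$ for tasks $1$--$4$, i.e., combined density at least $77/96$; this is exactly what yields the $95/96$ threshold. Your plan instead aims for the stronger bound $\sum_{i=1}^{4} d_i \geq 5/6 = 80/96$, which, if established, would actually prove the sharper statement that $D(A) > 23/24$ already forces unschedulability.

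The gap is precisely the step you flag as ``the main obstacle,'' and it is more serious than a short case check. The residue-mod-$6$ obstruction cleanly rules out the configuration where all four tasks sit at their exact minimum densities. But your case analysis only examines resolutions in which three of the four tasks remain at pure residue classes while one task absorbs the entire excess. It does not address schedules in which several of tasks $1$--$3$ make small, coordinated deviations from their residue classes so that task $4$ can locally exploit the freed slots and push its density below $1/6$; ruling this out requires tracking how fractional excesses on tasks $1$--$3$ trade off against savings for task $4$ across all gap patterns, which is not obviously finite or short. This combinatorial explosion is exactly what the paper's approach avoids by reducing to a finite computer-verified local statement and then applying a deliberately loose (but sufficient) density conversion. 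A secondary technical point: for $\sum_i d_i = 1$ to hold on the nose you need the asymptotic densities to exist, which you should justify (e.g., by first passing to a periodic schedule).
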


\cref{lemma:bgt2} is proved in Appendix \ref{proof:bgt2}. We now prove that if $M_{helper}$ (\cref{algo:helper}) outputs a schedule, then the schedule is valid. 

\begin{lemma}\label{lemma:bgt3}

For any pinwheel instance $A$ and any $\theta \in \mathbb{N}$, let $A_f = (\lfloor \frac{9}{7} \cdot A_1 \rfloor, \lfloor \frac{9}{7} \cdot A_2 \rfloor, \dots, \lfloor \frac{9}{7} \cdot A_{k} \rfloor)$, let $B = \pfold_{\theta}(A)$ and let $C = (\lfloor \frac{9}{7} \cdot B_1 \rfloor, \lfloor \frac{9}{7} \cdot B_2 \rfloor, \dots, \lfloor \frac{9}{7} \cdot B_{k'} \rfloor)$. 

If $C$ is schedulable, then $A_f$ is schedulable.

\end{lemma}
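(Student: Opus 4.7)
The plan is to induct on the number $n$ of iterations that the while loop of $\pfold_\theta$ performs on $A$. Writing $A = B_0 \to B_1 \to \cdots \to B_n = B$ for the successive intermediate instances, and $C_i = (\lfloor \tfrac{9}{7} \cdot (B_i)_j \rfloor)_j$ for their scaled-and-floored snapshots, one has $C_0 = A_f$ and $C_n = C$. The lemma then reduces to the single-step claim that $C_{i+1}$ schedulable implies $C_i$ schedulable, which, chained from $i = n-1$ down to $i = 0$, gives exactly what we want.

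For one such step I would let $a \geq b$ denote the two largest elements of $B_i$ and split on which branch of \cref{algo:fold} is taken. In the else branch ($b \leq \theta$), only $a$ is removed and $\theta$ is inserted, so $C_{i+1}$ differs from $C_i$ only by replacing $\lfloor 9a/7 \rfloor$ with $\lfloor 9\theta/7 \rfloor$. Since the while condition forces $a > \theta$, we have $\lfloor 9a/7 \rfloor \geq \lfloor 9\theta/7 \rfloor$, and a direct application of \cref{lemma:packingmono} completes this case.

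The if branch ($b > \theta$) is the substantive one: $(a,b)$ gets merged into $b/2$, so $C_{i+1}$ contains a single $\lfloor 9b/14 \rfloor$ where $C_i$ had both $\lfloor 9a/7 \rfloor$ and $\lfloor 9b/7 \rfloor$. Starting from a schedule for $C_{i+1}$, I would first apply \cref{lemma:packingpart} with $q = 2$ to produce a schedule for the instance that replaces $\lfloor 9b/14 \rfloor$ by two copies of $2 \lfloor 9b/14 \rfloor$, and then invoke \cref{lemma:packingmono} twice to promote those copies to $\lfloor 9b/7 \rfloor$ and $\lfloor 9a/7 \rfloor$. Validity of these monotonicity steps hinges on the chain
$$2 \lfloor 9b/14 \rfloor \leq \lfloor 9b/7 \rfloor \leq \lfloor 9a/7 \rfloor,$$
whose first inequality is an instance of the elementary identity $\lfloor 2x \rfloor \geq 2 \lfloor x \rfloor$ and whose second follows from $a \geq b$.

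The main thing to watch is that the intermediate $B_i$ need not have integer periods (the fold keeps halving), so I am careful to invoke \cref{lemma:packingmono} and \cref{lemma:packingpart} only on the integer instances $C_i$, where the lemmas are stated. With that bookkeeping in place, the only non-routine ingredient is pairing partitioning with monotonicity in exactly the right order in the if branch, which is what makes the fold interact cleanly with the $\tfrac{9}{7}$-scaling; I do not expect any further obstacles.
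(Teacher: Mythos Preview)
Your argument is correct and is genuinely different from the paper's. The paper does not induct along the iterations of $\pfold_\theta$; instead it partitions the jobs of $A$ according to which job of $B$ they fold into, introduces a variant $\pfold'$ that folds a subinstance all the way to a single element, and then argues via an auxiliary instance $D_i$ obtained by rounding each $A_{ij}$ down to the nearest $B_i\cdot 2^v$. The key inequality there is $\lfloor n x\rfloor \ge n\lfloor x\rfloor$ applied with $n$ a full power of two, together with a scaling identity $\pfold'(\alpha A')=\alpha\,\pfold'(A')$.

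Your step-by-step approach replaces all of that machinery with the single observation $2\lfloor 9b/14\rfloor \le \lfloor 9b/7\rfloor$ applied once per iteration, followed by \cref{lemma:packingpart} and \cref{lemma:packingmono} on the integer snapshots $C_i$. This is strictly more elementary: you never need $\pfold'$, the rounded instance $D_i$, or the scaling identity. What the paper's route buys is a slightly stronger intermediate statement (each slot $\lfloor\tfrac{9}{7}B_i\rfloor$ individually accommodates its full preimage in $A_f$), which is modular and could in principle be reused; your route buys brevity and transparency, and suffices for the lemma as stated.
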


\cref{lemma:bgt3} is proved in \cref{proof:bgt3}. Now that we have proven the correctness of $M_{helper}$, we focus on proving the correctness of $M$ (\cref{algo:bgt}) under the assumption that $M_{helper}$ (\cref{algo:helper}) is correct. 

\begin{theorem}

$M$ is a $\frac{9}{7}$-approximation algorithm for Bamboo Garden Trimming.

\end{theorem}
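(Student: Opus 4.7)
The plan is to invoke the BGT-to-pinwheel reduction described in \cref{sec:bgtover} and then prove correctness of $M$ as a pinwheel meta-algorithm by strong induction on $k$. By that reduction, it suffices to show that for every pinwheel instance $A = (a_i)_{i \in [k]}$, $M(A)$ either returns ``unschedulable'' only when $A$ is genuinely unschedulable or outputs a valid schedule for the relaxed instance $A_f = (\lfloor \frac{9}{7} a_i \rfloor)_{i \in [k]}$; a binary search on the target height then converts this property into a $\frac{9}{7}$-approximation for BGT. The base case $k = 1$ is immediate (the constant schedule is valid since $\lfloor \frac{9}{7} a_1 \rfloor \geq 1$), and whenever none of the three recursive triggers fires $M$ falls through to $M_{helper}$, whose preconditions $(a_1) \neq (2)$, $(a_1, a_2) \neq (3, 3)$, and $(a_1, a_2, a_3, a_4) \neq (3, 6, 6, 6)$ are exactly the negations of the three branches; correctness in this case follows from \cref{lemma:bgt1}, \cref{lemma:bgt2}, and \cref{lemma:bgt3}.

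For each recursive branch, with $q \in \{2, 3, 6\}$ the period of the fixed template and $T \subset [k]$ the top tasks that the template consumes, I would establish two ingredients: \emph{compression}, that if $A$ is schedulable in the pinwheel sense then $A' = (\lfloor a_i/q \rfloor)_{i \notin T}$ is schedulable; and \emph{extension}, that any valid schedule $S'$ for $A'_f = (\lfloor \frac{9}{7} \lfloor a_i/q \rfloor \rfloor)_{i \notin T}$ lifts via the interleaving formula in $M$ to a valid schedule for $A_f$. Given compression, an ``unschedulable'' verdict from $M(A')$ forces $A'$ and hence $A$ unschedulable, so returning ``unschedulable'' is correct; given extension, a schedule $S'$ output by $M(A')$ is valid for $A'_f$ by induction, and the interleaving then yields a valid schedule for $A_f$. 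Extension reduces to checking $q \cdot \lfloor \frac{9}{7} \lfloor a_i/q \rfloor \rfloor \leq \lfloor \frac{9}{7} a_i \rfloor$ for every $i \notin T$, which is immediate because the left-hand side is an integer bounded above by $\frac{9 a_i}{7}$.

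The main obstacle is verifying the compression ingredient, and in particular for the $(3, 6, 6, 6)$ branch. The argument is a rigidity statement: any valid pinwheel schedule of $A$ can be normalized so that tasks $1$ through $4$ occupy exactly the residues modulo $6$ used by $M$. Their densities sum to $\frac{1}{3} + 3 \cdot \frac{1}{6} = \frac{5}{6}$, which matches the template exactly, so every slot used by one of these four tasks is forced up to a global shift and a permutation of the three period-$6$ tasks; the $\equiv 0 \pmod{6}$ residue class then carries a valid schedule of $A'$. The $a_1 = 2$ and $(a_1, a_2) = (3, 3)$ cases are simpler instances of the same rigidity principle and are well-known in the pinwheel literature: the $\frac{1}{2}$ and $\frac{2}{3}$ contributions already saturate their templates, and the remaining residues carry the reduced instance via a local swap argument that shifts top-task occurrences into the template without breaking any constraint. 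The $(3, 6, 6, 6)$ case requires more careful case analysis but reduces to the same rigidity reasoning combined with a permutation of the three identical period-$6$ labels.
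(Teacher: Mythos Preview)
Your induction skeleton, the handling of the base case and the pass-through to $M_{helper}$, and the extension inequality $q\lfloor \frac{9}{7}\lfloor a_i/q\rfloor\rfloor \le \lfloor \frac{9}{7}a_i\rfloor$ all match the paper and are correct. The gap is in the compression step. Your justification for rigidity---that the densities of the top tasks sum to $\frac{5}{6}$, matching the template, so their slots are forced up to a global shift---is false. For a concrete counterexample take $A=(3,6,6,6,100)$ and schedule tasks $1$--$4$ on residues $1,2,3,4,5 \pmod 6$ as in the template, but additionally assign task~$1$ to day~$6$ and place task~$5$ only on days $12,18,24,\dots$; this is a valid pinwheel schedule in which the set of days used by tasks $1$--$4$ is not a union of five residue classes modulo~$6$ under any shift. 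A local swap argument cannot repair this without already knowing that the remaining tasks fit into a single arithmetic progression of step~$q$, which is precisely the compression statement you are trying to prove.

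The paper sidesteps normalization entirely. Its compression argument is a one-line count: in any valid schedule for $A$, every window of $q$ consecutive days contains at least $q-1$ occurrences of the top tasks (for $q=6$: task~$1$ contributes at least two and tasks $2,3,4$ at least one each, on distinct days since only one task is scheduled per day), so the days assigned to tasks outside $T$ are spaced at least $q$ apart; re-indexing those days directly yields a valid schedule for $A'=(\lfloor a_i/q\rfloor)_{i\notin T}$. If you try to make your rigidity route rigorous you will end up invoking this same count to manufacture the normalized schedule (compress via the count, then re-extend on the template), so the detour buys nothing.
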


\begin{proof}
We prove the correctness of $M$ (\cref{algo:bgt}) by induction on the number of jobs in $A$. The output is clearly correct in the base case (else case) because $M$ directly delegates to $M_{helper}$ and satisfies its preconditions due to filtering out the special cases. The $a_1 = 2$ case is directly given in \cite{kawamura2020}. Specifically, declaring $A$ unschedulable if $A'$ is unschedulable is justified because if $A$ were schedulable, all the instances of task 1 could be removed to generate a schedule for $A'$. Now, if a schedule is found for $A'$ (which is correct, by induction), then job $i > 1$ is scheduled at least every $\lfloor \frac{9}{7} \cdot \lfloor  \frac{1}{2} \cdot a_i \rfloor \rfloor$ days, so in the final schedule $S$ it is scheduled at least every $2 \cdot \lfloor \frac{9}{7} \cdot \lfloor \frac{1}{2} \cdot a_i \rfloor \rfloor \leq \lfloor \frac{9}{7} \cdot a_i \rfloor$ days, as desired.

For the $a_1 = 3, a_2 = 3$ case, we can declare $A$ unschedulable if $A'$ is unschedulable, since if $A$ were schedulable, all instances of tasks 1 and 2 could be removed to generate a schedule for $A$, since in every 3 days there are at least two combined schedulings of jobs 1 and 2. If a schedule is found for $A'$, then job $i > 2$ is scheduled at least every $\lfloor \frac{9}{7} \cdot \lfloor \frac{1}{3} \cdot a_i \rfloor \rfloor$ days, so in the final schedule $S$ it is scheduled at least every $3 \cdot \lfloor \frac{9}{7} \cdot \lfloor \frac{1}{3} \cdot a_i \rfloor \rfloor \leq \lfloor \frac{9}{7} \cdot a_i \rfloor$ days, as desired. For the $a_1 = 3, a_2 = 6, a_3 = 6, a_4 = 6$ case, we can declare $A$ unschedulable if $A'$ is unschedulable, since if $A$ were schedulable all instances of tasks 1, 2, 3, and 4 could be removed to generate a schedule for $A$, since in every 6 days there are at least five combined schedulings of jobs 1, 2, 3, and 4. If a schedule is found for $A'$, then job $i > 4$ is scheduled at least every $\lfloor \frac{9}{7} \cdot \lfloor \frac{1}{6} \cdot a_i \rfloor \rfloor$ days, so in the final schedule $S$ it is scheduled at least every $6 \cdot \lfloor \frac{9}{7} \cdot \lfloor \frac{1}{6} \cdot a_i \rfloor \rfloor \leq \lfloor \frac{9}{7} \cdot a_i \rfloor$ days, as desired. This completes the proof of correctness of $M$.
\end{proof}

\section{Conclusion.}

In this paper, we have proven that for any discretized point patrolling instance $A$ with $D(A) \geq \sum_{n = 1}^{\infty} \frac{1}{2^n + 1}$, $A$ is schedulable. This resolves a conjecture posed in several related works \cite{kawamura, kawamura2020, covering} and complements the recent proof of the $\frac{5}{6}$ density bound in the pinwheel packing setting \cite{kawamura}.

Our bound is provably optimal, so it puts the question of the right density bound for general discretized point patrolling instances to rest. However, there has been recent work on the density bound for instances depending on the minimum element in the instance \cite{covering, gasieniec, regular}. The right density bound is not known in this case, even asymptotically. Besides discretized point patrolling, we have also produced an algorithm that achieves a state-of-the-art $\frac{9}{7}$-approximation for the bamboo garden trimming problem. With respect to future research, we field the following conjecture: 

\begin{conjecture}\label{conj:bgt}
    For any $r > 1$, there is an efficient $r$-approximation for Bamboo Garden Trimming.
\end{conjecture}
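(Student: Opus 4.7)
The plan is to construct, for each $r > 1$, an algorithm $M_r$ generalizing $M$ (Algorithm~\ref{algo:bgt}) and $M_{helper}$ (Algorithm~\ref{algo:helper}) so that on any pinwheel packing instance $A = (a_i)$, $M_r$ either correctly declares $A$ unschedulable or outputs a schedule for $(\lfloor r a_i \rfloor)$. By the BGT-to-pinwheel reduction described in \cref{sec:bgtover}, this yields an efficient $r$-approximation for BGT.

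The construction of $M_r$ has three components. First, choose a fold threshold $\theta = \theta(r)$ (growing as $r \to 1$) and precompute a lookup table $T_r$ storing, for every schedulable pinwheel instance $A'$ with $\max A' \leq \theta$, a valid schedule for the relaxed instance $(\lfloor r a' \rfloor)_{a' \in A'}$; this table is finite but with size growing rapidly in $1/(r-1)$. Second, identify the set $\mathcal{S}_r$ of finitely many ``special prefixes'' $(a_1,\ldots,a_j)$ that cannot be handled by fold-and-lookup, and recurse on each by scheduling those jobs periodically, exactly mirroring the cases $(2)$, $(3,3)$, $(3,6,6,6)$ used when $r = \frac{9}{7}$. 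Third, on inputs avoiding every prefix in $\mathcal{S}_r$, apply $\pfold_{\theta}$, round periods up by factor $r$, and either return the schedule found in $T_r$ or declare unschedulability via a finite enumerated collection of pattern-based density obstructions generalizing \cref{lemma:bgt1} and \cref{lemma:bgt2} to all relevant small-period tuples. Correctness then follows by induction on the number of jobs in $A$, exactly as in the $\frac{9}{7}$ proof, and the approximation ratio follows by the binary-search BGT reduction.

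The main obstacle is the soundness of the fold-and-lookup step: proving that for any $A$ which is schedulable and avoids every prefix in $\mathcal{S}_r$, the instance $\pfold_{\theta(r)}((\lfloor r a_i \rfloor))$ lies in the schedulable portion of $T_r$. Density alone is not sufficient — schedulability of $A$ gives only $D(B) \leq 1/r$ for $B = (\lfloor r a_i \rfloor)$, hence $D(\pfold_\theta(B)) \leq 1/r + 1/\theta$, which may be arbitrarily close to $1$, well above the $\frac{5}{6}$ threshold beyond which \cref{fact:23theorem} produces unschedulable instances. The proof must therefore exploit the structural inheritance of schedulability from $A$ to $B$, together with avoidance of every prefix in $\mathcal{S}_r$, to show that the unschedulable instances with periods bounded by $\theta$ all contain one of these prefixes (possibly after folding). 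A plausible route is iterative: start with $\mathcal{S}_r$ populated by the prefixes forced by the tightest small-density obstructions at scale $\theta(r)$, enumerate all remaining failures of the lookup step, add their minimal obstructing prefixes to $\mathcal{S}_r$, and prove that the enumeration halts (i.e., $|\mathcal{S}_r| < \infty$ for every fixed $r > 1$) by an amortized argument on an ``obstruction budget'' that strictly decreases with each added prefix and is bounded below in terms of $r-1$.
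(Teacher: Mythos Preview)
The paper does \emph{not} prove this statement: it is explicitly posed as an open conjecture in the conclusion, with the remark that the authors ``hope that'' instance-dependent density bounds combined with the fold operation ``can produce a decreasing sequence of bounds towards \cref{conj:bgt}.'' There is therefore no proof in the paper to compare against.

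Your proposal is a plan, not a proof, and you are candid about this: you identify the ``main obstacle'' yourself and offer only a ``plausible route'' for it. The gap is exactly where you point. You need that for every $r>1$ there is a \emph{finite} set $\mathcal{S}_r$ of prefixes such that every schedulable instance avoiding $\mathcal{S}_r$ maps, after $\pfold_{\theta(r)}$ and rounding, into the schedulable portion of the lookup table. But nothing in the paper or in your sketch establishes that the iterative enumeration of failing prefixes terminates. Your suggested ``amortized argument on an obstruction budget that strictly decreases with each added prefix and is bounded below in terms of $r-1$'' is entirely unspecified: you do not define the budget, do not show it decreases, and do not show it is bounded. Concretely, as $r \to 1$ the known obstruction families (e.g., \cref{fact:23theorem} and its relatives) proliferate, and there is no a priori reason the recursion-handled prefixes stay finite rather than needing to absorb an unbounded family of ``near-tight'' patterns.

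There is also a subtler issue with the recursive branches. For $r=\tfrac{9}{7}$ the prefixes $(2)$, $(3,3)$, $(3,6,6,6)$ have the special property that their density is exactly $1$ on the residues they occupy, so removing them and dividing the remaining periods by $2$, $3$, or $6$ loses nothing. For a general prefix $P \in \mathcal{S}_r$ you would need an analogous exact periodic schedule for $P$ that leaves a clean arithmetic progression of free slots, and it is not clear every obstructing prefix admits one. Without that, the recursion does not even type-check. So as written, the proposal is a reasonable heuristic outline of how one might \emph{attempt} the conjecture, consistent with the paper's own stated hope, but it is not a proof and the central finiteness claim remains open.
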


To the best of our knowledge, our work has been the first to establish instance-dependent density bounds beyond the naive bound of 1 for bamboo garden trimming instances. We hope that this, in conjunction with the fold operation and exhaustive computer search ideas introduced by Kawamura \cite{kawamura}, can produce a decreasing sequence of bounds towards \cref{conj:bgt}.

\appendix

\section{Proof of Facts.}

\subsection{\texorpdfstring{Proof of \cref{fact:23theorem}.}{Proof of Fact 1.}}\label{fact:23proof}

\begin{proof}
Suppose, for the sake of contradiction, that a valid schedule exists. Let $i$ be a day on which job $3$ is scheduled such that $i > 1$. Consider the jobs assigned on days $i - 1$ and $i + 1$. Suppose, for the sake of contradiction, that day $i - 1$ is not assigned to job 1. Then, days $i - 1$ and $i$ are a sequence of length 2 with no days assigned to job 1 but $a_1 = 2$, so this contradicts the definition of a valid schedule. Therefore, day $i - 1$ must be assigned to job 1. Similarly, day $i + 1$ must be assigned to job 1. However, then day $i - 1$, $i$ and $i + 1$ form a sequence of length 3 with no days assigned to job 2 but $a_2 = 3$, a contradiction. So, we have proven that $(2, 3, a_3)$ is not schedulable.
\end{proof}

\subsection{\texorpdfstring{Proof of \cref{fact:2itheorem}.}{Proof of Fact 2.}}\label{fact:2iproof}

\begin{proof}
We restate the proof given by Kawamura and Soejima \cite{kawamura2020}, proving the claim by induction on $k$, attempting to prove that with $k$ jobs we cannot find a valid schedule for even $2^k$ consecutive days. For $k = 1$, we have a single job of period 2 and want to schedule the first two days. Since only one of the two can be assigned to the single job, the base case is true. 

For the inductive case, suppose we have some $k = n + 1$. If the job of period $2^{n + 1} + 1$ is scheduled past day $2^n$, then the first $n$ jobs are validly assigned to the first $2^n$ days. Otherwise, the job of period $2^{n + 1} + 1$ is scheduled before or at day $2^n$, in which case the first $n$ jobs are validly assigned on the last $2^n$ days. Either way, we contradict the inductive hypothesis.
\end{proof}

\subsection{\texorpdfstring{Proof of \cref{fact:simtheorem}.}{Proof of Fact 3.}}\label{fact:simproof}

\begin{proof}
Let $n = \lceil \log_2 \theta \rceil$. We have

$$D(A) = \left( \sum_{i = 0}^{\infty} \frac{1}{2^i + 1} \right) - \frac{1}{2^n + 1} + \frac{1}{2^{n + 1} + 1} + \dots = \sum_{i = 0}^{\infty} \frac{1}{2^i + 1} - \sum_{i = n}^{\infty} \frac{1}{2^i + 1}$$
$$\geq \sum_{i = 0}^{\infty} \frac{1}{2^i + 1} - \sum_{i = n}^{\infty} \frac{1}{2^i} =  -\frac{2}{2^n} + \sum_{i = 0}^{\infty} \frac{1}{2^i + 1}$$

Now, note that since $n = \lceil \log_2 \theta \rceil$, $2^n \geq \theta$. Therefore, 
$$D(A) \geq -\frac{2}{2^n} + \sum_{i = 0}^{\infty} \frac{1}{2^i + 1} \geq -\frac{2}{\theta} + \sum_{i = 0}^{\infty} \frac{1}{2^i + 1}$$

as desired.
\end{proof}

\section{\texorpdfstring{Omitted Proofs in Section \ref{sec:opt}.}{Omitted Proofs in Section 4.}}

\subsection{\texorpdfstring{Proof of \cref{claim12}.}{Proof of Lemma 2.5.}}\label{proof:claim12}

\begin{proof}
If $n = 0$ for the even case, we will have zero loss and $0 \leq \frac{1}{2 \theta_i}$. Now, since $N_{\theta_{i}}(A_{i - 1}) \geq 2$, if $n$ (from line 6 of \cref{algo:imp}) is odd, then the if condition will be triggered; otherwise, if $n$ is even, it will not. We split these up with casework.

\vspace{10pt}

\textbf{Case 1}: $n$ is even

The if condition on line 6 of \cref{algo:naive} (which is activated from line 11 of \cref{algo:imp}) is always true since both $a$ and $b$ are in the range $(\theta_i, 2 \theta_i]$. This is because there are an even number of values in this range initially, and we remove two at a time (after dividing a number in this range by 2 it will be outside of the range). Let $(a_j, b_j)$ be the values of $a$ and $b$ during the $j$th iteration of the while loop of \cref{algo:naive}. Note that the loss on iteration $j$ is
$$\frac{1}{b_j} - \frac{1}{a_j}$$

because we can treat line 7 of \cref{algo:naive} as increasing the period of a job from $b_j$ to $a_j$, which induces the density loss above, and then dividing by 2 according to partitioning, which induces no density loss. 

In addition, because $b_j$ is the second-largest element of $A$ during iteration $j$ and both $a_j$ and $b_j$ are removed, we must have $a_{j + 1} \leq b_j$. Therefore, the total loss is
$$\sum_{j = 1}^{n/2} \left( \frac{1}{b_j} - \frac{1}{a_j} \right) \leq \sum_{j = 1}^{n/2 - 1} \left( \frac{1}{a_{j + 1}} - \frac{1}{a_j} \right) + \frac{1}{b_{n/2}} - \frac{1}{a_{n/2}}$$

Noting that the sum on the right hand side is telescoping, we can simplify the expression to
$$\frac{1}{a_{n/2}} - \frac{1}{a_1} + \frac{1}{b_{n/2}} - \frac{1}{a_{n/2}} = \frac{1}{b_{n/2}} - \frac{1}{a_1}$$

We have already discussed that $a, b \in (\theta_i, 2 \theta_i]$ for all iterations, so $b_{n/2} > \theta_i$ and $a \leq 2 \theta_i$. Thus,
$$\frac{1}{b_{n/2}} - \frac{1}{a_1} \leq \frac{1}{\theta_i} - \frac{1}{2 \theta_i} = \frac{1}{2 \theta_i}$$

Now, $\frac{1}{2 \theta_i} \leq \frac{3}{4 \theta_i}$, as desired.

\vspace{10pt}

\textbf{Case 2}: $n$ is odd

Recall that the condition of line 6 of \cref{algo:imp} is asserted in this case, so we split the casework depending on whether the condition of line 8 is true or false

\vspace{10pt}

\textbf{Subcase 1}: $m_3 \leq \frac{4}{3} \theta_i$

The density loss from the combined effect of lines 9 and 10 is
$$\frac{1}{m_1} - \frac{1}{m_3} + \frac{1}{m_2} - \frac{1}{m_3}$$

Since both $m_1$ and $m_2$ are defined to be in the range $(\theta_i, 2 \theta_i]$, this density loss is at most
$$\frac{2}{\theta_i} - \frac{2}{m_3}$$

To bound the additional density loss from line 11 of \cref{algo:imp}, we renew our analysis of case 1. In particular, since we started with an odd number of elements and removed 3, we must now have an even number of remaining elements in the range $(\theta_i, 2 \theta_i]$, specifically $n - 3$. If we define $(a_j, b_j)$ in the same way as before, we have that the total density loss attributable to line 11 is at most
$$\frac{1}{b_{(n - 3)/2}} - \frac{1}{a_1}$$

As before, $a_1 \leq 2 \theta_i$. In addition, because $m_1$, $m_2$, and $m_3$ were defined as the three smallest elements of $A$ in the range $(\theta_i, 2 \theta_i]$, we have $b_{(n - 3)/2} \geq m_3$. Therefore,
$$\frac{1}{b_{(n - 3)/2}} - \frac{1}{a_1} \leq \frac{1}{m_3} - \frac{1}{2 \theta_i}$$

Now, the total loss on iteration $i$ of line 4 is upper-bounded by
$$\left( \frac{2}{\theta_i} - \frac{2}{m_3} \right) + \left( \frac{1}{m_3} - \frac{1}{2 \theta_i}\right) = \frac{3}{2 \theta_i} - \frac{1}{m_3}$$

Recalling that $m_3 \leq \frac{4}{3} \theta_i$ we have
$$\frac{3}{2 \theta_i} - \frac{1}{m_3} \leq \frac{3}{2 \theta_i} - \frac{3}{4 \theta_i} = \frac{3}{4 \theta_i}$$

This proves an overall density loss of at most $\frac{3}{4 \theta_i}$ for this subcase, as desired.

\textbf{Subcase 2}: $m_3 > \frac{4}{3} \theta_i$

We again renew the analysis of case 1, getting that the density loss on the first $(n - 1)/2$ iterations (recalling that $n \geq 3$) is at most
$$\frac{1}{b_{(n - 1)/2}} - \frac{1}{a_1}$$

Now, note that on the first iteration of line 2 of \cref{algo:naive} that occurs during line 11 of \cref{algo:imp}, $(a_1, b_1)$ will be replaced with $\left( \frac{2}{a_1} \right)$ in $A$ because both are in the range $(\theta_1, 2 \theta_1]$ so it cannot be that $a_1 \geq 2b_1$. In addition, $a_1 \leq 2 \theta_1$ so $\frac{a_1}{2} \leq \theta_1$, meaning that it is not involved in any of the first $(n - 1)/2$ after the first one. After the first $(n - 1)/2$ iterations of line 2 of \cref{algo:naive}, there is one remaining element in the range $(\theta_1, 2 \theta_1]$ because there are $n$ initially and each of the previous iterations removes 2 of them. By our definitions, this remaining element is $a_{(n + 1)/2}$. We claim that the density loss on the last while loop iteration is at most
$$\frac{1}{b_{(n + 1)/2}} - \frac{1}{a_{(n + 1)/2}}$$

If $a_{(n + 1)/2} < 2 b_{(n + 1)/2}$ then line 7 of \cref{algo:naive} is executed and the density loss above holds with equality since $b_{(n + 1)/2}$ is effectively replaced by $a_{(n + 1)/2}$. If $a_{(n + 1)/2} \geq 2b_{(n + 1)/2}$ then line 7 of \cref{algo:naive} is executed during line 11 of \cref{algo:imp} and we have a density loss of $\frac{1}{a_{(n + 1)/2}}$. However, since $b_{(n + 1)/2} \leq \frac{a_{(n + 1)/2}}{2}$ in this case,
$$\frac{1}{b_{(n + 1)/2}} - \frac{1}{a_{(n + 1)/2}} \geq \frac{2}{a_{(n + 1)/2}} - \frac{1}{a_{(n + 1)/2}} = \frac{1}{a_{(n + 1)/2}}$$

so in either case the density loss is upper bounded by 
$$\frac{1}{b_{(n + 1)/2}} - \frac{1}{a_{(n + 1)/2}}$$

Now, the overall density loss in this subcase is at most
$$\left( \frac{1}{b_{(n - 1)/2}} - \frac{1}{a_1} \right) + \left( \frac{1}{b_{(n + 1)/2}} - \frac{1}{a_{(n + 1)/2}} \right)$$

where we have added the density losses from all but the last iteration and the last iteration. As we have discussed, before the last iteration $A$ contained a $\frac{a_1}{2}$ so 
$$\frac{1}{b_{(n + 1)/2}} \leq \frac{2}{a_1}$$

In addition, because $b_{(n - 1)/2}$ is the second-largest element of $A$ on the $((n - 1)/2)$-th iteration and both of the largest elements are removed, on the next iteration, the $((n + 1)/2)$-th, the largest element, $a_{(n + 1)/2}$ must be at most $b_{(n - 1)/2}$. So,
$$\frac{1}{b_{(n - 1)/2}} \leq \frac{1}{a_{(n + 1)/2}}$$

Plugging these inequalities into the density loss formula, we have
$$\left( \frac{1}{b_{(n - 1)/2}} - \frac{1}{a_1} \right) + \left( \frac{1}{b_{(n + 1)/2}} - \frac{1}{a_{(n + 1)/2}} \right) \leq \frac{2}{a_1} - \frac{1}{a_1} + \frac{1}{a_{(n + 1)/2}} - \frac{1}{a_{(n + 1)/2}} = \frac{1}{a_1}$$

Finally, $a_1$ is the maximum element of $A$ before the while loop execution, and $m_3$ is an element of $A$, so $a_1 \geq m_3$. In addition, in this case $m_3 > \frac{4}{3} \theta_i$ so
$$\frac{1}{a_1} \leq \frac{1}{m_3} < \frac{3}{4 \theta_i}$$

This proves the desired density bound for this subcase, and we have now completed both subcases, so Case 2 is complete. Since $n$ is guaranteed to be either even or odd, this completes the proof.
\end{proof}

\subsection{\texorpdfstring{Proof of \cref{claim13}.}{Proof of Lemma 2.6.}}\label{proof:claim13}

\begin{proof}
Note that iterations of line 4 of \cref{algo:imp} continue until $\theta_{\text{current}}$ becomes $\theta$ (inclusive), and $\theta_{\text{current}}$ is halved at each step so there are $\log_2(\theta_1 / \theta) + 1$ iterations. We will prove the claim by induction on the number of completed iterations. In particular, we define
$$P(i) := D(A_i) \geq D(A) - \sum_{m = 1}^{i} \frac{3}{4 \theta_m}$$

For the base case $P(1)$, we have that $N_{\theta_1}(A) \geq 2$ so we can immediately apply \cref{claim12} to get that
$$D(A_1) \geq D(A) - \frac{3}{4 \theta_1}$$

$P(1)$ is the statement that
$$D(A_1) \geq D(A) - \sum_{m = 1}^{1} \frac{3}{4 \theta_m} = D(A) - \frac{3}{4 \theta_1}$$

so the claim result shows that $P(1)$ is true.Now, we want to prove that for any $t$ such that $1 \leq t < 1 + \log_2(\theta_1/\theta)$, $P(t)$ implies $P(t + 1)$. In other words, we are given that
$$D(A_t) \geq D(A) - \sum_{m = 1}^{t} \frac{3}{4 \theta_m}$$

We do casework depending on the value of $N_{\theta_{t + 1}}(A_t)$, aiming to prove that the density loss on the $(t + 1)$-th iteration is at most $3/(4 \theta_{t + 1})$ in any case.

\textbf{Case 1}: $N_{\theta_{t + 1}}(A_t) = 0$

In this case, there is no density loss because $n < 3$, so the condition of line 6 is not asserted, and the while loop condition within line 11 is false, so we exit. $0 \leq \frac{3}{4 \theta_{t + 1}}$, as desired.

\vspace{10pt}

\textbf{Case 2}: $N_{\theta_{t + 1}}(A_t) \geq 2$

We can apply \cref{claim12} to immediately get that
$$D(A_{t + 1}) \geq D(A_t) - \frac{3}{4 \theta_{t + 1}}$$

implying the desired density loss bound. 

\vspace{10pt}

\textbf{Case 3}: $N_{\theta_{t + 1}}(A_t) = 1$

We claim that in this case there must have been a thirding operation (i.e., line 10) on the previous iteration of the algorithm. Suppose, for the sake of contradiction, that there was no thirding operation in the previous iteration. Then, we claim that the smallest element generated (i.e., introduced by either line 10 of \cref{algo:imp} or line 7 of \cref{algo:imp}) by the algorithm across all previous iterations is greater than $\theta_{t + 1}$. This is because on an iteration where $\theta_{\text{current}} = \theta'$, line 10 generates values greater than $\frac{\theta'}{3}$ and line 7 of \cref{algo:naive} generates values greater than $\frac{\theta'}{2}$.

Since $\theta_{\text{current}}$ halves at each step, $\frac{\theta'}{2} \geq \theta_{t + 1}$ where $\theta'$ corresponds to any previous iteration. In addition, because no thirding occurs on the immediate previous iteration, the $\theta'$ values for which thirding occurs satisfy $\frac{\theta'}{4} \geq \theta_{t + 1}$. $\frac{\theta'}{3} > \frac{\theta'}{4}$ so this completes the proof that every element generated by the algorithm before iteration $t + 1$ is greater than $\theta_{t + 1}$. Now, since $p = 2 \theta_1$ was originally the minimum power of 2 such that $D(A \leq p) > D(W \leq p)$ and no elements were introduced in the range $(0, \theta_{t + 1}]$, $p = \theta_{t + 1}$ does not satisfy
$$D(A_t \leq p) > D(W \leq p)$$

In other words,
$$D(A_t \leq \theta_{t + 1}) \leq \sum_{i = 0}^{\log_2(\theta_{t}) - 2} \frac{1}{2^i + 1}$$

$N_{\theta_{t + 1}}(A_t) = 1$ implies there is a single element in the range $(\theta_{t + 1}, 2 \theta_{t + 1}]$, call it $d$. Recalling that $A_t$ is composed entirely of values at most $2 \theta_{t + 1}$, we have that
$$D(A_t) = D(A_t \leq \theta_{t + 1}) + \frac{1}{d}$$

Since $d > \theta_{t + 1}$, we have
$$D(A_t) \leq \frac{1}{\theta_{t + 1}} + \sum_{i = 0}^{\log_2(\theta_{t}) - 2} \frac{1}{2^i + 1}$$

By our inductive hypothesis, we have that
$$D(A_t) \geq D(A) - \sum_{m = 1}^{t} \frac{3}{4 \theta_m}$$

Putting the two inequalities above together, we have 
$$D(A) - \sum_{m = 1}^{t} \frac{3}{4 \theta_m} \leq \frac{1}{\theta_{t + 1}} + \sum_{i = 0}^{\log_2(\theta_{t}) - 2} \frac{1}{2^i + 1}$$

Recall that $p = 2 \theta_1$ satisfies $D(A \leq p) > D(W \leq p)$, so
$$D(A) > \sum_{i = 0}^{\log_2(\theta_1)} \frac{1}{2^i + 1}$$

Plugging this into the previous inequality, we have
$$\sum_{i = 0}^{\log_2(\theta_1)} \frac{1}{2^i + 1} - \sum_{i = 0}^{\log_2(\theta_{t}) - 2} \frac{1}{2^i + 1} \leq \frac{1}{\theta_{t + 1}} + \sum_{m = 1}^{t} \frac{3}{4 \theta_m}$$

Simplifying the left hand side, we have
$$\sum_{i = 0}^{\log_2(\theta_1)} \frac{1}{2^i + 1} - \sum_{i = 0}^{\log_2(\theta_{t}) - 2} \frac{1}{2^i + 1} = \sum_{i = \log_2(\theta_t) - 1}^{\log_2(\theta_1)} \frac{1}{2^i + 1}$$

Plugging this into the inequality, it becomes
$$\sum_{i = \log_2(\theta_t) - 1}^{\log_2(\theta_1)} \frac{1}{2^i + 1} - \sum_{m = 1}^{t} \frac{3}{4 \theta_m} \leq \frac{1}{\theta_{t + 1}}$$

Manipulating the left hand side, we have
$$\sum_{i = \log_2(\theta_t) - 1}^{\log_2(\theta_1)} \frac{1}{2^i + 1} - \sum_{m = 1}^{t} \frac{3}{4 \theta_m} = \frac{1}{\theta_t/2 + 1} + \sum_{i = \log_2(\theta_t)}^{\log_2(\theta_1)} \frac{1}{2^i + 1} - \sum_{m = 1}^{t} \frac{3}{4 \theta_m}$$
$$= \frac{1}{\theta_{t + 1} + 1} + \sum_{i = \log_2(\theta_t)}^{\log_2(\theta_1)} \left( \frac{1}{2^i + 1} - \frac{3}{4} \cdot \frac{1}{2^i} \right)$$

Plugging back into the inequality, it becomes
$$\frac{1}{\theta_{t + 1} + 1} + \sum_{i = \log_2(\theta_t)}^{\log_2(\theta_1)} \left( \frac{1}{2^i + 1} - \frac{3}{4} \cdot \frac{1}{2^i} \right) \leq \frac{1}{\theta_{t + 1}}$$

Note that we have assumed $\theta \geq 16$, so $\theta_t \geq 32$. So, the $i$ in the summation is at least five at every summand. Then, 
$$\frac{1}{2^i + 1} - \frac{3}{4} \cdot \frac{1}{2^i} = \frac{1}{4} \cdot \frac{1}{2^i} + \frac{1}{2^i + 1} - \frac{1}{2^i} = \frac{1}{4} \cdot \frac{1}{2^i} - \frac{1}{2^i + 1} \cdot \frac{1}{2^i} = \frac{1}{2^i} \cdot \left( \frac{1}{4} - \frac{1}{2^i + 1} \right) > 0$$

where we have used $i \geq 2$ in the last step. This means that
$$\sum_{i = \log_2(\theta_t)}^{\log_2(\theta_1)} \left( \frac{1}{2^i + 1} - \frac{3}{4} \cdot \frac{1}{2^i} \right) \leq \frac{1}{2^{\log_2(\theta_t)} + 1} - \frac{3}{4} \cdot \frac{1}{2^{\log_2(\theta_t)}} = \frac{1}{\theta_t + 1} - \frac{3}{4} \cdot \frac{1}{\theta_t}$$

Plugging this into the previous inequality, we have
$$\frac{1}{\theta_{t + 1} + 1} + \frac{1}{\theta_t + 1} - \frac{3}{4} \cdot \frac{1}{\theta_t} \leq \frac{1}{\theta_{t + 1}}$$

Recalling that $\theta_t = 2 \theta_{t + 1}$, the inequality becomes
$$\frac{1}{\theta_{t + 1} + 1} + \frac{1}{2 \theta_{t + 1} + 1} - \frac{3}{4} \cdot \frac{1}{2\theta_{t + 1}} \leq \frac{1}{\theta_{t + 1}}$$

Manipulating the inequality, it becomes
$$\left( \frac{1}{\theta_{t + 1}} - \frac{1}{\theta_{t + 1} + 1} \right) + \left( \frac{1}{2 \theta_{t + 1}} - \frac{1}{2 \theta_{t + 1} + 1} \right) \geq \frac{1}{4} \cdot \frac{1}{2 \theta_{t + 1}}$$

Manipulating further, we have
$$\frac{1}{\theta_{t + 1}  \cdot (\theta_{t + 1} + 1)} + \frac{1}{2 \theta_{t + 1} \cdot (2 \theta_{t + 1} + 1)} \geq \frac{1}{8 \theta_{t + 1}}$$

Recall that we have assumed $\theta \geq 16$, so we also have $\theta_{t + 1} \geq 16$. Since
$$\frac{1}{\theta_{t + 1}  \cdot (\theta_{t + 1} + 1)} + \frac{1}{2 \theta_{t + 1} \cdot (2 \theta_{t + 1} + 1)} \leq \frac{1}{\theta_{t + 1}^2} + \frac{1}{4 \theta_{t + 1}^2} = \frac{5}{4\theta_{t + 1}^2}$$

we can plug this into the inequality to get
$$\frac{5}{4 \theta_{t + 1}^2} \geq \frac{1}{8 \theta_{t + 1}}$$

Simplifying, we have
$$\theta_{t + 1} \leq \frac{40}{4} = 10$$

However, we have assumed that $\theta_{t + 1} \geq 16$, generating a contradiction. Thus, we have proven that in this case (where $N_{\theta_{t + 1}}(A_t) = 1$), a thirding must have occurred on the previous iteration. We now do casework based on whether $d$ (the lone element in $(\theta_{t + 1}, 2 \theta_{t + 1}]$) is greater or less than $\frac{4}{3} \theta_{t + 1}$, noting that there is only 1 iteration within line 11.

\vspace{10pt}

\textbf{Subcase 1}: $d \geq \frac{4}{3} \theta_{t + 1}$

In this case, if line 9 of \cref{algo:naive} is executed, then $d$ is deleted, resulting in a density loss of $\frac{1}{d}$. If line 7 of \cref{algo:naive} is executed, it must be that $b > \frac{a}{2}$, and so the density loss is
$$\frac{1}{b_1} - \frac{1}{a_1} \leq \frac{2}{a_1} = \frac{1}{a_1} = \frac{1}{d}$$

The density loss is at most $\frac{1}{d}$ in either situation and since $d \geq \frac{4}{3} \theta_{t + 1}$, the density loss in this subcase is at most $\frac{3}{4 \theta_{t + 1}}$, as desired. 

\vspace{10pt}

\textbf{Subcase 2}: $d < \frac{4}{3} \theta_{t + 1}$

We have proven that there was a thirding at the previous iteration, and the value that was divided by three was at least $2 \theta_{t + 1}$, meaning the thirded value is at least $\frac{2}{3} \theta_{t + 1}$. This means that $b_1 \geq \frac{2}{3} \theta_{t + 1}$. Since $d < \frac{4}{3} \theta_{t + 1} = 2 \cdot \frac{4}{3} \theta_{t + 1}$, line 7 of \cref{algo:naive} must be executed, for a corresponding density loss of
$$\frac{1}{b_1} - \frac{1}{a_1} \leq \frac{1}{b_1} - \frac{1}{2b_1} = \frac{1}{2b_1}$$

Recalling that $b_1 \geq \frac{2}{3} \theta_{t + 1}$, the density loss is at most $3/(4 \theta_{t + 1})$, as desired. Since this exhausts the set of possibilities for $d$, this completes the analysis for this case. In addition, since $N_{\theta_{t + 1}} = 0$, $N_{\theta_{t + 1}} = 1$ or $N_{\theta_{t + 1}} \geq 2$, we have considered all 3 possible cases. This means that the density loss on the $(t + 1)$-th iteration is at most $3/(4 \theta_{t + 1})$ in any case. So,
$$D(A_{t + 1}) \geq D(A_t) - \frac{3}{4 \theta_{t + 1}}$$

and from the inductive hypothesis we have that
$$D(A_t) \geq D(A) - \sum_{m = 1}^t \frac{3}{4 \theta_m}$$

so combining these inequalities we have
$$D(A_{t + 1}) \geq D(A) - \frac{3}{4 \theta_{t + 1}} - \sum_{m = 1}^t \frac{3}{4 \theta_m} = D(A) - \sum_{m = 1}^{t + 1} \frac{3}{4 \theta_m}$$

This is exactly $P(t + 1)$, so this completes the inductive step. By the principle of mathematical induction, $P(x)$ is true for all $x$ up to and including the last iteration, $x = 1 + \log_2(\theta_1 / \theta)$. $P(1 + \log_2(\theta_1 / \theta))$ is the statement that
$$D(A_{1 + \log_2(\theta_1 / \theta)}) \geq D(A) - \sum_{m = 1}^{1 + \log_2(\theta_1 / \theta)} \frac{3}{4 \theta_m}$$

Since this is the last iteration, by the definition of $A_i$, we have that
$$A_{1 + \log_2(\theta_1 / \theta)} = \text{cfold}^{\text{imp}}_{\theta}(A)$$

meaning our inequality becomes
$$D(\text{cfold}^{\text{imp}}_{\theta}(A)) \geq D(A) - \sum_{m = 1}^{1 + \log_2(\theta_1 / \theta)} \frac{3}{4 \theta_m}$$

This is exactly the statement of our claim, completing the proof.
\end{proof}

\subsection{\texorpdfstring{Proof of \cref{claim14}.}{Proof of Theorem 4.7.}}\label{proof:claim14}

\begin{proof}
We proceed with casework based on the value of $E$, which is guaranteed to exist due to \cref{claim9}.

\textbf{Case 1}: $E = 1$

$D(W \leq 1) = 0$, so $D(L \leq 1) > 0$. In this case, $L$ must contain at least one element that is a 1, and the instance $[1]$ is schedulable (it can be repeated infinitely), so by monotonicity, $L$ is schedulable as well.

\vspace{10pt}

\textbf{Case 2}: $E = 2$

$D(W \leq 2) = \frac{1}{2}$, so $D(L \leq 2) > \frac{1}{2}$. This means $L$ must contain either a 1, in which case it is schedulable, as before, or two 2's, also implying schedulability.

\vspace{10pt}

\textbf{Case 3}: $E = 4$

$D(W \leq 4) = \frac{1}{2} + \frac{1}{3} = \frac{5}{6}$, so $D(L \leq 4) > \frac{5}{6}$. 

We claim that $L' = \text{cfold}_{32}(L)$ is schedulable (note that this is not the improved fold operation). By \cref{claim4}, we have that 
$$D(L') \geq D(L) - \frac{2}{32}$$

We do casework depending on the value of $N_{16}(L')$ and whether a thirding operation occurs (i.e., line 10). 

\vspace{10pt}

\textbf{Subcase 1}: $N_{16}(L') \leq 1$

Note that the elements that are introduced by $\text{cfold}_{\theta}(A)$ are greater than $\theta/2$. In particular, in our case, it means that the elements of $D(L' \leq 16)$ were all present in $L$ to begin with. Now, we claim that $L'' = L' \leq 16$ is schedulable. In particular, note that 
$$D(L'') \geq D(L') - \frac{1}{16} \geq D(L) - \frac{2}{32} - \frac{1}{16} = D(L) - \frac{2}{16}$$

Recalling that $D(L) \geq \sum_{i = 0}^{\infty} \frac{1}{2^i + 1}$, by assumption, we have 
$$D(L'') \geq -\frac{2}{16} + \sum_{i = 0}^{\infty} \frac{1}{2^i + 1}$$

Since no elements were introduced in $L'$ that were less than or equal to 16, $L''$ is entirely composed of integers, and $D(L'' \leq 4) = D(L \leq 4) > \frac{5}{6}$. Note that there are only a finite number of integer pinwheel instances $A$ satisfying the following conditions, where we denote the maximum element of $A$ as $m$:

\begin{itemize}
    \item $m \leq 16$
    \item $D(A \leq 4) > \frac{5}{6}$
    \item $D(A) \geq -\frac{2}{16} + \sum_{i = 0}^{9} \frac{1}{2^i + 1}$
    \item $D(A \ominus m) < -\frac{2}{16} + \sum_{i = 0}^{9} \frac{1}{2^i + 1}$
\end{itemize}

The last condition prevents us from considering infinitely long instances. We have enumerated and found a solution to all possible $A$ satisfying the four conditions above. This means that $L''$ is schedulable because we already know that it satisfies the first three conditions, so we can remove the maximum element until it satisfies the fourth condition, resulting in schedulability, and by the monotonicity property, this means $L''$ is schedulable as well. Again applying the monotonicity property, this means $L'$ is schedulable, as desired. 

\vspace{10pt}

\textbf{Subcase 2}: $N_{16}(L') \geq 2$ and $\text{cfold}_{16}^{\text{imp}}(L')$ does not involve a thirding operation

In this case, we claim that $L'' = \text{cfold}_{16}^{\text{imp}}(L')$ is schedulable. Note that by \cref{claim12}, we have
$$D(L'') \geq D(L') - \frac{3}{4} \cdot \frac{1}{16}$$

Putting this together with $D(L') \geq D(L) - \frac{1}{16}$ we have 
$$D(L'') \geq D(L) - \frac{7}{4} \cdot \frac{1}{16} \geq -\frac{7}{4} \cdot \frac{1}{16} + \sum_{i = 0}^{\infty} \frac{1}{2^i + 1}$$

Recall our $D'(A)$ definition from \cref{claim5} as
\[
D'(A) = \sum_{i = 1}^n
\begin{cases} 
1/a_i & \text{if } a_i \leq 8, \\
1/(a_i - 1) & \text{if } a_1 > 8.
\end{cases}
\]

Note that there are a finite number of integer instances $A$ with maximum $m$ satisfying

\begin{itemize}
    \item $m \leq 16$
    \item $D(A \leq 4) > \frac{5}{6}$
    \item $D'(A) \geq -\frac{7}{4} \cdot \frac{1}{16} + \sum_{i = 0}^{9} \frac{1}{2^i + 1}$
    \item $D'(A \ominus m) < -\frac{7}{4} \cdot \frac{1}{16} + \sum_{i = 0}^{9} \frac{1}{2^i + 1}$
\end{itemize}

We have enumerated and found a solution to all possible $A$ satisfying the four conditions above. Then, $L'''$, which is created by taking the ceiling of every element of $L''$ and removing the maximum until it satisfies the last condition, is schedulable. Therefore, by the monotonicity property, $L''$ is schedulable as well. Note that the $D'$ function accounts for the density change due to taking ceilings, and we only need to do $1/(a_i - 1)$ for $a > 8$ because if no thirding occurs then every element generated by $\text{cfold}_{16}^{\text{imp}}(L')$ is greater than 8. Since $L''$ is schedulable, by \cref{claim11}, $L'$ is schedulable, as desired. 

\vspace{10pt}

\textbf{Subcase 3}: $N_{16}(L') \geq 2$ and $\text{cfold}_{16}^{\text{imp}}(L')$ involves a thirding operation

In this case, we claim that a slight modification of $\text{cfold}_{16}^{\text{imp}}(L')$ is schedulable. In particular, since a thirding operation occurred and the while loop in line 4 iterates once, with at most one thirding operation per iteration, there is a unique element $v$ generated by the thirding. 

We claim that $L'' = (\text{cfold}_{16}^{\text{imp}}(L') \ominus v) \sqcup (3v, 3v, 3v)$ is schedulable. Note that we have essentially reversed the thirding operation, and if $L''$ is schedulable then $L'$ is schedulable because the combined execution of lines 9 and 10 and our reversal replaces $(m_1, m_2)$ with $(m_3, m_3)$, preserving unschedulability, and the rest of our operations also preserve unschedulability, as discussed in \cref{claim11}. In addition, applying \cref{claim12} we have
$$D(L'') = D(\text{cfold}_{16}^{\text{imp}}(L')) \geq D(L') - \frac{3}{4} \cdot \frac{1}{16} \geq D(A) - \frac{1}{16} - \frac{3}{4} \cdot \frac{1}{16} \geq -\frac{7}{4} \cdot \frac{1}{16} + \sum_{i = 0}^{\infty} \frac{1}{2^i + 1}$$

Note that there are a finite number of integer instances $A$ with top 4 maximum elements $m_1 \geq m_2 \geq m_3 \geq m_4$ satisfying

\begin{itemize}
    \item $m_1 = m_2 = m_3$
    \item $17 \leq m_1 \leq 22$
    \item $m_4 \leq 16$
    \item $D(A \leq 4) > \frac{5}{6}$
    \item $D'(A) \geq -\frac{7}{4} \cdot \frac{1}{16} + \sum_{i = 0}^{9} \frac{1}{2^i + 1}$
    \item $D'(A \ominus m_4) < -\frac{7}{4} \cdot \frac{1}{16} + \sum_{i = 0}^{9} \frac{1}{2^i + 1}$
\end{itemize}

We have enumerated and found a solution to all possible $A$ satisfying the four conditions above. 

Then, $L'''$, which is created by taking the ceiling of every element of $L''$ and removing the $m_4$ element (maximum under 17) until it satisfies the last condition, is schedulable, so by the monotonicity property, $L''$ is schedulable as well. As discussed previously, this implies $L'$ is schedulable, as desired. 

The three subcases above exhaust all possibilities, so this proves that $L'$ is schedulable in any subcase. By the contrapositive of \cref{claim3}, this means $L$ is schedulable, as desired.

\vspace{10pt}

\textbf{Case 4}: $E = 8$

$D(W \leq 8) = \frac{1}{2} + \frac{1}{3} + \frac{1}{5} = \frac{31}{30}$, so $D(L \leq 8) > \frac{31}{30}$. We follow a similar procedure to the previous case, simply replacing the $D(A \leq 4) > \frac{5}{6}$ condition with $D(A \leq 8) > \frac{31}{30}$. We do casework depending on the value of $N_{16}(L')$ and whether a thirding operation occurs (i.e., line 10). 

\vspace{10pt}

\textbf{Subcase 1}: $N_{16}(L') \leq 1$

Note that the elements that are introduced by $\text{cfold}_{\theta}(A)$ are greater than $\theta/2$. In particular, in our case, it means that the elements of $D(L' \leq 16)$ were all present in $L$ to begin with. Now, we claim that $L'' = L' \leq 16$ is schedulable. 

In particular, note that 
$$D(L'') \geq D(L') - \frac{1}{16} \geq D(L) - \frac{2}{32} - \frac{1}{16} = D(L) - \frac{2}{16}$$

Recalling that $D(L) \geq \sum_{i = 0}^{\infty} \frac{1}{2^i + 1}$, by assumption, we have 
$$D(L'') \geq -\frac{2}{16} + \sum_{i = 0}^{\infty} \frac{1}{2^i + 1}$$

Since no elements were introduced in $L'$ that were less than or equal to 16, $L''$ is entirely composed of integers, and $D(L'' \leq 8) = D(L \leq 8) > \frac{31}{30}$. Note that there are only a finite number of integer pinwheel instances $A$ satisfying the following conditions, where we denote the maximum element of $A$ as $m$:

\begin{itemize}
    \item $m \leq 16$
    \item $D(A \leq 8) > \frac{31}{30}$
    \item $D(A) \geq -\frac{2}{16} + \sum_{i = 0}^{9} \frac{1}{2^i + 1}$
    \item $D(A \ominus m) < -\frac{2}{16} + \sum_{i = 0}^{9} \frac{1}{2^i + 1}$
\end{itemize}

We have enumerated and found a solution to all possible $A$ satisfying the four conditions above. This means that $L''$ is schedulable because we already know that it satisfies the first three conditions, so we can remove the maximum element until it satisfies the fourth condition, resulting in schedulability, and by the monotonicity property, this means $L''$ is schedulable as well. Again applying the monotonicity property, this means $L'$ is schedulable, as desired. 

\vspace{10pt}

\textbf{Subcase 2}: $N_{16}(L') \geq 2$ and $\text{cfold}_{16}^{\text{imp}}(L')$ does not involve a thirding operation

In this case, we claim that $L'' = \text{cfold}_{16}^{\text{imp}}(L')$ is schedulable. Note that by \cref{claim12}, we have
$$D(L'') \geq D(L') - \frac{3}{4} \cdot \frac{1}{16}$$

Putting this together with $D(L') \geq D(L) - \frac{1}{16}$ we have 
$$D(L'') \geq D(L) - \frac{7}{4} \cdot \frac{1}{16} \geq -\frac{7}{4} \cdot \frac{1}{16} + \sum_{i = 0}^{\infty} \frac{1}{2^i + 1}$$

Recall our $D'(A)$ definition from \cref{claim5} as
\[
D'(A) = \sum_{i = 1}^n
\begin{cases} 
1/a_i & \text{if } a_i \leq 8, \\
1/(a_i - 1) & \text{if } a_1 > 8.
\end{cases}
\]

Note that there are a finite number of integer instances $A$ with maximum $m$ satisfying

\begin{itemize}
    \item $m \leq 16$
    \item $D(A \leq 8) > \frac{31}{30}$
    \item $D'(A) \geq -\frac{7}{4} \cdot \frac{1}{16} + \sum_{i = 0}^{9} \frac{1}{2^i + 1}$
    \item $D'(A \ominus m) < -\frac{7}{4} \cdot \frac{1}{16} + \sum_{i = 0}^{9} \frac{1}{2^i + 1}$
\end{itemize}

We have enumerated and found a solution to all possible $A$ satisfying the four conditions above. 

Then, $L'''$, which is created by taking the ceiling of every element of $L''$ and removing the maximum until it satisfies the last condition, is schedulable, so by the monotonicity property, $L''$ is schedulable as well. 

Note that the $D'$ function accounts for the density change due to taking ceilings, and we only need to do $1/(a_i - 1)$ for $a > 8$ because if no thirding occurs then every element generated by $\text{cfold}_{16}^{\text{imp}}(L')$ is greater than 8.

Since $L''$ is schedulable, by \cref{claim11}, $L'$ is schedulable, as desired. 

\vspace{10pt}

\textbf{Subcase 3}: $N_{16}(L') \geq 2$ and $\text{cfold}_{16}^{\text{imp}}(L')$ involves a thirding operation

In this case, we claim that a slight modification of $\text{cfold}_{16}^{\text{imp}}(L')$ is schedulable. In particular, since a thirding operation occurred and the while loop in line 5 iterates once, with at most one thirding operation per iteration, there is a unique element $v$ generated by the thirding. 

We claim that $L'' = (\text{cfold}_{16}^{\text{imp}}(L') \ominus v) \sqcup (3v, 3v, 3v)$ is schedulable. Note that we have essentially reversed the thirding operation, and if $L''$ is schedulable, then $L'$ is schedulable because the combined execution of lines 9 and 10 and our reversal replaces $(m_1, m_2)$ with $(m_3, m_3)$, preserving unschedulability, and the rest of our operations also preserve unschedulability, as discussed in \cref{claim11}. In addition, by \cref{claim12} we have
$$D(L'') = D(\text{cfold}_{16}^{\text{imp}}(L')) \geq D(L') - \frac{3}{4} \cdot \frac{1}{16} \geq D(A) - \frac{1}{16} - \frac{3}{4} \cdot \frac{1}{16} \geq -\frac{7}{4} \cdot \frac{1}{64} + \sum_{i = 0}^{\infty} \frac{1}{2^i + 1}$$

Note that there are a finite number of integer instances $A$ with top 4 maximum elements $m_1 \geq m_2 \geq m_3 \geq m_4$ satisfying

\begin{itemize}
    \item $m_1 = m_2 = m_3$
    \item $17 \leq m_1 \leq 22$
    \item $m_4 \leq 16$
    \item $D(A \leq 8) > \frac{31}{30}$
    \item $D'(A) \geq -\frac{7}{4} \cdot \frac{1}{16} + \sum_{i = 0}^{9} \frac{1}{2^i + 1}$
    \item $D'(A \ominus m_4) < -\frac{7}{4} \cdot \frac{1}{16} + \sum_{i = 0}^{9} \frac{1}{2^i + 1}$
\end{itemize}

We have enumerated and found a solution to all possible $A$ satisfying the six conditions above. Then, $L'''$, which is created by taking the ceiling of every element of $L''$ and removing the $m_4$ element (maximum under 17) until it satisfies the last condition, is schedulable, so by the monotonicity property, $L''$ is schedulable as well. As discussed previously, this implies $L'$ is schedulable, as desired. 

The three subcases above exhaust all possibilities, so this proves that $L'$ is schedulable in any subcase. By the contrapositive of \cref{claim3}, this means $L$ is schedulable, as desired.

\vspace{10pt}

\textbf{Case 5}: $E = 16$

$D(W \leq 16) = \frac{1}{2} + \frac{1}{3} + \frac{1}{5} + \frac{1}{9} = \frac{103}{90}$, so $D(L \leq 16) > \frac{103}{90}$.

We claim that $L' = L \leq 16$ is schedulable. Note that every element of $L'$ is an integer and elements are no greater than 16.

Note that there are a finite number of integer instances $A$ with maximum $m$ satisfying

\begin{itemize}
    \item $m \leq 16$
    \item $D(A \leq 16) > \frac{103}{90}$
    \item $D(A \ominus m) \leq \frac{103}{90}$
\end{itemize}

We have enumerated and found a solution to all possible $A$ satisfying the 3 conditions above. This means that $L'$ is schedulable because we already know that it satisfies the first two conditions so we can remove the maximum element until it satisfies the third condition, resulting in schedulability, and by the monotonicity property this means $L'$ is schedulable as well. Again applying the monotonicity property, this means $L$ is schedulable, as desired. 

\vspace{10pt}

\textbf{Case 6}: $E = 32$

$D(W \leq 32) = \frac{1}{2} + \frac{1}{3} + \frac{1}{5} + \frac{1}{9} + \frac{1}{17} = \frac{1841}{1530}$, so $D(L \leq 32) > \frac{1841}{1530}$. We claim that $L' = L \leq 32$ is schedulable. To prove this, we will show that $L'' = \text{cfold}_{16}^{\text{imp}}(L')$ is schedulable. Note that by \cref{claim12},
$$D(L'') \geq D(L') - \frac{3}{4} \cdot \frac{1}{16} > \frac{1841}{1530} - \frac{3}{4} \cdot \frac{1}{16}$$

Now, we identify the appropriate modified density function for computer analysis in this case, associated with taking the ceiling of every element of $L''$. We first account for the thirding operation. 

Note that it is not possible for any generated elements to be less than 5. In the range $(5, 6)$, every generated element must be at least $\frac{17}{3}$ (since $m_1, m_2, m_3 > 16$, and we divide by 3). In the range $(6, 7)$, every generated element must be at least $\frac{19}{3}$, since we start with integers, so $m_3$ cannot be strictly between 18 and 19. No elements can be generated in the range $(7, 8)$ because $m_3$ cannot be strictly between 21 and 22, and the thirding operation only occurs if $m_3 \leq \frac{4}{3} \cdot 16 \approx 21.33$. In addition, for the halving operation, $L'$ is composed entirely of integers, so the elements of $L''$ must all be half-integer (if not an integer). In other words, for every $a$ such that $8 \leq a \leq 15$, the minimum possible element generated in the range $(a, a + 1)$ is $a + \frac{1}{2}$. Therefore, the appropriate density function is
\[
D_{\text{mod}}(A) = \sum_{i = 1}^n
\begin{cases} 
1/a_i & \text{if } a_i \leq 5, \\
3/17 & \text{ if } a_i = 6, \\
3/19 & \text{ if } a_i = 7, \\
1/8 & \text{ if } a_i = 8, \\
2/(2a_i - 1) & \text{if } a_1 > 8.
\end{cases}
\]

Note that there are only a finite number of integer pinwheel instances $A$ satisfying the following conditions, where we denote the maximum element of $A$ as $m$:

\begin{itemize}
    \item $m \leq 16$
    \item $D_{\text{mod}}(A) > \frac{1841}{1530} - \frac{3}{4} \cdot \frac{1}{16}$
    \item $D_{\text{mod}}(A \ominus m) \leq \frac{1841}{1530} - \frac{3}{4} \cdot \frac{1}{16}$
\end{itemize}

We have enumerated and found a solution to all possible $A$ satisfying the three conditions above. Then $L'''$, which is created by taking the ceiling of every element of $L''$ and removing the maximum until the last condition is satisfied, is schedulable. By \cref{claim11}, this means $L'$ is schedulable. Again applying the monotonicity property, this means $L$ is schedulable, as desired. 

\vspace{10pt}

\textbf{Case 7}: $E \geq 64$

By the structure of $W$, we have that
$$D(W \leq E) = \frac{1}{2} + \frac{1}{3} + \frac{1}{5} + \dots + \frac{1}{E/2 + 1} = \sum_{i = 0}^{\log_2(E) - 1} \frac{1}{2^i + 1}$$

We claim that $L' = L \leq E$ is schedulable. Note that 
$$D(L') = D(L \leq E) > \sum_{i = 0}^{\log_2(E) - 1} \frac{1}{2^i + 1}$$

We proceed with casework. Consider the execution of the last line of the while loop on line 5 of the algorithm for $\text{cfold}_{16}^{\text{imp}}(L')$. The value of $n$ calculated on the last iteration is either even or odd. In the case where it is odd, we further subdivide into whether a thirding operation occurs on the last step or not.

\vspace{10pt}

\textbf{Subcase 1}: $n$ is even

In this case, we claim that $L'' = \text{cfold}_{16}^{\text{imp}}(L')$ is schedulable. Note that by the iterative nature of the algorithm, 
$$\text{cfold}_{16}^{\text{imp}}(L') = \text{cfold}_{16}^{\text{imp}}((\text{cfold}_{32}^{\text{imp}}(L')))$$

We will use the second formulation to explore the density of $L''$. In particular, applying \cref{claim12}, and recalling that $n$ is even, we have
$$D(L'') = D(\text{cfold}_{16}^{\text{imp}}((\text{cfold}_{32}^{\text{imp}}(L')))) \geq D(\text{cfold}_{32}^{\text{imp}}(L')) - \frac{1}{2} \cdot \frac{1}{16}$$

Applying \cref{claim13}, we have
$$D(\text{cfold}_{32}^{\text{imp}}(L')) \geq D(L') - \sum_{m = 1}^{1 + \log_2(\theta_1 / 32)} \frac{3}{4 \theta_m}$$

Since the maximum of $L'$ is $E$, and $N_{E/2}(L') > 0$, $\theta_1 = E/2$. Plugging this in, we have
$$D(L') - \sum_{m = 1}^{1 + \log_2(\theta_1 / 32)} \frac{3}{4 \theta_m} > \sum_{i = 0}^{\log_2(E) - 1} \frac{1}{2^i + 1} - \sum_{m = 1}^{\log_2(E) - 5} \frac{3}{4 \theta_m}$$

Putting everything together, we have
$$D(L'') > -\frac{1}{2} \cdot \frac{1}{16} + \sum_{i = 0}^{\log_2(E) - 1} \frac{1}{2^i + 1} - \sum_{m = 1}^{\log_2(E) - 5} \frac{3}{4 \theta_m}$$
$$= -\frac{1}{32} + \frac{1}{2} + \frac{1}{3} + \frac{1}{5} + \frac{1}{9} + \frac{1}{17} + \sum_{i = 5}^{\log_2(E) - 1} \frac{1}{2^i + 1} - \sum_{m = 1}^{\log_2(E) - 5} \frac{3}{4 \theta_m}$$

Seeing that the two sums have the sum number of terms and that the last value of $\theta$ (i.e., $\theta_{\log_2(E) - 5}$) for $\text{cfold}_{32}^{\text{imp}}(L')$ is 32, we have
$$-\frac{1}{32} + \frac{1}{2} + \frac{1}{3} + \frac{1}{5} + \frac{1}{9} + \frac{1}{17} + \sum_{i = 5}^{\log_2(E) - 1} \frac{1}{2^i + 1} - \sum_{m = 1}^{\log_2(E) - 5} \frac{3}{4 \theta_m} = \frac{28691}{24480} + \sum_{i = 5}^{\log_2(E) - 1} \left( \frac{1}{2^i + 1} - \frac{3}{4} \cdot \frac{1}{2^i} \right)$$

Since $E \geq 64$, $\log_2(E) - 1 \geq \log_2(64) - 1 = 6 - 1 = 5$. In addition, we have previously shown that if $i \geq 2$, the summand is positive:
$$\frac{1}{2^i + 1} - \frac{3}{4} \cdot \frac{1}{2^i} = \frac{1}{4} \cdot \frac{1}{2^i} + \frac{1}{2^i + 1} - \frac{1}{2^i} = \frac{1}{4} \cdot \frac{1}{2^i} - \frac{1}{2^i + 1} \cdot \frac{1}{2^i} = \frac{1}{2^i} \cdot \left( \frac{1}{4} - \frac{1}{2^i + 1} \right) > 0$$

Therefore, 
$$\sum_{i = 5}^{\log_2(E) - 1} \left( \frac{1}{2^i + 1} - \frac{3}{4} \cdot \frac{1}{2^i} \right) \geq \frac{1}{2^5 + 1} - \frac{3}{4} \cdot \frac{1}{2^5} = \frac{1}{33} - \frac{3}{4} \cdot \frac{1}{32} = \frac{29}{4224}$$

Putting everything together, we have
$$D(L'') > \frac{28691}{24480} + \frac{29}{4224} = \frac{1269799}{1077120}$$

Note that there are a finite number of integer instances $A$ with maximum denoted as $m$ satisfying

\begin{itemize}
    \item $m \leq 16$
    \item $D'(A) > \frac{1269799}{1077120}$
    \item $D'(A \ominus m) \leq \frac{1269799}{1077120}$
\end{itemize}

We have enumerated and found a solution to all possible $A$ satisfying the three conditions above. Then, $L'''$, which is created by taking the ceiling of every element of $L''$ and removing the maximum element until it satisfies the last condition, is schedulable, so by the monotonicity property, $L''$ is schedulable as well. 

By \cref{claim11}, this means $L'$ is schedulable as well, as desired.

\vspace{10pt}

\textbf{Subcase 2}: $n$ is odd and no thirding operation occurs on the last iteration

In this case, we claim that $L'' = \text{cfold}_{16}^{\text{imp}}(L')$ is schedulable. Applying \cref{claim13}, we have
$$D(\text{cfold}_{16}^{\text{imp}}(L')) \geq D(L') - \sum_{m = 1}^{1 + \log_2(\theta_1 / 16)} \frac{3}{4 \theta_m}$$

Since the maximum of $L'$ is $E$, and $N_{E/2}(L') > 0$, $\theta_1 = E/2$. Plugging this in, we have
$$D(L') - \sum_{m = 1}^{1 + \log_2(\theta_1 / 16)} \frac{3}{4 \theta_m} > \sum_{i = 0}^{\log_2(E) - 1} \frac{1}{2^i + 1} - \sum_{m = 1}^{\log_2(E) - 4} \frac{3}{4 \theta_m}$$
$$= \frac{1}{2} + \frac{1}{3} + \frac{1}{5} + \frac{1}{9} + \sum_{i = 4}^{\log_2(E) - 1} \frac{1}{2^i + 1} - \sum_{m = 1}^{\log_2(E) - 4} \frac{3}{4 \theta_m}$$

As we have done in subcase 1, we know that
$$\sum_{i = 4}^{\log_2(E) - 1} \frac{1}{2^i + 1} - \sum_{m = 1}^{\log_2(E) - 4} \frac{3}{4 \theta_m} = \sum_{i = 4}^{\log_2(E) - 1} \left( \frac{1}{2^i + 1} - \frac{3}{4} \cdot \frac{1}{2^i} \right)$$

Since $E \geq 64$, $\log_2(E) - 1 \geq \log_2(64) - 1 = 6 - 1 = 5$. In addition, we have previously shown that if $i \geq 2$, the summand is positive. Therefore, 
$$\sum_{i = 4}^{\log_2(E) - 1} \left( \frac{1}{2^i + 1} - \frac{3}{4} \cdot \frac{1}{2^i} \right) \geq \sum_{i = 4}^{5} \left( \frac{1}{2^i + 1} - \frac{3}{4} \cdot \frac{1}{2^i} \right) = \left( \frac{1}{17} - \frac{3}{4} \cdot \frac{1}{16} \right) + \left( \frac{1}{33} - \frac{3}{4} \cdot \frac{1}{32} \right)$$

Putting everything together, we have
$$D(L'') > \frac{1}{2} + \frac{1}{3} + \frac{1}{5} + \frac{1}{9} + \left( \frac{1}{17} - \frac{3}{4} \cdot \frac{1}{16} \right) + \left( \frac{1}{33} - \frac{3}{4} \cdot \frac{1}{32} \right) = \frac{1252969}{1077120}$$

Let us now define a generalized notion of the $D'$ function in the following way:
\[
D_{\text{c}}(A) = \sum_{i = 1}^n
\begin{cases} 
1/a_i & \text{if } a_i \leq c, \\
1/(a_i - 1) & \text{if } a_1 > c.
\end{cases}
\]

We will use this new function to make use of the following observation: there can be at most one element generated by $\text{cfold}_{16}^{\text{imp}}(L')$ that is less than 10. This is because no thirding operation occurs, so either $n < 3$ (where $n$ denotes the value of $n$ at the last iteration) in which case there can be only one generated element at all, or $m_3 \geq \frac{4}{3} \cdot 16 \approx 21.33$, in which case there are at most two elements less than 20, and elements are halved so there can be at most one element less than 10. We will now formulate the integer conditions to account for this improvement:

There are a finite number of integer instances $A$ with maximum denoted as $m$ satisfying

\begin{itemize}
    \item $m \leq 16$
    \item If $A$ contains a 9, $D_{10}(A) > \frac{1252969}{1077120} - \frac{1}{72}$ and $D_{10}(A \ominus m) \leq \frac{1252969}{1077120} - \frac{1}{72}$
    \item Else if $A$ contains a 10, $D_{10}(A) > \frac{1252969}{1077120} - \frac{1}{90}$ and $D_{10}(A \ominus m) \leq \frac{1252969}{1077120} - \frac{1}{90}$
    \item Else, $D_{10}(A) > \frac{1252969}{1077120}$ and $D_{10}(A \ominus m) \leq \frac{1252969}{1077120}$
\end{itemize}

We have enumerated and found a solution to all possible $A$ satisfying the three conditions above. Then, we claim that $L'''$, which is created by taking the ceiling of every element of $L''$ and removing the maximum element until it satisfies the last condition, is schedulable. Note that if $L'''$ contains a 9, the value in $L''$ it corresponds to could be arbitrarily close to 8 (but only one such 9 could, as discussed previously), so the density of $L''$ could exceed that of $D(L''')$ by as much as $\frac{1}{8} - \frac{1}{9} = \frac{1}{72}$. If $L'''$ instead contains a 10, the corresponding parameter is $\frac{1}{9} - \frac{1}{10} = \frac{1}{90}$. If $L'''$ contains both, the $\frac{1}{72}$ parameter is still appropriate since at most one 9 or 10 can be generated (i.e., subject to modified density). Now, by the monotonicity property, $L''$ is schedulable. By \cref{claim11}, this means $L'$ is schedulable as well, as desired. 

\vspace{10pt}

\textbf{Subcase 3}: 

As in Case 4, Subcase 3, we claim that a slight modification of $\text{cfold}_{16}^{\text{imp}}(L')$ is schedulable. In particular, since a thirding operation occurred and the while loop in line 5 iterates once, with at most one thirding operation per iteration, there is a unique element $v$ generated by the thirding. We claim that $L'' = (\text{cfold}_{16}^{\text{imp}}(L') \ominus v) \sqcup (3v, 3v, 3v)$ is schedulable. 

Note that we have essentially reversed the thirding operation, and if $L''$ is schedulable, then $L'$ is schedulable because the combined execution of lines 9 and 10 and our reversal replaces $(m_1, m_2)$ with $(m_3, m_3)$, preserving unschedulability, and the rest of our operations also preserve unschedulability, as discussed in \cref{claim11}. We can apply all the same steps as in Subcase 2 to derive
$$D(\text{cfold}_{16}^{\text{imp}}(A)) > \frac{1252969}{1077120}$$

In addition, $\frac{1}{v} = \frac{1}{3v} + \frac{1}{3v} + \frac{1}{3v}$ so
$$D(L'') = D((\text{cfold}_{16}^{\text{imp}}(L') \ominus v) \sqcup (3v, 3v, 3v)) = D(\text{cfold}_{16}^{\text{imp}}(L')) > \frac{1252969}{1077120}$$

Furthermore, note that because $3v$ is the third smallest element in the range $(16, 32]$ are thirded (not halved), every generated element is greater than $3v/2$. Note that there are a finite number of integer instances $A$ with top 4 maximum elements $m_1 \geq m_2 \geq m_3 \geq m_4$ satisfying

\begin{itemize}
    \item $m_1 = m_2 = m_3$
    \item $17 \leq m_1 \leq 22$
    \item $m_4 \leq 16$
    \item If $m_1 = 17$ or $m_1 = 18$, $D_{8}(A) > \frac{1252969}{1077120}$ and $D_{8}(A \ominus m_4) \leq \frac{1252969}{1077120}$
    \item If $m_1 = 19$ or $m_1 = 20$, $D_{9}(A) > \frac{1252969}{1077120}$ and $D_{9}(A \ominus m_4) \leq \frac{1252969}{1077120}$
    \item If $m_1 = 21$ or $m_1 = 22$, $D_{10}(A) > \frac{1252969}{1077120}$ and $D_{10}(A \ominus m_4) \leq \frac{1252969}{1077120}$
\end{itemize}

We have enumerated and found a solution to all possible $A$ satisfying the conditions above. Then, we claim that $L'''$, which is created by taking the ceiling of every element of $L''$ and removing the maximum element (less than 17) until it satisfies the last condition, is schedulable. $D_8(A) = D'(A)$ is the default modified density function. We can use $D_9$ if $m_1$ is 19 or 20 because the value that we take the ceiling of must be greater than 18, so the generated elements must be greater than $3v/2 \geq 9$ similarly, we can use $D_{10}$ if $m_1$ is 21 or 22 because it corresponds to the rest of the elements of $L''$ being greater than 20. By the monotonicity property, $L''$ is schedulable. Now, by \cref{claim11}, we have that $L'$ is schedulable as desired. 

These subcases exhaust all possibilities, so we have proven that $L'$ is schedulable in any case. By the monotonicity property, this means that $L$ is schedulable as well, as desired.

This completes the analysis of our 7 cases, and $E$ is guaranteed to exist, so we have exhausted all possible cases, completing the proof. 
\end{proof}

\section{\texorpdfstring{Omitted Proofs in Section \ref{sec:bgt}.}{Omitted Proofs in Section 5.}}

\subsection{\texorpdfstring{Proof of \cref{lemma:bgt1}.}{Proof of Lemma 5.1.}}\label{proof:bgt1}

\begin{proof}
Since $D(A) > 1 - \frac{1}{ab^2}$, by the Axiom of Archimedes, there exists $n \in \mathbb{N}$ such that $D(A) \geq 1 - \frac{1}{ab^2} + \frac{1}{n}$. Suppose, for the sake of contradiction, that there exists an instance $A$ satisfying the conditions of the lemma that is schedulable. Then, there is a valid scheduling of jobs, which we refer to below.

Suppose, for the sake of contradiction, there exists $k \in \mathbb{N}$ such that both jobs 1 and 2 are arithmetically scheduled on every day between $abk$ and $ab(k + 1) - 1$ inclusive. Then, there exists $c, d \in \{ 0, 1, \dots, a - 1 \} $ such that job $a$ is scheduled on days $abk + c$, $abk + c + a$, $abk + c + 2a$, $\dots$ and job $b$ is scheduled on days $abk + d$, $abk + d + b$, $abk + d + 2b$, $\dots$. Then, since $\gcd(a, b) = 1$, $abk \equiv 0 \pmod{ab}$, and $ab(k + 1) - 1 \equiv ab - 1 \pmod{ab}$, by the Chinese Remainder Theorem, there exists a unique $x \in \{ abk, \dots, ab(k + 1) - 1 \}$ such that $x \equiv c \pmod{a}$ and $x \equiv d \pmod{b}$. On day $x$, both jobs 1 and 2 are thus scheduled, but this violates the pinwheel constraint of one job per day, a contradiction. 

If job 1 is not scheduled arithmetically, then it must be that there are two consecutive schedulings with strictly less than a gap of $a$ between them, and similarly a gap less than $b$ for job 2. For such a gap strictly smaller than the period length, we refer to it as a ``left-push'', with appropriate multiplicity (e.g., a gap of $a - 2$ for job 1 counts as 2 left pushes). We have already proven that for every $k$, there is a left-push in the interval $[abk, ab(k + 1) - 1]$. 

Let $P = n \prod_{i = 1}^k a_i$, and consider the scheduling of jobs 1 and 2 between day 1 and day $P$. Partitioning these $P$ days into $\frac{P}{ab}$ groups of $ab$ contiguous days each, there are at least $\frac{P}{ab}$ left-pushes. There are a minimum of $\frac{P}{b}$ schedulings of job 1 in this interval, and for every $a$ left-pushes of job 1 there is an additional scheduling needed. Similarly, job 2 is scheduled a minimum of $\frac{P}{b}$ times with an additional scheduling per $b$ left-pushes of job 2. Our formula for the total number of schedulings of jobs 1 and 2 is then
$$\frac{P}{a} + \frac{P}{b} + \left\lfloor \frac{x}{a} \right\rfloor + \left\lfloor \frac{x}{b} \right\rfloor \geq \frac{P}{a} + \frac{P}{b} + \frac{x}{a} - 1 + \frac{y}{b} - 1 = \frac{P}{a} + \frac{P}{b} + \frac{bx + ay}{ab} - 2$$

where $x$ is the number of left-pushes for job 1 and $y$ is the number of left-pushes for job 2. Then, $x + y \geq \frac{P}{ab}$, so
$$\frac{P}{a} + \frac{P}{b} + \frac{bx + ay}{ab} - 2 \geq \frac{P}{a} + \frac{P}{b} + \frac{ax + ay}{ab} - 2 = \frac{P}{a} + \frac{P}{b} + \frac{P}{ab^2} - 2$$

Now, since $P$ is a multiple of every $a_i$, for every $i \in [k] \setminus \{ 1, 2 \}$, there are at least $\frac{P}{a_i}$ schedulings of job $i$. The overall sum is 
$$\sum_{i = 3}^k \frac{P}{a_i} = -\frac{P}{a} - \frac{P}{b} + \sum_{i = 1}^k \frac{P}{a_i} = D(A) P - \frac{P}{a} - \frac{P}{b}$$
where we have used the definition of density. Since the total number of schedulings of all jobs is the number of schedulings for jobs 1 and 2 combined with that of the rest, using our generated formulas, it is
$$\frac{P}{a} + \frac{P}{b} + \frac{P}{ab^2} - 2 + D(A)P - \frac{P}{a} - \frac{P}{b} = \frac{P}{ab^2} - 2 + D(A)P$$

Recall that $D(A) \geq 1 - \frac{1}{ab^2} + \frac{1}{n}$ so
$$\frac{P}{ab^2} - 2 + D(A)P \geq \frac{P}{ab^2} - 2 + P \cdot \left( 1 - \frac{1}{ab^2} + \frac{1}{n} \right) = P + \frac{P}{n} - 2 > P$$
where we have used the fact that $P > 2n$ because $P$ is a multiple of $abn$, and both $a$ and $b$ are at least 2. We have proven that the total number of jobs scheduled is greater than $P$, but the number of available days is $P$, so by the Pigeonhole Principle, there must be at least one day for which two jobs are scheduled. This violates the pinwheel constraint, a contradiction. This completes the proof. 
\end{proof}

\subsection{\texorpdfstring{Proof of \cref{lemma:bgt2}.}{Proof of Lemma 5.2.}}\label{proof:bgt2}

\begin{proof}
We have verified programmatically that in any interval of length 24, there are at least two left-pushes (as defined in the lemma).

We can now prove the bound in essentially the same way as the lemma. Suppose, for the sake of contradiction, that there is some $A$ with $(a_1, a_2, a_3, a_4) = (3, 6, 6, 8)$ and $D(A) > \frac{95}{96}$ that is schedulable. Then, there is a valid schedule. By the Axiom of Archimedes, there exists $n \in \mathbb{N}$ such that $D(A) \geq \frac{95}{96} + \frac{1}{n}$.

Let $P = n \prod_{i = 1}^k a_i$, and consider the scheduling of jobs 1, 2, 3, and 4 between day 1 and day $P$. Partitioning these $P$ days into $\frac{P}{24}$ groups of $24$ contiguous days each, there are at least $2 \cdot \frac{P}{24}$ left-pushes. Our formula for the total number of schedulings of job 1, 2, 3, and 4 is
$$\frac{P}{3} + \frac{P}{6} + \frac{P}{6} + \frac{P}{8} + \left\lfloor \frac{w}{3} \right\rfloor + \left\lfloor \frac{x}{6} \right\rfloor + \left\lfloor \frac{y}{6} \right\rfloor + \left\lfloor \frac{z}{8} \right\rfloor \geq \frac{P}{3} + \frac{P}{6} + \frac{P}{6} + \frac{P}{8} + \frac{w}{3} - 1 + \frac{x}{6} - 1 + \frac{y}{6} - 1 + \frac{z}{8} - 1$$
$$= \frac{P}{3} + \frac{P}{6} + \frac{P}{6} + \frac{P}{8} + \frac{8w + 4x + 4y + 3z}{24} - 4$$

where $w$ is the number of left-pushes for job 1,  $x$ is the number of left-pushes for job 2, $y$ is the number of left-pushes of job 3 and $z$ is the number of left-pushes of job 4. Then, $w + x + y + z \geq \frac{P}{12}$, so
$$\frac{P}{3} + \frac{P}{6} + \frac{P}{6} + \frac{P}{8} + \frac{8w + 4x + 4y + 3z}{24} - 4 \geq \frac{P}{3} + \frac{P}{6} + \frac{P}{6} + \frac{P}{8} + \frac{3w + 3x + 3y + 3z}{24} - 4$$
$$= \frac{P}{3} + \frac{P}{6} + \frac{P}{6} + \frac{P}{8} + \frac{P}{96} - 4$$

Now, since $P$ is a multiple of every $a_i$, for every $i \in [k] \setminus \{ 1, 2, 3, 4 \}$, there are at least $\frac{P}{a_i}$ schedulings of job $i$. The overall sum is 
$$\sum_{i = 5}^k \frac{P}{a_i} = -\left( \frac{P}{3} + \frac{P}{6} + \frac{P}{6} + \frac{P}{8} \right) + \sum_{i = 1}^k \frac{P}{a_i} = D(A) P - \left( \frac{P}{3} + \frac{P}{6} + \frac{P}{6} + \frac{P}{8} \right)$$
where we have used the definition of density. Since the total number of schedulings of all jobs is the number of schedulings for jobs 1 and 2 combined with that of the rest, using our generated formulas, it is
$$\frac{P}{3} + \frac{P}{6} + \frac{P}{6} + \frac{P}{8} + \frac{P}{96} - 4 + D(A)P - \left( \frac{P}{3} + \frac{P}{6} + \frac{P}{6} + \frac{P}{8} \right) = \frac{P}{96} - 4 + D(A)P$$

Recall that $D(A) \geq \frac{95}{96} + \frac{1}{n}$ so
$$\frac{P}{96} - 4 + D(A)P \geq \frac{P}{96} - 4 + P \cdot \left( \frac{95}{96} + \frac{1}{n} \right) = P + \frac{P}{n} - 4 > P$$
where we have used the fact that $P > 4n$ because $P$ is a multiple of $n \cdot 3 \cdot 6 \cdot 6 \cdot 8$. We have proven that the total number of jobs scheduled is greater than $P$, but the number of available days is $P$, so by the Pigeonhole Principle, there must be at least one day for which two jobs are scheduled. This violates the pinwheel constraint, a contradiction. This completes the proof.
\end{proof}

\subsection{\texorpdfstring{Proof of \cref{lemma:bgt3}.}{Proof of Lemma 5.3.}}\label{proof:bgt3}

\begin{proof}
Each element of $B$ represents some set of elements in the original instance $A$. In addition, this represents a partition of the elements of $A$. To be more concrete, suppose for job $i$ with period $B_i$ there are corresponding jobs in $A$ with periods in $A$ of $A' = (A_{i1}, A_{i2}, \dots, A_{im})$. 

We first prove the following as a sublemma: For all $i \in [k']$, we can simultaneously pack jobs of periods $\lfloor \frac{9}{7} \cdot A_{i1} \rfloor, \lfloor \frac{9}{7} \cdot A_{i2} \rfloor, \dots, \lfloor \frac{9}{7} \cdot A_{im} \rfloor$ in the spaces allocated to a job of period $\lfloor \frac{9}{7} \cdot B_i \rfloor$. 

Consider a slight variant on the fold operation, $\pfold'$, in which, rather than running until every element is at most $\theta$, the operation stops when a single element remains. Under this definition, $\pfold'(A') = B_i$. In addition, the schedulability of $\pfold'(A')$ still implies the schedulability of $A'$. Now, let us round down each $A$ period down to the nearest period of the form $B_i \cdot 2^v$ for $v \in \mathbb{N}$. Formally, let
$$D_i = (f(A_{i1}), f(A_{i2}), \dots, f(A_{im})) \text{ where } f(x) = B_i \cdot 2^{\lfloor \log_2(x / B_i) \rfloor}$$

Let us define $d = \pfold'(D)$. Since the original sequence $D$ is composed entirely of numbers of the form $B_i \cdot 2^v$ and at each step of the fold operation we either introduce half of an existing number or decrease to the next lowest number, by induction, $d = B_i \cdot 2^{v_0}$ for some $v_0 \in \mathbb{Z}$. Now, note that $d \leq B_i$ because every element of $D$ is less than the corresponding element of $A$. In addition, for any pinwheel instance $A$ and any $\alpha \in \mathbb{R}_{+}$,
$$\pfold'(\alpha A) = \alpha \cdot \pfold'(A)$$

This is because the initial periods of $\alpha A$ are $\alpha$ times greater than the corresponding periods in $A$, and for every stage in the fold operation, this relationship is maintained (the same halving or decrease operation occurs up to scale factors). Let us define $r = \underset{j \in [m]}{\min} \frac{f(A_{ij})}{A_{ij}}$. Then, $r$ exists because there are a finite number of possibilities for $j$ and for every $j$, $\frac{f(A_{ij})}{A_{ij}} > \frac{1}{2}$ because $\lfloor \log_2 (x / B_i) \rfloor > \log_2 (x / B_i) - 1$. This means $r > \frac{1}{2}$. Therefore, 
$$d = \pfold'(D_i) \geq \pfold'(rA') = r \pfold'(A') > \frac{1}{2} \cdot B_i$$

Since $d \leq B_i$, $d > \frac{1}{2} \cdot B_i$ and $\frac{d}{B_i} = 2^{v_0}$ for some $v_0 \in \mathbb{Z}$, it must be that $d = B_i$. Now, let us define 
$$c = \underset{j \in [m]}{\min} \frac{\lfloor \frac{9}{7} \cdot D_{ij} \rfloor}{\frac{9}{7} \cdot D_{ij}}$$

We claim that 
$$c \geq \frac{\lfloor \frac{9}{7} \cdot B_i \rfloor}{\frac{9}{7} \cdot B_i}$$

This is true because every $D_{ij}$ is a multiple of $B_i$ and for every $n \in \mathbb{N}$ and $x \in \mathbb{R}_{+}$, $\lfloor nx \rfloor \geq n \lfloor x \rfloor$, so
$$\frac{\lfloor \frac{9}{7} \cdot D_{ij} \rfloor}{\frac{9}{7} \cdot D_{ij}} = \frac{\lfloor n \cdot \frac{9}{7} \cdot B_{ij} \rfloor}{\frac{9}{7} \cdot n \cdot B_{ij}} \geq \frac{n \lfloor \frac{9}{7} \cdot B_{ij} \rfloor}{\frac{9}{7} \cdot n \cdot B_{ij}} = \frac{\lfloor \frac{9}{7} \cdot B_{ij} \rfloor}{\frac{9}{7} \cdot B_{ij}}$$

This means that 
$$\pfold'\left( \left\lfloor \frac{9}{7} \cdot D_{i1} \right\rfloor, \left\lfloor \frac{9}{7} \cdot D_{i2} \right\rfloor, \dots, \left\lfloor \frac{9}{7} \cdot D_{im} \right\rfloor \right) \geq \left\lfloor \frac{9}{7} \cdot B_i \right\rfloor$$

Now, every element of $D$ is less than the corresponding element of $A_f$, so
$$\pfold'\left( \left\lfloor \frac{9}{7} \cdot A_{f1} \right\rfloor, \left\lfloor \frac{9}{7} \cdot A_{f2} \right\rfloor, \dots, \left\lfloor \frac{9}{7} \cdot A_{fm} \right\rfloor \right) \geq \left\lfloor \frac{9}{7} \cdot B_i \right\rfloor$$

This proves the sublemma. Now, for the overall proof, since the jobs in $A$ can be partitioned based on which job in $B$ they are mapped to, and space is allocated for all jobs in $C$ of period $\lfloor \frac{9}{7} \cdot B_i \rfloor$ for $i \in [k']$, appealing to the lemma, every job in $A'$ can be scheduled, meaning $A'$ is schedulable, as desired.
\end{proof}

\section*{Acknowledgments.}

I would like to gratefully acknowledge Prof. Robert Kleinberg for introducing me to the pinwheel problem and for his guidance. I would also like to acknowledge the helpful suggestions of three anonymous reviewers. 

\bibliographystyle{plain}
\bibliography{ref}

\end{document}